  \providecommand\BibTeX{{%
    \normalfont B\kern-0.5em{\scshape i\kern-0.25em b}\kern-0.8em\TeX}}}
\newtheorem{claim}{Claim}
\newtheorem{condition}{Condition}
\newtheorem{remark}{Remark}[section]
\renewcommand{\P}{\mathbf P}
\newcommand{\E}{\mathbf{E}}
\begin{document}

\title{Source Detection via Contact Tracing in the Presence of Asymptomatic Patients}

\author{Gergely \'Odor}
\email{gergely.odor@epfl.ch}
\orcid{0000-0001-9139-249X}
\affiliation{%
  \institution{EPFL}
  \city{Lausanne}
  \country{Switzerland}
}

\author{Jana Vuckovic}
\email{jana.vuckovic@epfl.ch}
\affiliation{%
  \institution{EPFL}
  \city{Lausanne}
  \country{Switzerland}
}

\author{Miguel-Angel Sanchez Ndoye}
\email{miguel-angel.sanchezndoye@epfl.ch}
\affiliation{%
  \institution{EPFL}
  \city{Lausanne}
  \country{Switzerland}
}

\author{Patrick Thiran}
\email{patrick.thiran@epfl.ch}
\affiliation{%
  \institution{EPFL}
  \city{Lausanne}
  \country{Switzerland}
}

\begin{abstract}
Inferring the source of a diffusion in a large network of agents is a difficult but feasible task, if a few agents act as sensors revealing the time at which they got hit by the diffusion. One of the main limitations of current source detection algorithms is that they assume full knowledge of the contact network, which is rarely the case, especially for epidemics, where the source is called patient zero. Inspired by recent implementations of contact tracing algorithms, we propose a new framework, which we call Source Detection via Contact Tracing Framework (SDCTF). In the SDCTF, the source detection task starts at the time of the first hospitalization, and initially we have no knowledge about the contact network other than the identity of the first hospitalized agent. We may then explore the network by contact queries, and obtain symptom onset times by test queries in an adaptive way, i.e., both contact and test queries can depend on the outcome of previous queries. We also assume that some of the agents may be asymptomatic, and therefore cannot reveal their symptom onset time. Our goal is to find patient zero with as few contact and test queries as possible. We propose two local search algorithms for the SDCTF: the LS algorithm is more data-efficient, but can fail to find the true source if many asymptomatic agents are present, whereas the LS+ algorithm is more robust to asymptomatic agents. By simulations we show that both LS and LS+ outperform state of the art adaptive and non-adaptive source detection algorithms adapted to the SDCTF, even though these baseline algorithms have full access to the contact network. Extending the theory of random exponential trees, we analytically approximate the probability of success of the LS/ LS+ algorithms, and we show that our analytic results match the simulations. Finally, we benchmark our algorithms on the Data-driven COVID-19 Simulator (DCS) developed by Lorch et al., which is the first time source detection algorithms are tested on such a complex dataset.
    \end{abstract}
    
\keywords{adaptive source detection, contact tracing, sensor selection, epidemics}

\maketitle

\section{Introduction}

During the COVID-19 pandemic, we have seen a revolution of the contact tracing technology, which helped track and contain the epidemic \cite{braithwaite2020automated,kretzschmar2020impact}. Some contact tracing programs were conducted by governmental/health agencies \cite{park2020contact}, while others relied on decentralized approaches \cite{troncoso2020decentralized}. Most contact tracing approaches work by notifying people who could have received the infection from known infectious patients, i.e., they trace ``forward'' in time. However, some advocate that a ``bidirectional'' tracing, where the past history of the infection is also tracked, can be more effective \cite{bradshaw2021bidirectional,endo2020implication,kojaku2021effectiveness}. In this paper we focus on the ``backward'' direction of the problem; the task of identifying the first patient who carried the disease, also called patient zero, or the source of the epidemic. The identification of patient zero can either be limited to a smaller population cluster, in which case it can be a first step towards ``bidirectional'' tracing, or it can be more ambitious; finding the first patient who developed the mutation of a certain disease can help understanding how the mutation occurred, which can help us prevent, or better prepare for future epidemics.

Surprisingly, given the importance of the problem and the relatively large literature on the topic, we are not aware of any instance where source detection algorithms have been applied in real situations, including during the COVID-19 pandemic. Our goal in this paper is to examine the applicability of the source detection models in the literature (which we call frameworks from now on), and then propose a new framework, which improves them in several aspects. Originally, source detection was introduced in the context of rumor spreading instead of epidemics by Zaman and Shah in their pioneering Sigmetrics paper~\cite{shah2010rumors,shah2011rumors}. Translating to the language of epidemics for clarity, in the framework of \cite{shah2011rumors}, an epidemic spreads over a network of agents that is completely known to us, and we observe a \textit{snapshot} of the network, which means that every agent reveals if they are infected or not at some given time (not too early, because then the problem is trivial, nor too late, because then the problem is impossible). Shortly after \cite{shah2011rumors}, Pinto et al. proposed a different framework, in which agents (also called \textit{sensors}) reveal, in addition to their state, the time when they became infected, but where only a few of them do so and act as sensors \cite{PintoTV12}; indeed, the problem is trivial if all agents are sensors. This framework is better tailored to epidemics, as it is reasonable that obtaining any information from all the agents is much harder than asking one more question about the starting time of the symptoms of the disease to only some of them. Pinto et al. found that in their framework, if the sensors are already selected, the maximum likelihood estimator of the source has a closed form solution when the underlying network is a tree, and the time it takes for an agent to infect one of its susceptible contacts follows a Gaussian distribution. For general graphs, it is difficult to find an algorithm with any theoretical guarantees, although we note that many heuristics have been developed \cite{hu2018localization,li2019locating,paluch2018fast,paluch2020locating,shen2016locating,tang2018estimating,xu2019identifying,zhu2016locating}. The only exception is on very simple contact networks \cite{lecomte2020noisy}, or when the epidemic spreads deterministically between the agents \cite{zejnilovic2013network}, which is not a realistic assumption for epidemics, but at least the estimation algorithm is trivial, and more emphasis can be put on the question of how the sensors should be selected for good performance \cite{spinelli2017effect,spinelli2018many}, which again is studied by heuristics in the general case \cite{paluch2020optimizing}. For a recent review of source detection algorithms, see \cite{shelke2019source}.

One of the main criticisms of original framework of Pinto et al. is that, even though the contact network is fully known, it is very difficult to find the source exactly unless a large fraction (20-50\%) of the population act as sensors, which is unrealistic in the case of an epidemics, when the source is searched in a large population. An alternative recently proposed is to compute confidence sets for the source instead of finding it \cite{dawkins2021diffusion}. But if our goal is to locate the source exactly, a promising approach is to allow the sensors to be selected adaptively to previous observations \cite{zejnilovic2015sequential,zejnilovic2017sequential}, which we call \textit{adaptive sensor placement}. When the contact network is known, adaptive strategies have been studied by simulations \cite{spinelli2017general,spinelli2017back} and by theoretical analysis \cite{lecomte2020noisy}, and they show a large reduction in the number of required sensors in real networks. In this paper, we will also allow the sensors to be placed adaptively. 

We believe that the most problematic assumption that is still present in source detection papers, is the full knowledge of the contact network of agents, which is unrealistic (let alone because of privacy concerns). Due to this lack of data-availability, algorithms in the source detection literature have not been tested on realistic epidemic data. Moreover, while governmental/health agencies might have access to private datasets, such as cellular location data, from which a contact network may be estimated, these networks may be very noisy, and are potentially unfit for the source detection task. We only know of a few papers that study the effect of imperfections in the network data on the source detection task \cite{mashkaria2020robustness,zejnilovic2016extending}, but these papers study epidemics that spread deterministically between the agents. Inspired by adaptive sensor placement, and by the recent implementations of contact tracing algorithms, we propose a new framework for source detection, which we call Source Detection via Contact Tracing Framework (SDCTF). In SDCTF, algorithms can have two types of queries: contact queries, which can be used to explore the network, and sensor (test) queries, after which agents reveal their symptom onset time as before. The goal of the algorithm is to find the source as accurately as possible, while minimizing the number of contact and sensor queries. The SDCTF is a way to formalize the source detection task; it determines the goal of the algorithm and how information can be gained about the epidemic, but it does not specify the underlying epidemic and mobility data models (simulated or real). In this paper, we analyse different algorithms in the SDCTF with various epidemic and mobility models.

Besides specifying the possible queries that algorithms can make, the SDCTF also determines the way the outbreak is detected, which marks the starting time of the source detection task. In sensor-based source detection, the source detection task often starts long after the outbreak, when essentially all agents in the network are infected \cite{PintoTV12}, which can be seen as a limitation of source detection frameworks. The SDCTF is also closely related to contact tracing frameworks, where it is standard to assign a probability that each node spontaneously self-reports after developing symptoms, which triggers the activation of contact tracing algorithms \cite{kretzschmar2020impact,bradshaw2021bidirectional}. In the SDCTF, we adopt the idea of self-reporting with a slight modification. We believe that the most interesting time to perform the source detection task is when a new disease (or a new mutation of the disease) appears, and therefore we tie these self-reporting events to hospitalizations, where infections are properly diagnosed by healthcare professionals. 
In particular, this means that the SDCTF can only be applied to epidemic data (and models) where hospitalizations are well-defined. In this paper, we use the datasets generated by the Data-driven COVID Simulator (DCS) introduced in \cite{lorch2020quantifying}, which is one of the most realistic toolboxes that generate datasets modelling COVID-19, which we are aware of (notably, hospitalizations are part of the model). We also propose synthetic approximations for the epidemic and mobility models in the DCS; the Deterministically Developing Epidemic model and the Household Network Model, which improve the interpretability of our results since they have fewer parameters.

We propose a simple algorithm called LocalSearch (LS), which adaptively traces back the transmission path from the first hospitalized patient to the source. The LS algorithm is quite efficient at finding the source; the number of contact and sensor queries that it uses does not depend on the size of the network, but only on the local neighborhood of the source. Moreover, the LS algorithm provably finds the source with 100\% accuracy, because of our assumption that every contact and sensor query is answered without noise. However, it is well-known that data-availability is a major issue in contact tracing \cite{beidasrinad2020optimizing}, either because the agents do not comply with contact tracing efforts, or possibly (and in particular in the current COVID-19 epidemic) because they do not develop symptoms, and are unaware that they have the disease. In this paper, we model the effect of asymptomatic agents. When queried and tested, these agents do not reveal their time of infection, only whether they have or had the disease at some point. We show that the accuracy of the LS algorithm drops in the presence of asymptomatic agents, because the algorithm can get stuck while tracing back the transmission path from the first hospitalized patient to the source. Therefore, we propose an improved version of LS called LS+, which accounts for the presence of asymptomatic agents by placing more sensors. We are not aware of any previous work in the source detection literature that models the effect of asymptomatic patients, but the resulting model can be seen as a mix between the snapshot and the sensor-based models. We mention that non-complying agents or agents who provide noisy observations have been studied by \cite{altarelli2014patient,hernando2008fault,louni2015identification}. Non-complying agents could also be included in our framework by treating them as asymptomatic agents (even though in this case we have no information about whether the agent had the disease or not), without jeopardizing the correctness of our algorithms.

We benchmark the LS and LS+ algorithms in both our data-driven and our synthetic epidemic and mobility models, and we compare them to state-of-the-art adaptive \cite{spinelli2017back} and non-adaptive \cite{jiang2016rumor,lokhov2014inferring} algorithms tailored to the SDCTF, whenever possible. We find that both LS and LS+ outperform these baseline algorithms in accuracy (probability of finding the correct source). 

While the LS/LS+ are designed to be simple algorithms, their theoretical analysis is quite challenging. Nevertheless, we are able to provide rigorous results about the success probability of both algorithms after a series of simplifications to the epidemic and mobility models, by extending some recent results on the theory of exponential random trees \cite{feng2018profile,mahmoud2021profile}, which have previously not been connected to the source detection literature. We present these theoretical results in Section~\ref{sec:theory}, after formally introducing the SDCTF, our models and the LS/LS+ algorithms in Section~\ref{sec:models}. By simulations, we show that our analytic results approximate the accuracy of the algorithms well, even in the most realistic setting in Section~\ref{sec:simulation}. Our analytic results provide additional insight into how the parameters of the epidemic and mobility models affect the performance of the algorithms. We discuss these insights along with some non-rigorous computations that mirror our main proof ideas in Section~\ref{sec:boe_sec}. Reading Section~\ref{sec:boe_sec} before Sections~\ref{sec:models}-\ref{sec:simulation} is useful to build intuition, but is not necessary to understand the paper.

\section{Warmup Results}
\label{sec:boe_sec}
\subsection{A Simple Network and Epidemic Model and a Simple Algorithm}
\label{sec:boe_model}
\begin{figure}
\begin{center}
 \includegraphics[width=\textwidth]{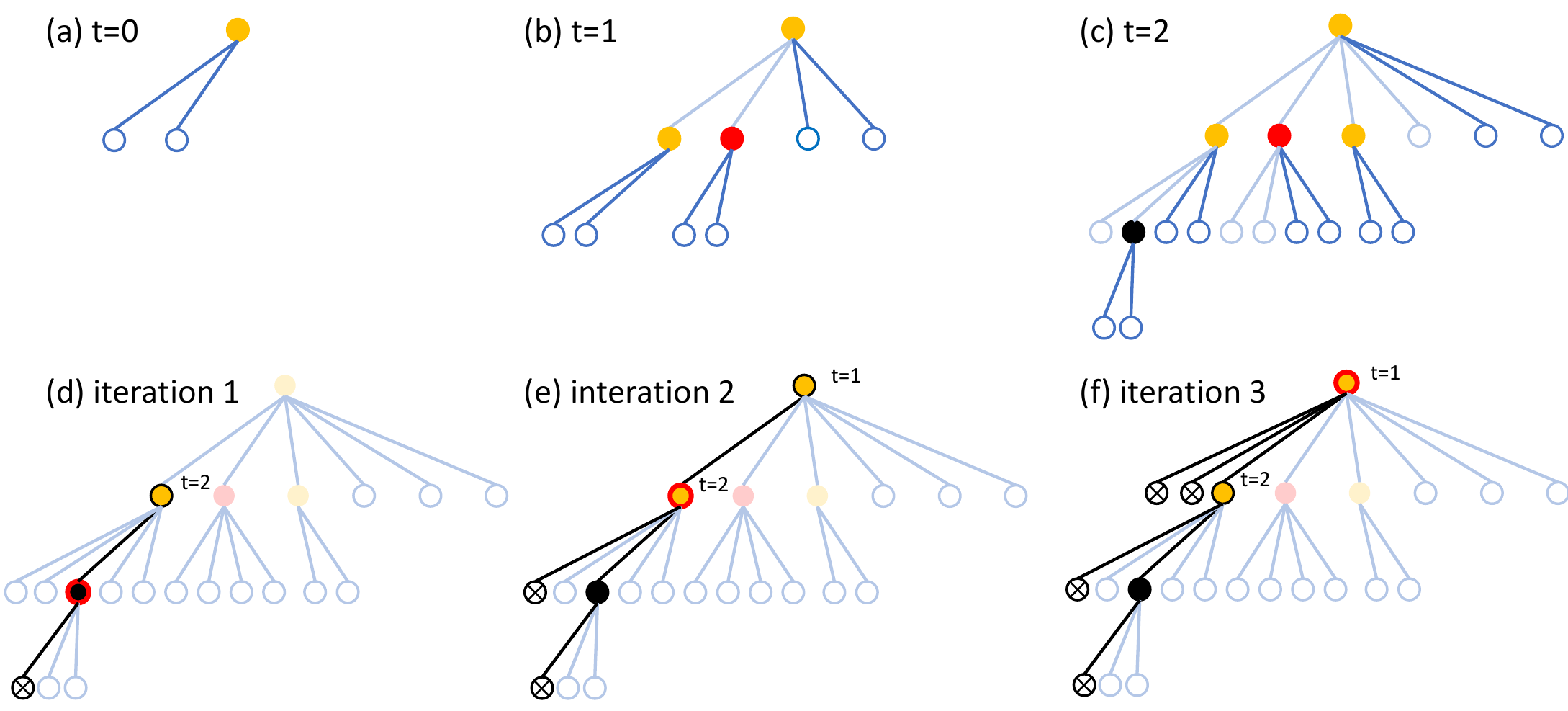}
  \caption{(a)-(c) shows the spread of the infection in the model considered in Section~\ref{sec:boe_model}, which is equivalent to the growth of the RERT, with $d=2$. Dark blue edges show the contacts on day~$t$, and light blue edges show contacts present on previous days (and thus subfigures). Orange (resp., red; black) nodes mark symptomatic non-hospitalized (resp, asymptomatic; symptomatic hospitalized) nodes. (d)-(f) shows the LS source detection algorithm introduced in Section~\ref{sec:boe}, which succeeds in this example because there are no asymptomatic nodes on the transmission path between the first hospitalized node and the source. Black edges show the queried edges, and black stroke marks nodes already discovered by the algorithm. A node with black X marks a negative test result, and red stroked node marks the node currently maintained as source candidate by the LS algorithm.}
  \label{fig:boe}
  \end{center}
\end{figure}

Let us consider a time-dependent network model, where each agent meets $d$ new agents each day in such a way that the contact network is an infinite tree (ignoring the label of the edges giving the propagation time along the edge). This network models homogeneous mixing in a very large population; we consider more realistic network models in Section~\ref{sec:models}. On this network, we consider an epidemic model that starts at $t=0$ with one infected agent, and then progresses as infected agents infect their $d$ susceptible contacts each independently with probability $p_i$ each day. Since our goal is to study the epidemic process, it is sufficient to track only the agents who are already infectious (also called \emph{internal nodes}), and the agents who are in contact with infectious agents at time~$t$ (also called \emph{external nodes}), as shown in Figure \ref{fig:boe} (a)-(c). For $d=1$, the spread of the infection is then equivalent to the growth a random tree $\mathcal{T}_t$ rooted at the source of the infection, known under the name of Random Exponential Recursive Tree (RERT) and recently introduced in \cite{mahmoud2021profile}. Because of the similarities of the models, we refer to the model with general $d$ as RERT in the remaining of this section. We point out that the standard literature on elementary branching processes such as Galton-Watson trees or random recursive trees \cite{drmota2009random} is not applicable in our scenario, because these branching processes have no notion of global time (i.e., a node in such processes becomes infectious immediately after receiving the infection), whereas nodes in diseases commonly go through an exposed, non-infectious period before becoming infectious, which is well captured by the RERT model. We mention that there is literature on more advanced branching processes that do have a notion of global time, e.g. Crump-Mode-Jagers trees \cite{jagers1984growth}, however we opt for the RERT because of its simple definition.

After a node (patient) becomes infected, the disease can take three courses (which for now do not affect $\mathcal{T}_t$): with probability $p_a$ the patient is asymptomatic, with probability $(1-p_a)p_h$ the patient is hospitalized, and with probability $(1-p_a)(1-p_h)$ the patient recovers without hospitalization. The governmental/health agency learns about the outbreak when the first hospitalization occurs (see Figure~\ref{fig:boe} (c)) and starts the source detection process right away. It can inquire about the contacts of each agent and it can test the agents. From patients that were symptomatic (at any point in time in the past), the agency learns about their symptom onset time (which, in this simple model, is always one day after the infection time), but from asymptomatic patients it only learns that they had (or have) the disease at some point when they are tested. The framework introduced in this paragraph (including both the detection of the outbreak through the first hospitalization, and the possible actions the agency can take) is a simplified version of the SDCTF (Source Detection via Contact Tracing Framework), introduced in Section~\ref{sec:SDCTF}.

The network and epidemic models introduced in this section have four parameters: $d, p_i, p_a, p_h$, and it is important to understand how each of them affects the difficulty of source detection in the SDCTF.  We distinguish two important factors. First, if the outbreak is not detected rapidly enough, the length of the transmission path to the first hospitalized agent is long, and source detection becomes then difficult, because a lot of information needs to be recovered. Therefore, a low $p_i$, a low $p_h$ and/or a high $p_a$ parameter can hinder source detection (recall that the probability of hospitalization was $p_h(1-p_a)$). The second factor is related to the difficulty of recovering information about the transmission path. If $p_a$ is high, then there are a lot of nodes who are asymptotic and therefore do not reveal their symptom onset time, making source detection very difficult. Since $p_a$ affects both the length of the transmission path and the amount of collected information, it is safe to expect that, of all parameters, $p_a$ has the largest effect on the difficulty of source detection. The parameter $d$ is interesting, because a large $d$ can reduce the length of the transmission path, but it also makes the information about the transmission path less accessible as more agents need to be tested. Since in this paper we do not set a hard constraint on the total number of available tests, the advantage of a shorter path takes over the drawback of additional tests and a large $d$ increases the success probability. 

To say anything quantitative about source detection in the SDCTF, we must discuss specific algorithms that solve the source detection task. In this paper we propose a simple algorithm called LocalSearch (LS), shown in Figure~\ref{fig:boe}~(d)-(f). The LS algorithm maintains one candidate node $s_c$ at each iteration (initially, the first hospitalized node), which is always symptomatic, and it updates it in a greedy way: at the time of the infection of $s_c$, all its $d$ incident edges are queried, and all its $d$ neighbors are tested. Then the agent with the lowest reported infection time will be the new candidate $s_c$. The algorithm stops when $s_c$ does not change anymore between two consecutive iterations. For simplicity, we assume that the infection does not spread any further during these iterations, however, this assumption does not affect the ability of the algorithm to find the source or not. Indeed, it is not difficult to see that on tree networks, LS succeeds if and only if there are no asymptomatic nodes on the transmission path from the source to the first hospitalized agent. This observation leads us to enhance the LS algorithm by also searching within the neighbors of asymptomatic nodes; we explore this idea in the LS+ algorithm introduced in Section~\ref{sec:localsearch}. We are not aware of this simple greedy algorithm being studied in the context of source detection, although similar ideas were implemented for non-adaptive source detection to lower the runtime of the algorithms \cite{paluch2018fast}.

\subsection{Back of the Envelope Calculation}
\label{sec:boe}

Now, we have all the tools to estimate the probability of success of the LS algorithm. First we condition on the course of the disease in the source. With probability $p_a$, the source is asymptomatic and LS can never succeed. With probability $(1-p_a)p_h$, the source itself becomes hospitalized, and LS always succeeds. Finally, with probability $(1-p_a)(1-p_h)$ the source is symptomatic but not hospitalized, which we call event $\mathcal{A}$. If event $\mathcal{A}$ happens, then LS may or may not succeed depending on whether there are any asymptomatic nodes on the transmission path. More precisely, conditioned on event $\mathcal{A}$ and on the transmission path having length $l$, the probability of success is $(1-p_a)^{l-1}$ (since there are $l-1$ nodes on the path which can be asymptomatic), which implies
\begin{equation}
\label{eq:boe_ps1}
 \P(\mathrm{success}) = (1-p_a)p_h+ (1-p_a)(1-p_h) \left(\sum_{l=1}^t \P\left(\text{transmission path has length $l$} \mid \mathcal{A} \right)
    (1-p_a)^{l-1} \right).
\end{equation}

The difficult part is to compute the distribution of the transmission path conditioned on event $\mathcal{A}$; indeed we already saw that all four parameters $d, p_i, p_a, p_h$ affect this distribution in a non-trivial way. Let us perform a back of the envelope computation to get more insight into the effect of these parameters. The exact structure of the infection tree will not matter for this computation, only its \textit{profile} does. It is denoted by $\mathcal{T}_t(l)$ and defined as the number of (internal) nodes at level $l$ (i.e., at distance $l$ from the source of the infection). Remember that by definition the RERT has $d \cdot \mathcal{T}_{t-1}(l-1)$ external nodes on level $l$, and that at time~$t$ each external node is promoted to be internal with probability $p_i$ to form $\mathcal{T}_t$. Consequently, the level of a node~$h$ added at time $t>0$ has the same distribution (conditioned on the tree $\mathcal{T}_{t-1}$ at the previous step) as the size (number of internal nodes) of the profile $\mathcal{T}_{t-1}(l-1)$, that is, \begin{equation}
\label{eq:back_of_envelope_def}
    \P(\mathrm{level}(h)=l \mid \mathcal{T}_{t-1}=T_{t-1}) = \frac{T_{t-1}(l-1)}{|T_{t-1}|}.
\end{equation}
Working on the RERT directly can be a daunting task, therefore we propose to approximate the numerator and the denominator of equation \eqref{eq:back_of_envelope_def} by $\E[\mathcal{T}_{t-1}(l-1)]$ and $\E|[\mathcal{T}_{t-1}|]$, respectively. It can be shown by a simple inductive argument, or by generating functions as in \cite{mahmoud2021profile}, that for RERTs we have $\E[\mathcal{T}_{t}(l)]=\binom{t}{l} (dp_i)^l$ and $\E[|\mathcal{T}_{t}|]=(1+dp_i)^t$, which suggests a binomial distribution for the level of $h$. And indeed, we can approximate the distribution of the level of a node $h$ added at time $t$ as
\begin{align*}
    \P(\mathrm{level}(h)=l) &\approx \frac{\E[\mathcal{T}_{t-1}(l-1)]}{\E[|\mathcal{T}_{t-1}|]} \\
    &=\frac{\binom{t-1}{l-1} (dp_i)^{l-1}}{(1+dp_i)^{t-1}}\\
    &=\binom{t-1}{l-1}\left(\frac{ dp_i}{1+dp_i}\right)^{l-1} \left(1-\frac{ dp_i}{1+dp_i}\right)^{t-l} \\
    &= \P(\mathrm{Bin}(t-1,q) = l-1),
\end{align*}
with $q=dp_i/(1+dp_i)$.

One of the main challenges of this calculation is that we do not know the day of the first hospitalization $t$ conditioned on event $\mathcal{A}$, we only know that each node is hospitalized with probability $(1-p_a)p_h$, which means that the index of the first hospitalized node follows a geometric distribution with mean $1/((1-p_a)p_h)$. We approximate $t-1$ by the first time that the expected size of the infection tree (excluding the source since we condition on event $\mathcal{A}$) exceeds the expected index of the first hospitalized node. Therefore we solve
$$\E[|\mathcal{T}_{t-1}|-1]=(1+dp_i)^{t-1}-1 = \frac{1}{(1-p_a)p_h}=\E[\text{index of the first hospitalized node}]$$
for $t$ (relaxing the constraint that $t$ is an integer), which gives 
$$t-1= \frac{\log \left(1+\frac{1}{(1-p_a)p_h} \right)}{\log(1+dp_i)}.$$ 
Consequently, we approximate $\P\left(\text{transmission path has length $l$} \mid \mathcal{A} \right)$ by $\P(\mathrm{Bin}(t-1,q) = l-1)$. Continuing equation \eqref{eq:boe_ps1}, and using the well-known expression of the probability generating function of the binomial distribution, we get
\begin{align}
\label{eq:two}
    \P(\mathrm{success}) &\approx (1-p_a)p_h+ (1-p_a)(1-p_h) \left(\sum_{l=1}^t \P\left(\mathrm{Bin} \left(t-1, q \right)=l-1 \right) (1-p_a)^{l-1} \right) \nonumber \\
    &= (1-p_a)\left(p_h + (1-p_h) \left((1-p_a)\frac{dp_i}{1+dp_i}+1-\frac{dp_i}{1+dp_i}\right) ^{\frac{\log\left(1+\frac{1}{(1-p_a)p_h}\right)}{\log(1+dp_i)}} \right).
\end{align}

One can check that this expression agrees with our qualitative intuition. However, it is not at all clear whether it is valid because of the strong approximations made in some steps of the above computation. In Section \ref{sec:theory}, we prove a rigorous upper bound on the success probability, and we also provide much more careful approximations by proving exact theorems about the simplified models that we use. Then, in Section \ref{sec:simulation} we compare our results with simulation results on synthetic data, as well as with data generated by the DCS model.

\section{Models, Methods, Algorithms}
\label{sec:models}

\subsection{Epidemic Models}
\label{sec:epidemic_models}
\subsubsection{The DCS Model}
\label{subsub:DCS}

We call DCS the model implemented by \cite{lorch2020quantifying}. The DCS model is fairly complex, and we only give a brief overview.

Each agent in the agent set $V$ can be in one of 8 states: susceptible, exposed, asymptomatic infectious, pre-symptomatic infectious, symptomatic infectious, hospitalized, recovered or dead. Transitions between different states are characterized by counting processes described by stochastic differential equations with jumps. The most important, and also most complicated of these counting processes is the exposure counting process $N_i(t)$, which is modeled by a Hawkes process for each agent~$i$. Hawkes processes are point processes with a time-dependent, self-exciting conditional intensity function $\lambda^*_i(t)$.
\begin{equation}
\label{eq:Hawkes}
\hspace*{-5pt}
\lambda^*_i(t) = 
\beta
\sum_{j \in V \backslash \{i\}} 
    \int_{t - \delta}^{t} K_{i,j}(\tau) ~ \gamma e^{-\gamma(t-\tau)} \, d\tau
\end{equation}
where the kernel $K_{i,j}(\tau)$ indicates whether $j$ has been at time $\tau$ at the same site where $i$ is at time $t$, and whether $j$ is in the infectious state. Parameters $\gamma$ and $\delta$ are the decay of infectiousness at sites and the non-contact contamination window, respectively, and they account for the fact that $j$ can infect $i$ even if they are never at the same site, as $j$ can leave some pathogens behind (airborne for instance). Parameter $\beta$ is the transmission rate for symptomatic and asymptomatic individuals, and it comes in two versions: $\beta_c$ accounts for infections outside the household and $\beta_h$ accounts for infection in the household. Parameters $\beta_c$ and $\beta_h$ are fitted to the COVID-19 infection data of Tubingen from 12/03/2020 to 03/05/2020 using Bayesian Optimization. The model also has a parameter for the relative asymptomatic transmission rate built into the function $K_{i,j}(\tau)$, which scales down the infectiousness of asymptomatic agents (to 55\% of the infectiousness of symptomatic agents by default). 

Once a susceptible agent becomes infected, the disease can take three possible courses (see Figure~\ref{fig:mobility_models}~(a)). With probability $p_a$, the agent becomes asymptomatic infectious after time $T_E$, and then recovers after time $T_I$. With probability $1-p_a$, the agent becomes pre-symptomatic infectious after time $T_E$, next symptomatic infectious after time $T_P$, and then recovers with probability $1-p_h$ after time $T_I-T_P$, or becomes hospitalized with probability $p_h$ after time $T_H$. Agents in the DCS are also assigned age values based on demographic data, and the hospitalization probability $p_h$ of each agent is determined based on its age (following COVID-19 infection data). The times $T_E, T_P, T_I$ and $T_H$ are drawn from an appropriately parametrized (using values from the COVID-19 literature) lognormal distribution as shown in Table \ref{tab:Tvalues}.
 
\subsubsection{The DDE Model}
\label{subsub:DDE}

We start by taking the DCS model \cite{lorch2020quantifying}, which we simplify to enable its theoretical analysis. In the Deterministically Developing Epidemic (DDE) model, continuous time (used in DCS) is replaced by discrete time-steps: we refer to one time-step in the DDE as one day. Instead of modelling the infection propagation as a Hawkes process, an infectious agent (symptomatic or asymptomatic) can infect its susceptible neighbor with probability $p_i$ each day. Thereafter, the disease progresses the same way as in the DCS, except that in the DDE model the transition times are deterministic (the infection events and the severity of the disease (i.e., the (a)symptomatic and hospitalized states) are still determined randomly), and we have a single parameter $p_h$ for the hospitalization probability (agents in this model do not have an age parameter). We discuss how we set the parameters of the DDE model in Section \ref{sec:parameters}.

\begin{figure}
\begin{center}
 \includegraphics[width=\textwidth]{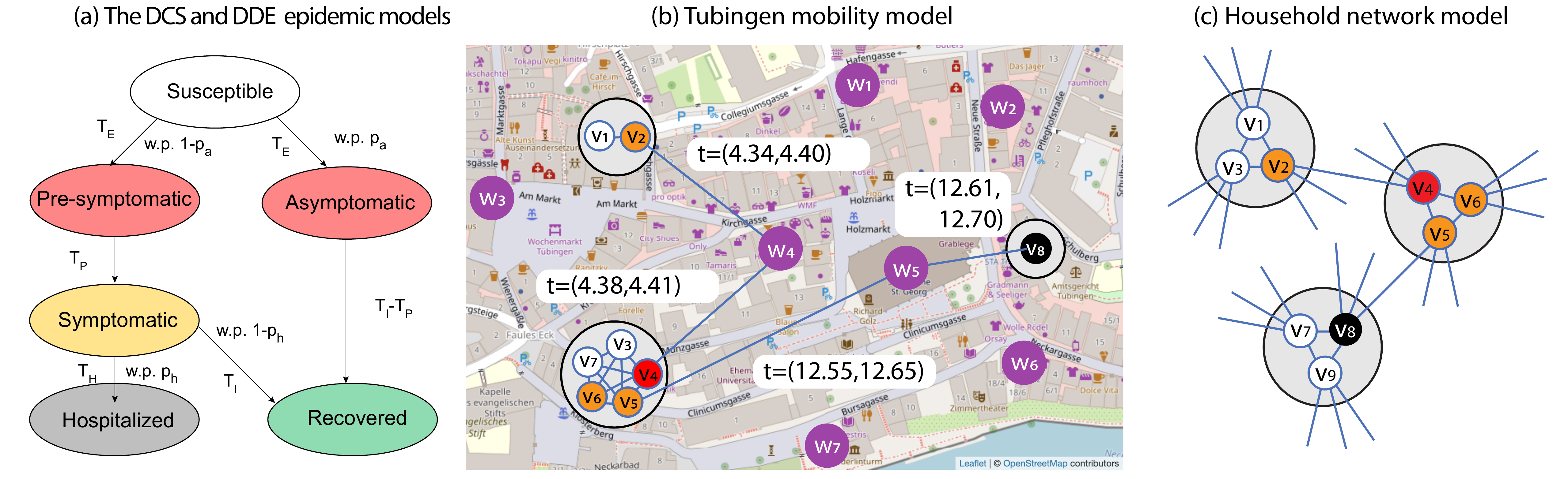}
  \caption{(a) The flow diagram of the DCS and DDE epidemic models. (b) A possible epidemic outbreak in the Tubingen mobility model, and (c) the Household network model. The large grey circles mark households, and the purple nodes mark places, otherwise we use the same coloring as in (a). In both cases (b) and (c), the transmission paths are $(v_2, v_4, v_5, v_8)$. }
  \label{fig:mobility_models}
  \end{center}
\end{figure}

\subsection{Simulating Mobility}
\label{sec:mobility}

\subsubsection{Tubingen Mobility Model}
\label{sec:TU}

We briefly review the mobility model introduced in \cite{lorch2020quantifying}, and illustrated in Figure~\ref{fig:mobility_models}~(b). The population is partitioned into households of possibly varying size (usually between 1 and 5). The households are assigned a location, and we also place some external sites (shops, offices, schools, transport stations, recreating sites) on the map, which the agents may visit.  The location of the households and the number of agents in them is sampled randomly based on demographic datasets. Initially, each agent is assigned a few favorite sites (randomly based on distance), and will only visit these throughout the simulation. Each agent decides to leave home after some exponentially distributed time, visits one of its (randomly chosen) favorite sites, and comes back home after another (usually much shorter) exponentially distributed time. If two agents visit the same site at the same time, or within some time $\delta$, we record them as a contact, which gives an opportunity for the infection to propagate. We denote the Tubingen mobility model as TU, and the DCS epidemic model that runs on the TU mobility model as DCS+TU. 

\subsubsection{Household Network Model}
\label{sec:HNM}

The Household network model (HNM) was inspired by \cite{lorch2020quantifying}, however we note that similar models have been studied in the theoretical community by~\cite{ball2009threshold}. As in the Tubingen mobility model, in HNM $N$ nodes are assigned into households, but of constant size $d_h+1$. Every pair of nodes in the same household are connected by an edge, forming therefore cliques of size $d_h+1$. Additionally, each node is assigned $d_c$ half edges, which are paired uniformly at random with other half-edges in the beginning. Some half-edge pairings can result in self-loops or multi-edges, which are discarded. This construction defines a random graph generated by a configuration model, which shares a lot of similarities with Random Regular Graphs (RRG) \cite{wormald1999models}. In fact, if we join nodes in the same household into a single node in the HNM (which we refer to as the \textit{network of households} of the HNM), then the resulting graph is equivalent to the \textit{pairing model} of RRGs with degree $d_c(d_h+1)$. It is well-known that in the pairing model of RGGs of degree $d$, the local neighborhood (of constant radius, as the number of nodes tends to infinity) of a uniformly randomly chosen vertex is a $d$-regular tree (with probability tending to 1), which implies that locally there are asymptotically almost surely no self-loops, multi-edges or any cycles in the graph. This result has various names; in random graph theory the result is usually proved by subgraph counting \cite{wormald1999models}, in probability theory it is the basis of branching process approximations \cite{ball2009threshold}, and in graph limit theory it is called the local convergence to the infinite $d$-regular tree~\cite{benjamini2011recurrence}. In our theoretical analysis, this result motivates the approximation of the neighborhood of the source in the network of households of the HNM by an infinite $d_c(d_h+1)$-regular tree. The HNM itself is then approximated by replacing each (household) node of the infinite $d_c(d_h+1)$-regular tree of households by a $(d_h+1)$-clique, and by setting the edges so that each (individual) node has degree exactly $d_c+d_h$, while keeping the connection between cliques unchanged (see Figure~\ref{fig:mobility_models}~(c) for a visualization). 

Since the HNM is a time-independent graph, we adopt the standard notations from graph theory. Formally, the HNM is given by the set of nodes and edges $G=(V,E)$. Let us denote by $H(v)$ the set of nodes that are in the same household as node $v$. The distance between two nodes $u,v \in V$ (denoted by $d(u,v)$) is defined as a number of edges of the shortest path between $u$ and $v$. We denote the DDE epidemic model that runs on the HNM network as DDE+HNM. 

\subsection{The Source Detection via Contact Tracing Framework}
\label{sec:SDCTF}

We present the Source Detection via Contact Tracing Framework (SDCTF), which can be applied to both epidemic and mobility models presented so far. The framework determines how the government/health agency, which conducts the source detection task, learns about the outbreak, and how it can gather further information to locate the source. In the SDCTF, as in Section \ref{sec:boe_model}, the agency learns about the outbreak when the first hospitalization occurs, and it also learns the identity of nodes when they become hospitalized (including the identity of the first hospitalized node).

After the outbreak is detected, the agency can make three types of queries. The first type of query, the household query with parameter $v$, reveals the agents that live in the same household as $v$. The household query works the same way in both the TU and the HNM models, and we do not limit the number of times it can be called (these queries are considered as cheap in the SDCTF). The second type of query, the contact query, works differently in the TU and the HNM models. For the TU model, a contact query has two parameters: an agent $v$ and a time window $[t_1,t_2]$. As a result, all agents that have been in contact with $v$ (and therefore could have infected $v$ or could have been infected by $v$) at an external site between $t_1$ and $t_2$ are revealed. In the HNM, no time window is needed for the contact query (which we also call edge query), and all neighbors of $v$ in graph $G$ are revealed. Contact (and edge) queries are considered expensive in the SDCTF. While in this paper we do not limit the number of available queries, we track the number of contacts and edges that are revealed as the algorithm runs. Note that in the TU model if two agents $v_1$ and $v_2$ have been in contact during the time window $[t_1, t_2]$ and also during a different time window $[t_3,t_4]$, then those are counted as separate contacts, whereas in the HNM an edge between $v_1$ and $v_2$ is only counted once. Although contact queries are considered expensive, both household and contact queries are answered instantly in the SDCTF.

The third kind of query is the test query with parameter $v$, which reveals information about the course of the disease in the queried agent (see Figure \ref{fig:mobility_models}~(a)). Symptomatic patients reveal the time of their symptom onset (which exactly determines their time of infection in the DDE due to the deterministic transition times) if they are past the pre-symptomatic state (i.e., if they are either infectious or recovered). Asymptomatic and pre-symptomatic patients do not reveal any information about their infection time; they just reveal that they have the disease or had the disease at some point and have recovered. For all algorithms we assume that asymptomatic patients do not reveal whether they have the infection at the time they are queried. Finally, agents who have not been exposed, or are still in their exposed state, give a negative test result. Test queries are again considered expensive in the SDCTF, we even limit the population that can be tested on any given day to at most 1\% of the total population, due to the capacity of testing facilities. However, since in this paper we do not limit the number of days that the algorithm can use to locate the source, the limit on the number of tests does not play an important role. As opposed to household and contact queries (and the model in Section \ref{sec:boe_model}), tests results are only answered the next day in the SDCTF, which means that the algorithms must operate in ``real-time'', while the epidemic keeps propagating.

\subsection{Parameters}
\label{sec:parameters}

\begin{figure}[h]
\begin{center}
 \includegraphics[width=\textwidth]{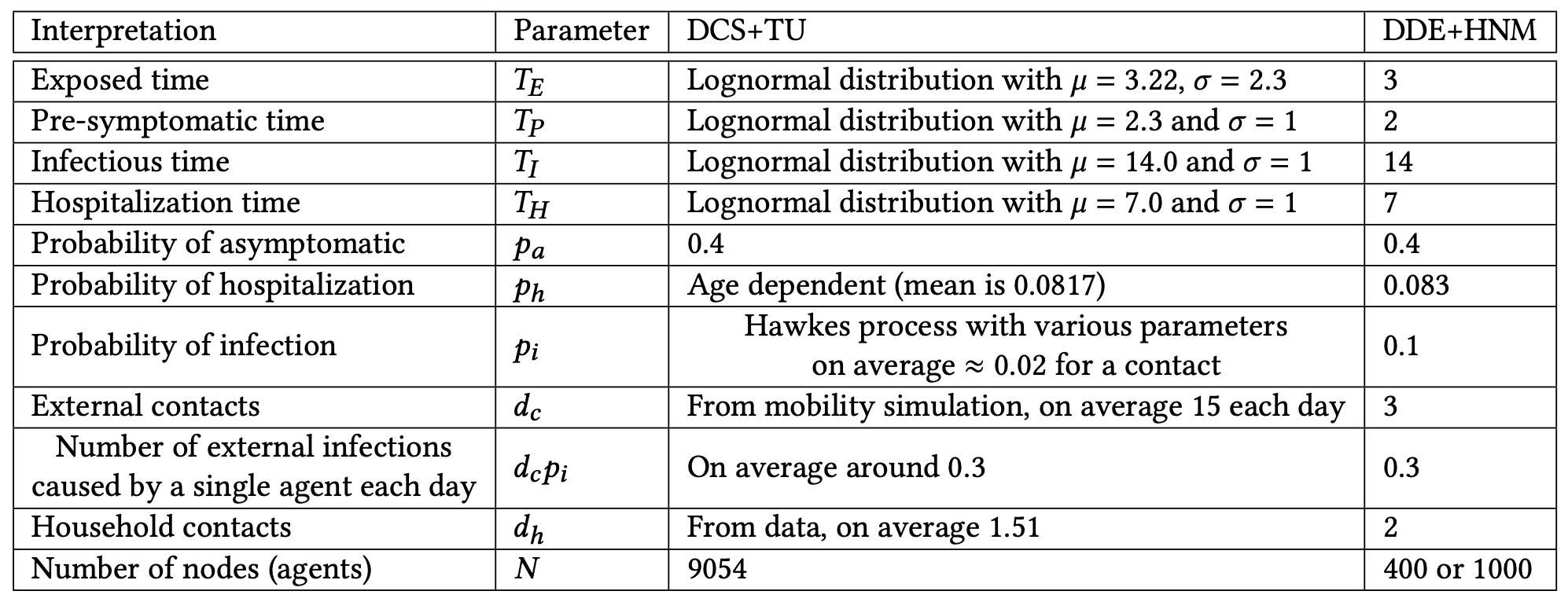}
  \caption{Default values for the infection parameters in the DCS+TU and the DDE+HNM models.}
  \label{tab:Tvalues}
  \end{center}
\end{figure}

The DCS+TU model has many parameters, most of which are fitted to COVID-19 datasets of Tubingen from 12/03/2020 to 03/05/2020 by \cite{lorch2020quantifying} (we show the most relevant parameters in Table \ref{tab:Tvalues}). We determined the parameters of the DDE+HNM model so that they fit the parameters of the DCS+TU as closely as possible (see the precise values in Table \ref{tab:Tvalues}). We determine the values of $T_E, T_P, T_I$ in the DDE+HNM by rounding the expected value of the corresponding distribution in the DCS+TU to the nearest integer. Since $p_a$ is simply a constant in both models, we keep the same numerical value in the DDE+HNM. The parameter $p_h$ is more complicated, because in the DCS+TU model there is a different hospitalization probability for each age group. We take the average hospitalization probability across the population to be $p_h$. The most complicated parameter to fit is $p_i$, because in the DCS+TU model, infections are modelled by a Hawkes process, which depends on many parameters, including whether the infectious agent is symptomatic or asymptomatic, the length of the visit, the site where the infection happens, etc (see equation \eqref{eq:Hawkes}). We empirically observe the probability of infection in every contact in several simulations, and we find that an agent has on average 15 contacts outside the household each day, and that the average probability of infection during such a contact is around 0.02. However, since we use smaller networks for the DDE+HNM ($N=400$ or $1000$, because running the baselines on larger networks is not feasible) than the DCS ($N=9054$), setting $d_c$ to be as high as 15 would violate the assumption that the network of households of the HNM can be locally approximated by a tree (see Section~\ref{sec:HNM}). Therefore we chose $d_c=3$ for the HNM and we scale $p_i$ so that $d_cp_i$ (the expected number of external infections caused by a single agent each day) is the same in the DCS+TU and the DDE+HNM models. Finally, we choose $d_h$ in the DDE+HNM by rounding the average household connections in the DCS+TU. Note that the average number of household connections is not the same as the average number of household members, because the number of connections grows quadratically in the size of the households, and thus fitting to the number of connections results in a higher $d_c$ (due to the Quadratic Mean-Arithmetic Mean inequality).

Finding the default values for the parameters is useful to create a realistic model. However, we also interested in the effect of each of the parameters on the performance of our algorithms. Therefore, in the DDE+HNM, we vary the parameters $p_a,p_h,p_i,d_h$ and $d_c$, while keeping the other ones unchanged. For the DCS+TU model, we also keep the mobility model fixed and we focus on varying the parameters $p_a,p_h$ and $p_i$. As noted above, there is no single parameter $p_h$ or $p_i$ in the DCS+TU model, therefore we change all hospitalization probabilities and all intensities of the Hawkes processes so that the hospitalization probability averaged across the population and the infection probability averaged across contacts equal the desired values.

\subsection{The LocalSearch Algorithms LS and LS+}
\label{sec:localsearch}

The LS algorithm finds patient zero by local greedy search.
It keeps track of a candidate node, which is always the node with the earliest reported symptom onset time. We denote the candidate of the algorithm at iteration $i>0$ by $s_{c,i}$. We think of $s_c$ as a list, which is updated in each iteration of the algorithm, and we use the notation $s_{c,-1}$ for the last element of the list (i.e., the current candidate). In each iteration of the algorithm, we compute a new candidate denoted by $s_c'$, and we append it at the end of the list $s_c$ at the beginning of the next iteration, unless $s_c'=s_{c,-1}$, in which case the algorithm terminates. 

\begin{figure}[h]
\begin{center}
 \includegraphics[width=\textwidth]{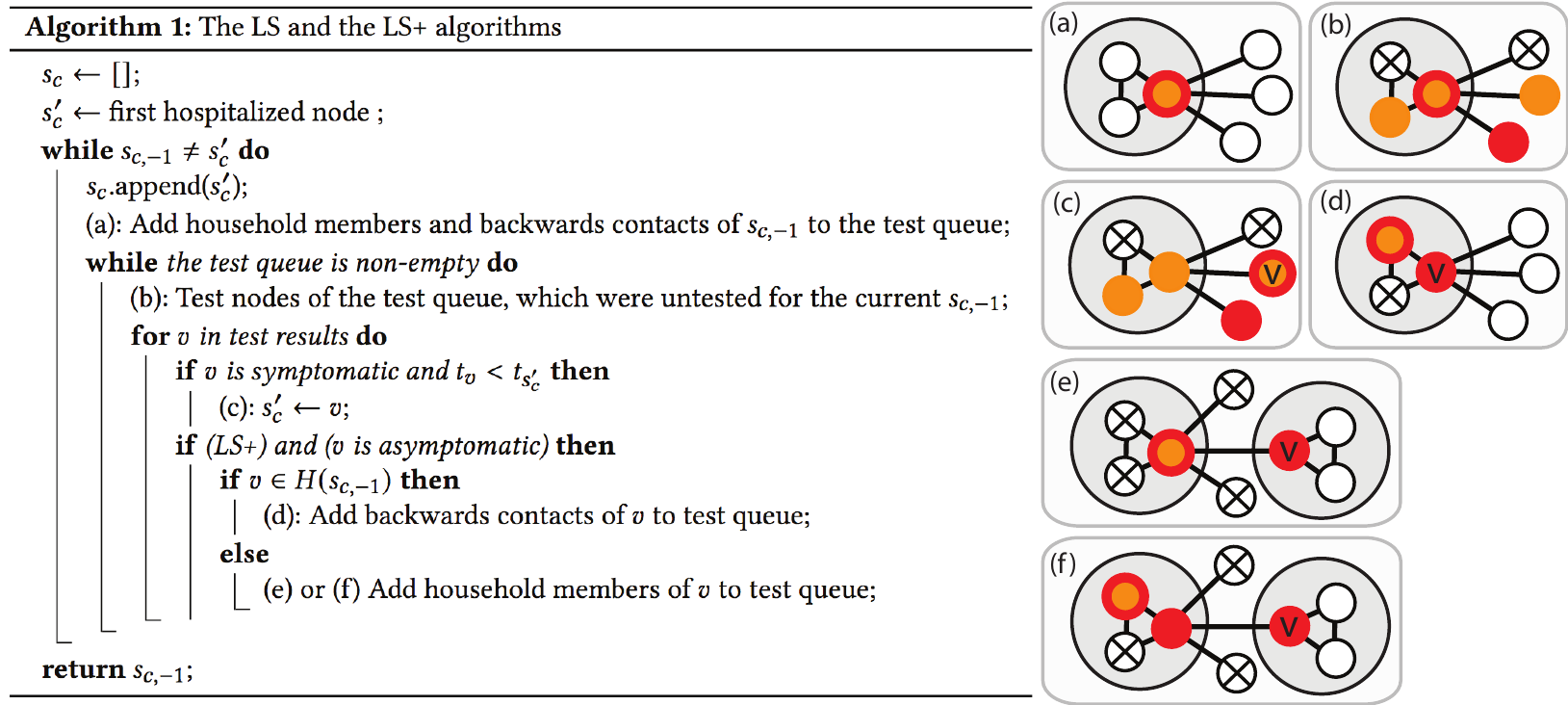}
  \caption{Pseudocode and graphical explanation for the LS and LS+ algorithms. We use the same coloring as in Figure \ref{fig:mobility_models}~(a). Black edges show the queried edges, a node with black X marks a negative test result, and red stroked node marks the node currently maintained as source candidate by the LS algorithm. We denote by $t_v$ the symptom onset time of symptomatic node $v$ and by $H(v)$ the household of a node $v$ similarly to the main text. }
  \label{fig:Alg1_figure}
  \end{center}
\end{figure}

Since we consider the SDCTF, the outbreak is detected when the first hospitalized case is reported. At that time, $s_c'$ is initialized to be the hospitalized patient, the test queue is initialized to be empty, and the algorithm is started. In the beginning of an iteration, if the test queue is empty, the household members and the ``backward'' contacts of the current candidate $s_{c,-1}$ are queried and are added to the test queue (see Figure~\ref{fig:Alg1_figure}~(a)). We define ``backward'' contacts as the set of nodes that have been in contact with $s_{c,-1}$ in the interval $[t_{s_{c,-1}} - (T_E+T_P) - (\sigma_E +\sigma_P),t_{s_{c,-1}} - (T_E+T_P) +  (\sigma_E +\sigma_P)]$, where $t_{s_{c,-1}}$ is the symptom onset time of current candidate $s_{c,-1}$. The terms $\sigma_E$ and $\sigma_P$ model the standard deviation of the transition times, and they are set to zero for the DDE and to $\sigma_E=2$ and $\sigma_P=1$ for the DCS based on Table \ref{tab:Tvalues}. We note that the notion of ``backward'' contacts is only meaningful in the case of time-dependent network models; for the HNM, all neighbors are counted as backward contacts.
 
After the test queue is initialized, the agents inside the queue are tested (see Figure \ref{fig:Alg1_figure}~(b)). Not all nodes can be tested on the same day because of the limitation on the number of tests available per day in the SDCTF, however, this has little effect because we do not proceed to the next iteration until the test queue becomes empty. Once the test results come back to the agency, if any of the (symptomatic) nodes $v$ reports an earlier symptom onset time than the current candidate $s_{c,-1}$, then we update our next candidate $s_c'$ to be $v$ (see Figure \ref{fig:Alg1_figure}~(c)). We note that the iteration does not stop immediately after $s_c'$ is first updated; the iteration runs until the test queue becomes empty, and until then, $s_c'$ can be updated multiple times. This is important in the theoretical results to prevent the algorithm from getting sidetracked (see Figure~\ref{fig:LSp_cases}). We also experimented with a version of the LS and LS+ algorithms where the iteration stops immediately once $s_c'$ is updated; we call these algorithms LSv2 and LS+v2.

The main drawback of the LS algorithm is that is gets stuck very easily if there is even one asymptomatic node on the transmission path. For this reason, we introduce the LS+ algorithm, in which we enter the backward contacts of the asymptomatic household members of $s_{c,-1}$, and the household members of any asymptomatic node into the testing queue (see Figure \ref{fig:Alg1_figure}~(d)-(f)). Since the symptom onset times of asymptomatic nodes $v$ are not revealed, we define backward contact in this case as any contact in the time window $[t_{s_{c,-1}} -(T_P+2T_E+T_I), t_{s_{c,-1}} - (T_P+2T_E)]$, where $t_{s_{c,-1}}$ is still the symptom onset time of the current candidate $s_{c,-1}$. Indeed, in the DDE model, since $s_{c,-1}$ was infected at $t_{s_{c,-1}} -(T_P+T_E)$, if $v$ infected $s_{c,-1}$, agent $v$ must have been infectious at that time, which implies that $v$ could not have been infected later than $t_{s_{c,-1}} -(T_P+2T_E)$ or earlier than $t_{s_{c,-1}} -(T_P+2T_E+T_I)$. In the DCS model, the terms $\sigma_E$ and $\sigma_P$ can be subtracted and added to the two ends of the queried time window to account for the randomness in the transition times.

Both algorithms stop if the testing queue becomes empty before a node with an earlier symptom onset time than $s_{c,-1}$ is discovered, and both algorithm return $s_{c,-1}$ as their inferred source. The high level pseudocode and an illustration of the LS and LS+ algorithms are given in Figure~\ref{fig:Alg1_figure}.

%

\section{Theoretical Results}
\label{sec:theory}

In this section we present theoretical results for the LS and LS+ algorithms described in Section~\ref{sec:localsearch}. We follow a similar approach as in the non-rigorous computation in Section~\ref{sec:boe}, which useful but not necessary for understanding this section. All the statements are rigorously established, and whenever we reach a point where the computations would become intractable, we propose a simpler approximate model to study. One of the main contributions of this paper is to identify which computations can be done on more general models, and which computations need more simplified ones (see Figure~\ref{fig:overview} for an overview of the different models used for the computations in this section).

We compute the success probability of the LS and LS+ algorithms in two steps. We first assume the length of the transmission path known in Section~\ref{sec:lsp_succes}
. This computation  is then made possible by a tree approximation of the HNM, called the Red-Blue (RB) tree (defined in Section~\ref{subsub:RBtree}), and a slightly modified version of the DDE model called $\mathrm{DDE}_{\mathrm{NR}}$ (defined in Section~\ref{subsub:DDENR}).  The RB tree preserves some of the household structure in the HNM, and therefore allows us gain insight into the difference between the LS and LS+ algorithms, which would be difficult to obtain if we had worked on trees without taking the household structure in account. 

For the second step, we would need to compute the distribution of the transmission path on the RB tree. However, finding a closed form expression is intractable. Instead, we combine the network and epidemic models into a growing random tree model, and we consider a $d$-ary Random Exponential Tree (RET).
The $d$-ary RET model has only been studied for $d=2$ \cite{feng2018profile}; we extend the results on their expected profile for general $d$ in Section~\ref{subsub:RET}. Nevertheless, working on $d$-ary RETs still remains difficult, and therefore, in our last modeling step, we introduce a Deterministic Exponential Tree (DET) model, whose profile is close to the expected profile of the RET, and we compute the distribution of the transmission path on this model in Section~\ref{subsub:DET}.

\begin{figure}
\begin{center}
     \begin{subfigure}[b]{\textwidth}
        \caption{}
         \includegraphics[width=\textwidth]{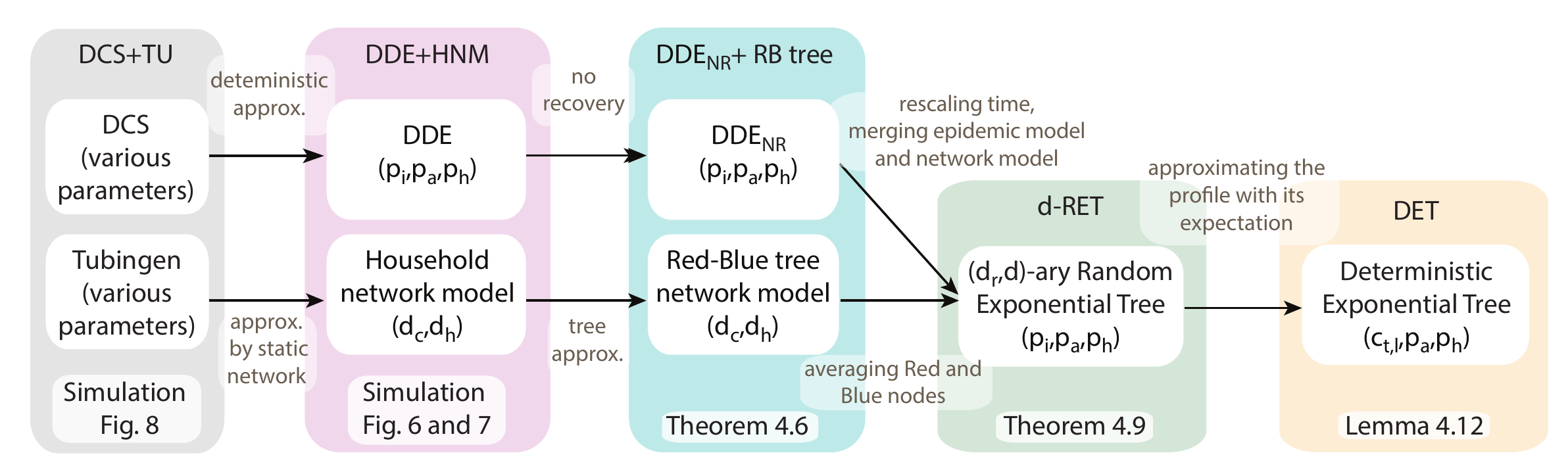}
         \label{fig:y equals x}
     \end{subfigure}
     \hfill
     \begin{subfigure}[b]{\textwidth}
        \caption{}
         \centering
         \includegraphics[width=\textwidth]{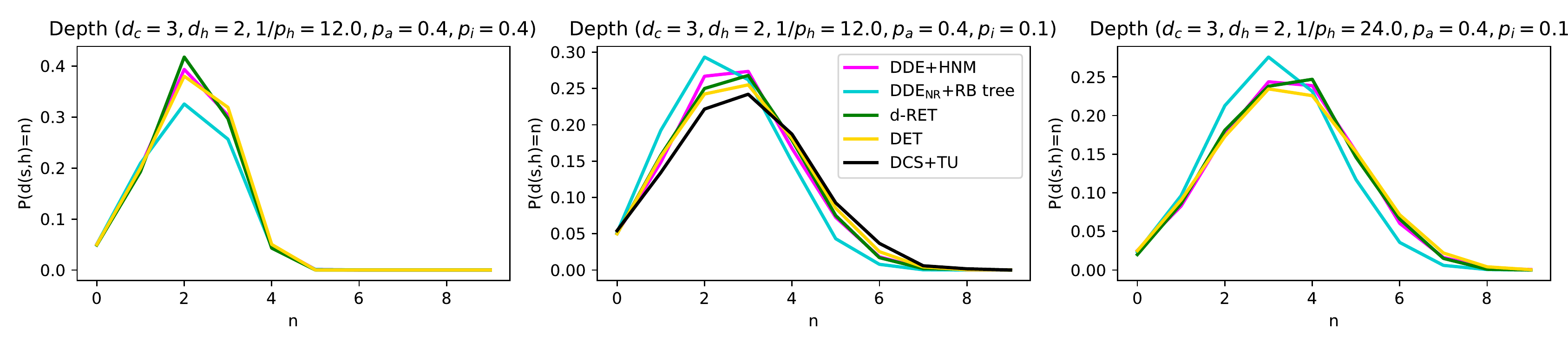}
         \label{fig:three sin x}
     \end{subfigure}
  \caption{The different approximation methods (a) and the distribution of the length of transmission path in the different models (b) proposed in Section~\ref{sec:theory}. Panel (b) also shows the length of the transmission path in the DCS model on the TU dynamics, to highlight the fit of our model.}
  \label{fig:overview}
\end{center}
\end{figure}

To summarize all models considered in this paper, we have a data-driven and a synthetic model for simulations (DCS+TU and HNM+DDE), an analytically tractable model (RB-tree+$\mathrm{DDE}_{\mathrm{NR}}$) where we can compute the success probability if the length of the transmission path is known. In a second stage, we compute the distribution of a transmission path on a deterministic tree (DET), which has a similar profile as a random tree (RET) that approximates our analytically tractable model. We visualize these five different models in Figure~\ref{fig:overview}~(a), and we show by simulations in Figure~\ref{fig:overview}~(b) that the distribution of the transmission path is similar in all of the considered models with appropriately scaled parameters. We compare our analytic results on the success probabilities of the LS and LS+ algorithms with our simulation results in Section \ref{sec:comp_theory} in Figure~\ref{fig:theory_plots}.


\subsection{Success Probability of LS and LS+ Algorithms on the RB Tree}
\label{sec:lsp_succes}
In this section we introduce the Red-Blue (RB) tree model (which is a tree approximation to the HNM), and we calculate the exact probability that the LS and LS+ algorithms succeed, if the length of the transmission path is known. 

\subsubsection{Red-Blue tree models}
\label{subsub:RBtree}
In short, a RB tree is a two-type branching process with a deterministic offspring distribution that depends on $d_h$ and $d_c$. The lack of randomness in this distribution makes us adopt the formalism of deterministic rooted trees.
\begin{definition}
\label{def:RBtree}
Let a rooted tree, denoted by $G(s)$, be a tree graph with a distinguished node root node $s$. Let $u$ and $v$ be two nodes connected by an edge in $G(s)$. If $d(u,s) < d(v,s)$, we say that $u$ is a parent of $v$, otherwise $u$ is a child of $v$. Moreover, if $d(s,v) = l$ we say that $v$ is on level $l$. An RB tree with parameters $(d_c, d_h)$ is an infinite rooted tree, such that the nodes also have an additional color property. The root is always colored red and the rest of the nodes are colored red or blue. The root has $d_c$ red and $d_h$ blue children. Every other red node has $d_c-1$ red and $d_h$ blue children, and every blue node has $d_c$ red children and no blue children. Red nodes and their $d_h$ blue children partition the nodes of the RB tree $G(s)$ into subsets of size $d_h+1$, which we call households.
\end{definition}
\begin{remark}
\label{rem:RBdef}
In the RB tree, each blue node has degree $d_c+1$, and each red node has degree $d_c+d_h$, including the root of the tree $s$ (which is 
the source of the epidemic, when the RB tree is combined with an epidemic model).
\end{remark}
The RB tree can be seen as a local tree approximation of the HNM. Let $G=(V,E)$ be an HNM with parameters $(d_c,d_h)$, and let $s \in V$ be the distinguished source node. In Section \ref{sec:HNM} we noted that the HNM can be approximated locally around the source node by replacing each node of an infinite $d_c(d_h+1)$-regular tree by a $(d_h+1)$-clique, and setting the edges so that each node has degree exactly $d_c+d_h$, while keeping the connection between cliques unchanged. Let us call this infinite graph $G^*$. Although $G^*$ is not a tree, all cycles in $G^*$ must be contained entirely inside the households, which implies that in each household there exists exactly one node that has the minimal distance to the source. We will refer to these nodes with minimal distance to the source as the red nodes, and we color the rest of the nodes blue. In other words, the red nodes will be the first ones in their households to be infected. Let us now delete the edges between the blue nodes in $G^*$ to obtain graph $G'$. We claim that $G'$ is isomorphic to the RB tree $G(s)$ rooted at the source~$s$. Indeed, since the edges between blue nodes have been deleted in $G^*$ to form $G'$, each blue node has $d_c+1$ red neighbors and no blue neighbor, and since the edges incident to red nodes have been unchanged, each red node has $d_c$ red and $d_h$ blue neighbors, exactly as in the definition of RB tree above.

Note that a household in $G^*$ is completely characterized by only specifying the colors of the nodes: a household always consists of one red node and of its $d_h$ blue children. We use this characterization as a definition for households in the RB tree $G'$, because it does not depend on the edges from $G$ that are deleted in $G^*$, whereas this deletion makes the original definition of a household as a clique in $G$ unusable.

Next, we make some important observations the behaviour of the LS and the LS+ algorithms on RB trees, which we prove in Appendix \ref{sec:LS,LSp,trees_app}. We start by formalizing the notion of transmission path. 

\begin{definition}
\label{def:inf_path}
Let $h$ be the first hospitalized node and $s$ be the source. We call the path $(s = v_0, v_1, ... v_l = h)$, where $v_{i}$ is the infector of $v_{i+1}$ for $0 \leq i < l$, the \emph{transmission path}. Also we call the path $(v_l, v_{l-1}, ... v_1)$ the \emph{reverse transmission path}.
\end{definition}

\begin{remark}
\label{rem:RBtree}
Note that in an RB tree, each household traversed by a transmission path shares one (the red node in the household) or two (the red node of the household and one of its $d_h$ children in the household) nodes with this path. Moreover, the red node of a household traversed by a transmission path is followed by another red node on the path (in another household) if it is the only node of that household on the transmission path, whereas it is followed by a blue node (in the same household) if two nodes of that household are on the transmission path. 
\end{remark}


\begin{lemma}
\label{lem:LS,LSp,trees}
In the RB tree network, the LS algorithm succeeds if and only if all nodes on the transmission path are symptomatic, and the LS+ algorithm succeeds if among the nodes of the transmission path, there exists a symptomatic node in each household, and the source is symptomatic.
\end{lemma}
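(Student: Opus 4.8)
The plan is to analyze both algorithms by following the candidate along the reverse transmission path of Definition~\ref{def:inf_path}, exploiting the rigid structure of the RB tree from Definition~\ref{def:RBtree}: on a tree the infector of any non-source node $v_i$ is its unique parent $v_{i-1}$, while every other neighbor of $v_i$ is one of its children and hence infected strictly later. Thus $v_{i-1}$ is the only neighbor of $v_i$ with an earlier symptom onset time, and moreover the red node of any household, being the member closest to the source, is infected before the blue members of its household.

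For the LS claim I would establish both implications from this observation. \emph{Forward direction.} Starting from the candidate $v_l=h$, which is symptomatic, I argue by induction that if $v_{i-1},\dots,v_l$ are symptomatic then, after the corresponding iterations, the candidate equals $v_{i-1}$: at candidate $v_i$ the infector $v_{i-1}$ is queried (it is a neighbor, hence a backward contact, and when $v_i$ is blue also a household member) and, being symptomatic, reports an earlier onset time; it has in fact the earliest onset among all queried nodes, since every other queried node lies in the household or subtree of $v_i$ and the red node of each traversed household is infected first (Remark~\ref{rem:RBtree}). Because the iteration runs until the test queue empties, the candidate settles exactly on $v_{i-1}$, and upon reaching $v_0=s$ no neighbor has an earlier onset, so LS terminates at $s$. \emph{Backward direction.} If a path node is asymptomatic, let $v_j$ be the one of largest index; LS traces back to $v_{j+1}$ as above, but there the asymptomatic infector $v_j$ reveals no onset time, and since the tree has a unique route from $v_{j+1}$ to $s$ through $v_j$, no symptomatic node strictly closer to $s$ than $v_{j+1}$ can ever be queried, so LS fails.

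For the LS+ claim (sufficiency only) I would maintain the invariant that, at the start of every iteration, the candidate is a symptomatic node on the transmission path, and show that whenever this candidate is not $s$ the algorithm adopts a strictly earlier symptomatic path node; since onset times strictly decrease along the finite path, LS+ then reaches and terminates at $s$. The single step is a case analysis driven by Remark~\ref{rem:RBtree}: each traversed household contributes either its red node alone or its red node together with one blue child, and the hypothesis guarantees a symptomatic contributed node in every household as well as a symptomatic source. When the infector of the current candidate is symptomatic, the ordinary LS move applies. When it is asymptomatic, the two extra LS+ rules do the work: querying the household of any asymptomatic node, and the backward contacts of the asymptomatic household members of the current candidate, lets the algorithm jump over an asymptomatic red node to reach its infector in the previous household, and, if that infector is itself asymptomatic, to test its whole household, where the hypothesis supplies a symptomatic path node. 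In each sub-case I would verify that the earliest-onset queried node is exactly the symptomatic path node anchoring the relevant household, thereby restoring the invariant.

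The main obstacle, and the reason the iteration must run until the queue empties (cf.\ Figure~\ref{fig:LSp_cases}), is to rule out \emph{sidetracking}: the queue can contain off-path symptomatic nodes, most notably blue siblings of a path node that share an asymptomatic red parent and may carry an earlier onset than the candidate. The argument is therefore twofold. First, the scope of the two LS+ rules must be pinned down exactly, to show that one iteration explores only the current household, the previous household, and deeper subtrees, so that no misleadingly early off-path node from households still closer to the source can enter the queue. Second, because the red node of any household is infected before the other members, the earliest onset among the queried nodes is always attained at an on-path red node (or the on-path entry node of the previous household), so the run-until-empty rule forces the candidate to settle on a path node rather than on a transient detour. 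Checking these two points across the handful of color/symptom configurations is the technical heart of the proof; termination and the reduction to the per-household symptomatic condition then follow directly.
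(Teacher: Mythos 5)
Your proposal is correct and takes essentially the same route as the paper's proof: induction along the reverse transmission path for both LS directions, and for LS+ an induction over the successive symptomatic path nodes in which the two extra testing rules bridge asymptomatic runs of length at most two (a bound forced by Remark~\ref{rem:RBtree} together with the per-household hypothesis), while sidetracking is excluded because every node tested in an iteration has the next symptomatic path node on its path to the source, so the run-until-the-queue-empties rule makes that node the new candidate. The differences are cosmetic: your necessity argument for LS traps the candidate among the descendants of the deepest asymptomatic path node rather than invoking the uniqueness of the path traced by a successful run, and your claim that the earliest onset among queried nodes is attained at an on-path \emph{red} node needs the parenthetical reading (the infector of a red candidate can be a blue, symptomatic node), but neither point changes the substance of the argument.
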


\begin{remark}
We note that the statement for LS+ in Lemma \ref{lem:LS,LSp,trees} cannot be reversed, i.e., it is possible that LS+ succeeds even if among the nodes of the transmission path, there is a household with no symptomatic node (see Figure \ref{fig:LSp_cases}~(a)). Also, the proof of Lemma \ref{lem:LS,LSp,trees} does not hold if the LS+ algorithm proceeds to the next iteration at the first time $s_c'$ is updated (see Figure \ref{fig:LSp_cases}~(b)). Finally, in the proof of Lemma \ref{lem:LS,LSp,trees}, we do not make any assumptions about asymptomatic patients having had the disease previously or not, which implies that we could treat non-complying agents as asymptomatic patients without jeopardizing the correctness of the algorithms. 
\end{remark}

\subsubsection{The $\mathrm{DDE}_{\mathrm{NR}}$ Model}
\label{subsub:DDENR}
Focusing on tree networks is an important step towards making our models tractable for theoretical analysis, but it will not be enough; we will make two minor simplifications to the DDE model as well: we eliminate (i) the pre-symptomatic state and (ii) the recovered state, and we call the new model $\mathrm{DDE}_{\mathrm{NR}}$ (where NR stands for No Recovery). (i) The first assumption can be made without loss of generality, because the pre-symptomatic state does not have any effect on the disease propagation, nor on the success of the source detection algorithm. Indeed, according to Lemma \ref{lem:LS,LSp,trees}, the success of the LS and LS+ algorithms depends only on the information gained about the transmission path, and by the time of the first hospitalization, every node on the transmission path must have left the pre-symptomatic state (since we always have $T_P<T_E+T_H$), even if we include it in the model. (ii) The second assumption on the absence of recovery states amounts to take $T_I \rightarrow \infty$, which does have a small effect on the disease propagation, however, this effect is minimal
because $T_I=14$ is already quite large, and because only the very early phase of the infection is interesting for computing the success probabilities of the algorithms. Finally, this last assumption has no effect on the information gained by the algorithm since we assumed that recovered patients (who were symptomatic) can remember and reveal their symptom onset time in the same way as symptomatic infectious patients.

\subsubsection{Success Probability of LS}

Assuming that the distribution of length of the transmission path is provided for us (we give an approximation in Section \ref{sec:depth_dist}), the success probability of LS can be computed succinctly. We need a short definition before stating our result.

\begin{definition}
\label{def:p}
Let $p$ be the probability that a node is asymptomatic conditioned on the event that it is not hospitalized.
\end{definition}
A simple computation shows that
\begin{equation}
    \label{C5}
p=\P ( v \text{ is asy} \mid v \text{ is not hosp})=\frac{p_a}{p_a +(1-p_a)(1-p_h)}.
\end{equation}

\begin{lemma}
\label{lem:LSsucc}
For the $\mathrm{DDE}_{\mathrm{NR}}$ epidemic model with parameters $(p_i, p_a, p_h)$ on the RB tree with parameters $(d_c, d_h)$, and with $p$ computed in equation \eqref{C5}, we have
\begin{equation}
\label{eq:LS_success_RB_tree}
     \P(LS \textrm{ succeeds}) = \sum_{n=0}^{\infty} \left(1-p\right)^n \P(d(s,h) = n).
\end{equation}
\end{lemma}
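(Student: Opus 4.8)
The plan is to reduce the computation to the combinatorial success criterion of Lemma~\ref{lem:LS,LSp,trees}, and then carefully separate the two independent sources of randomness in the model: the infection process (which determines the infection tree and the infection times, hence the transmission path to any node) and the independent assignment of disease courses (asymptomatic / hospitalized / symptomatic-non-hospitalized) to the infected nodes. By Lemma~\ref{lem:LS,LSp,trees}, on the RB tree LS succeeds exactly when every node of the transmission path $(s=v_0, v_1, \dots, v_n = h)$ is symptomatic. I would first condition on $d(s,h)=n$ and prove the identity $\P(LS \text{ succeeds} \mid d(s,h)=n) = (1-p)^n$; summing against $\P(d(s,h)=n)$ then yields \eqref{eq:LS_success_RB_tree}.

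The heart of the argument is to identify the conditional status of the $n+1$ path nodes. The endpoint $h$ is hospitalized, hence symptomatic, and imposes no constraint. For the remaining $n$ nodes $v_0,\dots,v_{n-1}$ I would use the structural fact that they are strict ancestors of $h$ in the infection tree, so along the path the infection times are strictly increasing and each $v_i$ is infected strictly before $h$. Because the transition times in $\mathrm{DDE}_{\mathrm{NR}}$ are deterministic, every hospitalized node is hospitalized a fixed delay after its infection; therefore if some $v_i$ were hospitalized it would be hospitalized strictly before $h$, contradicting that $h$ is the first hospitalized node. Hence, on the event fixing $h$, each of $v_0,\dots,v_{n-1}$ is necessarily \emph{not hospitalized}.

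To convert ``not hospitalized'' into the factor $1-p$, I would condition on the infection tree and infection times, and then on the identity of the first hospitalized node, say $h=u_j$, where $u_1,u_2,\dots$ list the nodes in increasing infection time. This event is exactly $\{u_1,\dots,u_{j-1}\text{ not hospitalized},\ u_j \text{ hospitalized}\}$. Since disease courses are assigned independently across nodes and independently of the infection process, conditioning on this event leaves the symptom status of each ancestor governed only by the single-node conditioning ``not hospitalized''; by Definition~\ref{def:p} and \eqref{C5}, each ancestor is then asymptomatic with probability $p$ and symptomatic with probability $1-p$, independently. As the path has exactly $n$ ancestors, the conditional success probability equals $(1-p)^n$, and this value depends on the realization only through $n$. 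Averaging over the infection tree and over indices $j$ with $d(s,u_j)=n$ (a partition of $\{d(s,h)=n\}$) gives $\P(LS \text{ succeeds}\mid d(s,h)=n)=(1-p)^n$, and the statement follows by the law of total probability.

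I expect the main obstacle to be the bookkeeping of the conditioning, precisely because the first hospitalized node $h$ --- and hence $d(s,h)$ itself --- is a function of the very disease-course variables whose symptom status must be controlled. The clean resolution is the decoupling above: the event determining $h$ constrains the path ancestors only through their independent ``not hospitalized'' marginals, so no hidden bias enters their symptomatic probabilities. A secondary point to verify is that infection times along any transmission path are strictly increasing (each infector must complete its exposed period before transmitting), which rules out ties that could otherwise spoil the ``$h$ is strictly first'' step.
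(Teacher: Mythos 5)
Your proposal is correct and follows essentially the same route as the paper's proof: both reduce to Lemma~\ref{lem:LS,LSp,trees}, exploit that the infection dynamics in $\mathrm{DDE}_{\mathrm{NR}}$ are unaffected by the (a)symptomatic/hospitalized marks (the paper phrases this as deferred revelation of randomness, you as explicit conditioning on the infection realization and the identity of $h$), and then convert the ``not hospitalized'' constraint on the $n$ path ancestors into independent factors of $1-p$ via equation~\eqref{C5}. Your added verifications (strictly increasing infection times along the path, and that deterministic delays force any hospitalized ancestor to be hospitalized before $h$) are points the paper leaves implicit, but they do not change the substance of the argument.
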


\begin{proof}
Let us reveal the randomness that generates the epidemic in a slightly modified way than in the definition (Sections \ref{subsub:DDE} and \ref{subsub:DDENR}). As before, at the beginning only the source is infectious, and depending on course of the disease, the source can be symptomatic and hospitalized, symptomatic but not hospitalized, or asymptomatic with probabilities $(1-p_a)p_h, (1-p_a)(1-p_h), p_a$, respectively. In each moment, each infectious node infects each of its susceptible neighbors with probability $p_i$. If a node is infected, we reveal the information whether it will become hospitalized (which happens with the probability $(1-p_a)p_h$), but if it does not become hospitalized, we do not reveal whether the node is asymptomatic or symptomatic yet. Indeed, this information is not necessary for continuing the simulation of the epidemic since we assumed that there is no difference between the infection probabilities of symptomatic and asymptomatic nodes. Thereafter, when the first hospitalized case occurs, we reveal for each infected node $v$ on the transmission path (except the last node, which we know is hospitalised; see Definition \ref{def:inf_path}) whether it is asymptomatic or not. The only information we have about these nodes is that they are not hospitalized, which implies that the probability that a node is revealed to be asymptomatic on the transmission path is exactly the probability $p$ from Definition \ref{def:p} computed in \eqref{C5}.

By Lemma \ref{lem:LS,LSp,trees}, LS succeeds if and only if each node on the transmission path is symptomatic. Conditioning on the length of the transmission path, we can compute the probability of each node being symptomatic by equation \eqref{C5} as
\begin{equation}
    \P(LS\textrm{ suceeds} | d(s,h) = n)=\left(1- \P(v \textrm{ is asy}| v\textrm{ is not hosp}) \right)^n=\left(1-p\right)^n,
\end{equation}
from which \eqref{eq:LS_success_RB_tree} follows immediately.
\end{proof}

\subsubsection{Success Probability of LS+}

Computing the success probability of the LS+ algorithm is far more challenging compared to the LS algorithm, even if the distribution of the length of the transmission path is provided to us. Indeed, since the LS+ algorithm does further testing on the contacts and household members of asymptomatic nodes, it is essential to have additional information about the number of households on the transmission path. We give our main result on the LS+ in the next theorem, which we prove in Appendix \ref{sec:thrm:LSp_suc}.

\begin{theorem}
\label{thrm:LSp_suc}
Let $p$ be as in \eqref{C5} and let $\mathcal{S}(n,\alpha,\beta)$ be the set of $k$ integer values such that $k$ and $n$ have different parity and $n+1 - 2(\alpha + \beta) \geq k \geq 2-(\alpha+\beta)$. Then, for the $\mathrm{DDE}_{\mathrm{NR}}$ epidemic model with parameters $(p_i, p_a, p_h)$ on the RB tree with parameters $(d_c, d_h)$, we have
\begin{align}
    &\P(LS+ \textrm{ succeeds}) \ge \P(d(s,h) = 0) + (1-p)\P(d(s,h)=1)+ \nonumber \\
    & \sum_{n = 2}^{\infty} \sum_{\substack{\alpha,\beta \in \{0,1\} \\ k \in \mathcal{S}(n,\alpha,\beta)}}
    \binom{\frac{n+k-3}{2}}{k-2+\alpha+\beta}
    \frac{(d_h(1-p))^{\frac{n+k-1}{2}}(d_c(1+p))^{\frac{n-k+1}{2}-\alpha-\beta}d_c(d_c-1)^{k+\alpha + \beta-2}}{\lambda_1\left( \frac{d_c-1+D}{2}\right)^n + \lambda_2\left( \frac{d_c-1-D}{2}\right)^n} \P(d(s,h)=n),
\end{align}
where
\begin{align}
    D &= \sqrt{(d_c-1)^2 +4d_cd_h} \\
    \lambda_1 &= \frac{(d_c+1+D)(2d_h+d_c-1+D)}{2D(d_c-1+D)} \\
    \lambda_2 &= \frac{(D-d_c-1)(2d_h+d_c-1-D)}{2D(d_c-1-D)}.
\end{align}
\end{theorem}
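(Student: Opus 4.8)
The plan is to condition on the length $n=d(s,h)$ of the transmission path and write $\P(LS+\text{ succeeds})=\sum_{n\ge 0}\P(LS+\text{ succeeds}\mid d(s,h)=n)\,\P(d(s,h)=n)$, lower-bounding each conditional factor. The inequality (rather than equality) is dictated by Lemma~\ref{lem:LS,LSp,trees}: its criterion---the source is symptomatic and every household met by the path contains a symptomatic path-node---is sufficient but not necessary for success, so I only count the realizations that are \emph{guaranteed} to succeed by it, which undercounts and yields a lower bound. The cases $n=0$ (source hospitalized, always succeeds) and $n=1$ (only the source can fail, with probability $p$) are immediate and produce the two leading terms $\P(d(s,h)=0)+(1-p)\P(d(s,h)=1)$.

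For $n\ge 2$ I would first set up the randomness exactly as in the proof of Lemma~\ref{lem:LSsucc}: revealing hospitalization marks online and deferring the asymptomatic/symptomatic split until the first hospitalization, so that each non-hospitalized node on the path is independently asymptomatic with the probability $p$ of \eqref{C5}, while the terminal node $h$ is symptomatic by definition. Next I would use Remark~\ref{rem:RBtree} to present the path as a sequence of households, each contributing either one node (its red node, exited by an $R\!\to\!R$ step) or two nodes (its red node and one blue child, an $R\!\to\!B$ step then a $B\!\to\!R$ step). I parametrize a path type by the number $T$ of two-node households together with boundary indicators $\alpha,\beta\in\{0,1\}$ recording whether the source's household and the terminal household are two-node; setting $k=n+1-2T$ for the number of one-node households recovers the parity condition ($k\not\equiv n \bmod 2$) and, once internal-household counts are required nonnegative, the admissible range $\mathcal{S}(n,\alpha,\beta)$.

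The core computation is then, for each type $(k,\alpha,\beta)$, the product of (i) the number of tree-paths of that type and (ii) the probability such a path meets the success criterion. For (i) I multiply the arrangement count $\binom{n-1-T}{T-\alpha-\beta}$ of internal household types by the per-edge branching factors ($d_h$ for each $R\!\to\!B$, $d_c$ for each $B\!\to\!R$ and for the root's $R\!\to\!R$, and $d_c-1$ for each non-root $R\!\to\!R$). For (ii) I assign $(1-p)$ to the source, $(1-p)$ to each internal one-node household, $(1-p)(1+p)=1-p^2$ to each internal two-node household, and $1$ to the terminal household, which already contains the symptomatic $h$. Collecting exponents in terms of $n,k,\alpha,\beta$ should reproduce $(1-p)^{(n+k-1)/2}$, $(1+p)^{(n-k+1)/2-\alpha-\beta}$, $d_c(d_c-1)^{k-2+\alpha+\beta}$ and the binomial coefficient in the statement---one should double-check that the power of $d_h$ equals the number $T=(n-k+1)/2$ of two-node households, i.e.\ of $R\!\to\!B$ steps. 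Finally I divide by the number of level-$n$ nodes $\lambda_1\big(\tfrac{d_c-1+D}{2}\big)^n+\lambda_2\big(\tfrac{d_c-1-D}{2}\big)^n$, obtained by solving the two-term recurrence with transfer matrix $\left(\begin{smallmatrix} d_c-1 & d_c\\ d_h & 0\end{smallmatrix}\right)$, whose characteristic roots are $\tfrac{d_c-1\pm D}{2}$ with $D=\sqrt{(d_c-1)^2+4d_cd_h}$, the constants $\lambda_1,\lambda_2$ being fixed by the level counts $(R_1,B_1)=(d_c,d_h)$.

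The step I expect to be the main obstacle is justifying that, conditioned on $d(s,h)=n$, the identity of $h$ is uniform over the level-$n$ nodes, so that each color pattern occurs with probability (number of tree-paths)$/L_n$. This is delicate because the infection times of level-$n$ nodes sharing a path prefix are strongly correlated, so ``earliest-infected among the hospitalization-marked nodes'' is not manifestly exchangeable across level-$n$ nodes. I would establish it either from the symmetry of the i.i.d.\ per-edge geometric delays together with the independence of the hospitalization marks, or by falling back on the profile-based description of the newly infected node used in Section~\ref{sec:boe}. The remaining work---the exponent bookkeeping and the careful treatment of the boundary households encoded by $\alpha,\beta$ (together with checking that the independence used in (ii) genuinely holds for the internal households, whose two nodes are both non-hospitalized)---is routine but error-prone.
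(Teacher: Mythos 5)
Your proposal follows the paper's own proof essentially step for step: the paper also conditions on $d(s,h)=n$, partitions length-$n$ paths into classes $\mathcal{C}_{n,k,\alpha,\beta}$ indexed by the number of single-/multi-households and the two boundary indicators (its Lemma~\ref{lem:Cnkab}), lower-bounds the conditional success probability on each class by $(1-p)^{(n+k-1)/2}(1+p)^{(n-k+1)/2-\alpha-\beta}$ using the sufficient condition of Lemma~\ref{lem:LS,LSp,trees} together with the revelation argument from Lemma~\ref{lem:LSsucc} (its Lemma~\ref{lem:LSp|Cnkab}), and divides by $|\mathcal{C}_n|$ obtained from exactly your two-term recurrence with roots $(d_c-1\pm D)/2$ (its Lemma~\ref{lem:Cn}). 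Your bookkeeping is consistent with theirs: your $\binom{n-1-T}{T-\alpha-\beta}$ equals the paper's $\binom{(n+k-3)/2}{k-2+\alpha+\beta}$ by complementation, and your per-edge branching factors reproduce $d_c^{(n-k+3)/2-\alpha-\beta}(d_c-1)^{k+\alpha+\beta-2}$.

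Your two caveats are both well placed, and neither is a defect of your plan relative to the paper. First, the power of $d_h$: the correct exponent is indeed $T=(n-k+1)/2$, exactly as in the paper's Lemma~\ref{lem:Cnkab}; the grouping $(d_h(1-p))^{(n+k-1)/2}$ in the theorem statement is a typo, since only $(1-p)$ carries the exponent $(n+k-1)/2$, so the term should read $d_h^{(n-k+1)/2}(1-p)^{(n+k-1)/2}$. Second, the uniformity of the conditional law of the transmission path over $\mathcal{C}_n$, which you single out as the main obstacle, is not addressed in the paper either: its proof silently replaces $\P(P\in\mathcal{C}_{n,k,\alpha,\beta}\mid P\in\mathcal{C}_n)$ by $|\mathcal{C}_{n,k,\alpha,\beta}|/|\mathcal{C}_n|$. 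Be aware that your first proposed remedy (symmetry of i.i.d.\ per-edge delays plus independent hospitalization marks) cannot close this gap: when $d_h\neq 1$, red and blue nodes have different numbers of children, so the order in which level-$n$ nodes get infected is not exchangeable (for instance, with $d_c=1,d_h=2$, the probability that the red level-$1$ node is the second node infected exceeds that of a blue level-$1$ node, because a red first infection opens more competing boundary edges), and hence the first hospitalized node need not be uniform over level-$n$ nodes. This substitution is therefore best regarded as an additional approximation, in the spirit of your profile-based fallback; once it is granted, the rest of your plan reproduces the paper's argument.
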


\subsection{Approximating the Depth of the Path to the First Hospitalized Node}
\label{sec:depth_dist}

Section \ref{sec:lsp_succes} was dedicated to the success probability of the LS and LS+ algorithms, however, in these results, we are still missing the distribution of the transmission path length. In this subsection we address this problem by introducing simpler approximate models. 

\subsubsection{$(d_r, d)$-ary Random Exponential Tree}
\label{subsub:RET}
When we introduced the $\mathrm{DDE}_{\mathrm{NR}}$ model in Section \ref{subsub:DDENR}, we removed both parameters $T_P$ and $T_I$ from the DDE model (by removing the presymptomatic and the recovered states, respectively), but we kept the parameter $T_E$. In this step we will rescale the time parameter to make $T_E'=1$ by changing $p_i'$ to be $1-(1-p_i)^{T_E}$. Since we had $T_E=3$ by default, using $T_E'$ and $p_i'$ instead of $T_E$ and $p_i$ means that we choose 3 days to be our time unit, and the probability of infection is scaled to be the probability that the infection is passed in at least one of three days (since the RB tree is time-independent, if two nodes are connected, the infection can spread on it every day). We drop the prime from $p_i'$ and $T_E'$ for ease of notation. As a second approximation, instead of keeping track of two types of nodes (red and blue) as it is done in the RB tree, we propose to change our network model to an infinite $d$-regular tree, where $d$ is set to be the average degree of an RB tree. 

By making these two changes (tracking time at a coarser scale and simplifying the network topology to a $d$-regular tree), the growth of the epidemic becomes equivalent to a known model, the $d$-ary Random Exponential Tree ($d$-RET). Binary RETs have been introduced in \cite{feng2018profile}
. We give the definition below for completeness.
\begin{definition}
\label{def:dRET}
A $d$-ary Random Exponential Tree ($d$-RET) with parameters $d,p_i$ at time day $t$, denoted by $G_t(s)$, is a random tree rooted at node $s$. At day $0$, the tree $G_t(s)$ only has its root node~$s$. Let $\bar{G}_t(s)$ be the closure of $G_t(s)$, which is obtained by attaching external nodes to $G_t(s)$ until every internal node (a node that was already present in $G_t(s)$) has degree exactly $d$ in the graph $\bar{G}_t(s)$. Then, $G_{t+1}(s)$ is obtained from $\bar{G}_{t}(s)$ by retaining each external node with probability $p_i$, and dropping the remaining external nodes.
\end{definition}
Indeed, each node of a $d$-RET infects a new node with probability $p_i$ each day, and after a sufficiently long time, the $d$-RET becomes close to a large $d$-ary tree. Of course, we do not want to let the $d$-RET grow for a very long time, we only want it to grow until the first hospitalization occurs. So far we have not talked about the course of the disease of the nodes in the $d$-RET model because we could define the spread of the infection without it. Since we still need to do one final simplification to compute the distribution of the transmission path, we defer the discussion about hospitalizations, and how the parameters $p_a$ and $p_h$ are part of the model, to Section~\ref{subsub:DET}. Note that by considering the $d$-RET, we deviate from the idea of separating the epidemic and the network models; we only have a randomly growing tree, which is stopped at some time, when the tree is still almost surely finite. 

So far we only did simplifications to the model, which resulted in further and further deviations from the original version. Now we will make a small modification that brings our model back closer to the RB tree, without complicating the computations too much. We still make almost all maximum degrees of the RET uniform $d$, but we make an exception with the root, which will have maximum degree $d_r=d_c+d_h$. This makes the maximum degree of the root the same as the degree of the root of the RB tree. We call the resulting model a $(d_r,d)$-RET with parameter $p_i$. Since the close neighborhood of the source has a high impact on the success probability, we found that this solution gives the best results while keeping the computations tractable.

In our computations, only the profile the infection tree will be important, which motivates the next definition.
\begin{definition}
\label{def:total}
In the $(d_r,d)$-RET model with parameter $p_i$, let $A_{t,l}$ be the number of nodes during day~$t$ at level~$l$, and let $a_{t,l} = \mathbb{E}[A_{t,l}]$. 
Moreover, we define the random variable
\begin{align}
    &A_{t} = \sum_{t = 0}^{+\infty} A_{t,l} \label{eq:def_total}
\end{align}
with $A_{-1, l} = 0$ for all $l$, and its expectation $a_{t} = \mathbb{E}[A_{t}]$.
\end{definition}

As noted earlier, the $d$-RET model has only been analysed for $d=2$ to this date. We provide the expected number $a_{t,l}$ of nodes at level~$l$ in day~$t$ for the general case in the next theorem and corollary, which we prove in Appendices ~\ref{sec:theorem:a} and \ref{sec:corollary:at}.

\begin{theorem}
\label{theorem:a}
In the $(d_r, d)$-RET with parameter $p_i$, let $a_{t,l}$ be  as in Definition \ref{def:total}. Then
\begin{align}
    a_{t,0} &= 1 \\
    a_{t,l} &= d_rp_i\sum_{m = l-1}^{t-1} \binom{m}{l-1}(1-p_i)^{m-l+1}d^{l-1}p_i^{l-1} \textrm{, for } t \geq l \geq 1 \\
    a_{t,l} &= 0 \textrm{, for l > t}.
\end{align}
\end{theorem}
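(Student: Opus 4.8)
The plan is to avoid grinding through the recursion and instead compute $a_{t,l}$ directly by linearity of expectation over the (deterministic) infinite tree in which the epidemic spreads. In the $(d_r,d)$-RET the underlying tree is fixed: the root $s$ has $d_r$ potential children and every other node has $d$ potential children, so for $l\ge 1$ there are exactly $d_r d^{l-1}$ positions at level $l$. Writing $A_{t,l}=\sum_{v}\mathbf 1\{v\text{ is internal on day }t\}$ as a sum of indicators over these positions and taking expectations, I get $a_{t,l}=d_r d^{l-1}\,P_l(t)$, where $P_l(t)$ is the probability that one fixed level-$l$ position has been infected by day $t$. Since the daily transmission attempts are i.i.d.\ and the tree is symmetric, $P_l(t)$ is the same for every position at a given level, so the whole problem reduces to computing this single probability.

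Next I would compute $P_l(t)$ by a first-passage argument along the unique ancestral path. In a tree the only way to infect $v$ at level $l$ is for the infection to travel down the path $s=u_0,u_1,\dots,u_l=v$. Once $u_{j-1}$ is infected on day $T_{j-1}$, it infects $u_j$ on each subsequent day independently with probability $p_i$, so the waiting time $T_j-T_{j-1}$ is an independent first-success geometric with parameter $p_i$. Conditioning on the exact day $\tau=T_l$ at which $v$ becomes infected, every admissible strictly increasing sequence $0=T_0<T_1<\dots<T_l=\tau$ contributes probability $p_i^{\,l}(1-p_i)^{\tau-l}$, since there are exactly $l$ successful transmissions and $\sum_j(T_j-T_{j-1}-1)=\tau-l$ failed attempt-days in total. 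The number of such sequences is the number of ways to choose the $l-1$ intermediate times from $\{1,\dots,\tau-1\}$, namely $\binom{\tau-1}{l-1}$, so $\P(T_l=\tau)=\binom{\tau-1}{l-1}p_i^{\,l}(1-p_i)^{\tau-l}$.

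Summing over $\tau$ from $l$ to $t$ and reindexing $m=\tau-1$ gives $P_l(t)=p_i^{\,l}\sum_{m=l-1}^{t-1}\binom{m}{l-1}(1-p_i)^{m-l+1}$; multiplying by $d_r d^{l-1}$ reproduces the claimed expression exactly. Because the sum is empty when $t<l$, this simultaneously yields $a_{t,l}=0$ for $l>t$, and $a_{t,0}=1$ is immediate since the root is always present and is the only level-$0$ position. The case $l=1$ needs no separate treatment: the $d_r$ prefactor together with $d^{0}=1$ correctly counts the $d_r$ children of the root, and the formula collapses to $a_{t,1}=d_r(1-(1-p_i)^t)$ as it should.

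As an alternative and as a sanity check one can instead proceed by induction on $t$. The model yields the profile recursion $a_{t,l}=(1-p_i)a_{t-1,l}+p_i d\,a_{t-1,l-1}$ for $l\ge 2$, together with $a_{t,1}=(1-p_i)a_{t-1,1}+p_i d_r$, obtained from the fact that the expected number of external nodes at level $l$ in the closure $\bar G_{t-1}$ equals $d\,a_{t-1,l-1}-a_{t-1,l}$ (respectively $d_r-a_{t-1,1}$ at the root), each retained independently with probability $p_i$; one then checks that the closed form satisfies this recursion using Pascal's rule $\binom{m}{l-1}=\binom{m-1}{l-1}+\binom{m-1}{l-2}$. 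I expect the main obstacle to be bookkeeping rather than depth: on the recursion route, correctly counting external nodes (subtracting the already-internal children and isolating the root's special degree $d_r$); on the probabilistic route, justifying the independence of the geometric waiting times and enumerating the ordered infection-time sequences exactly. Both routes turn on the same combinatorial fact, the $\binom{\tau-1}{l-1}$ count of intermediate infection times, so that is where I would concentrate the care.
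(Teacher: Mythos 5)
Your proof is correct, and it takes a genuinely different route from the paper's. The paper derives the one-step recursions $a_{t,1}=(1-p_i)a_{t-1,1}+d_rp_i$ and $a_{t,l}=(1-p_i)a_{t-1,l}+dp_i\,a_{t-1,l-1}$ (exactly the ones you mention as a sanity check) and then closes them with a bivariate generating function $f(x,y)=\sum_{t,l\ge1}a_{t,l}x^ty^l$: it derives and solves the functional equation to get $f(x,y)=d_rp_ixy/\bigl((1-x)(1-(1-p_i)x-dp_ixy)\bigr)$, expands into a power series, and reads off coefficients after a careful change of summation variables, following the generating-function methodology of \cite{feng2018profile,mahmoud2021profile}. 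You instead compute $a_{t,l}$ directly: linearity of expectation over the $d_rd^{l-1}$ level-$l$ positions of the fixed infinite tree reduces everything to the first-passage probability $P_l(t)$ along the unique ancestral path, which you identify as a negative-binomial CDF via $\P(T_l=\tau)=\binom{\tau-1}{l-1}p_i^l(1-p_i)^{\tau-l}$. This is a valid and more elementary argument; it explains the binomial coefficient combinatorially, handles the boundary cases $a_{t,0}=1$ and $a_{t,l}=0$ for $l>t$ uniformly as an empty sum, and gives the closed form $a_{t,1}=d_r\bigl(1-(1-p_i)^t\bigr)$ for free. Two points deserve the care you already flag: (i) the independence of the geometric increments should be stated as a conditional statement (given $T_{j-1}$, the increment $T_j-T_{j-1}$ is geometric and independent of the past, because retention coin-flips are independent across external nodes and days), which is the same independence the paper invokes to write its Binomial increments; (ii) your position count $d_rd^{l-1}$ uses the convention that every non-root internal node has $d$ children, which is precisely the convention the paper's own proof adopts (its recursion uses $dA_{t-1,l-1}$ potential children at level $l$), so both arguments prove the same statement. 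What the paper's heavier machinery buys is a template that extends mechanically to other profile statistics of RETs in the style of the cited literature; what your approach buys is brevity and a transparent probabilistic meaning for the formula.
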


\begin{corollary}
\label{corollary:at}
In the RET$(p_i, d_r, d)$, let $a_{t}$ be the expectation of \eqref{eq:def_total}, as in Definition~\ref{def:total}. For $t\geq0$, 

\begin{equation}
    a_{t} = 1 + d_r\frac{(1-p_i+dp_i)^t - 1}{d-1}. \label{lemma:total}
\end{equation}
\end{corollary}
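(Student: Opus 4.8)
The plan is to compute $a_t$ directly by summing the per-level expectations $a_{t,l}$ supplied by Theorem~\ref{theorem:a}. By Definition~\ref{def:total}, $a_t = \sum_{l=0}^{\infty} a_{t,l}$, and since $a_{t,l}=0$ for $l>t$, this is a finite sum $a_t = a_{t,0} + \sum_{l=1}^{t} a_{t,l} = 1 + \sum_{l=1}^{t} a_{t,l}$. So the whole task reduces to evaluating $\sum_{l=1}^{t} a_{t,l}$ in closed form.

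First I would substitute the explicit formula for $a_{t,l}$ from Theorem~\ref{theorem:a} and reindex with $j = l-1$, writing
\begin{equation}
\sum_{l=1}^{t} a_{t,l} = d_r p_i \sum_{j=0}^{t-1} \sum_{m=j}^{t-1} \binom{m}{j} (d p_i)^{j} (1-p_i)^{m-j}.
\end{equation}
The key step is then to interchange the two sums. For fixed $m$ ranging over $0,\dots,t-1$, the index $j$ runs over $0,\dots,m$, which lets me collapse the inner sum by the binomial theorem:
\begin{equation}
\sum_{j=0}^{m} \binom{m}{j} (d p_i)^{j} (1-p_i)^{m-j} = \bigl(d p_i + 1 - p_i\bigr)^{m} = (1 - p_i + d p_i)^{m}.
\end{equation}
This is the one genuinely substantive move; everything after it is bookkeeping.

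What remains is a geometric series. Setting $r = 1 - p_i + d p_i$, I would use $\sum_{m=0}^{t-1} r^{m} = (r^{t}-1)/(r-1)$ together with the observation that $r - 1 = (d-1)p_i$, so that the factor $p_i$ in front cancels cleanly:
\begin{equation}
d_r p_i \sum_{m=0}^{t-1} r^{m} = d_r p_i \cdot \frac{r^{t}-1}{(d-1)p_i} = d_r \frac{(1 - p_i + d p_i)^{t} - 1}{d-1}.
\end{equation}
Adding back the $a_{t,0}=1$ term yields exactly the claimed expression for $a_t$.

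I do not expect any serious obstacle here: the result is essentially a computation, and the only care needed is in the index arithmetic (the shift $j=l-1$ and the summation swap) and in noticing the cancellation $r-1=(d-1)p_i$ that removes the singular-looking $p_i$ factor. One should also implicitly treat $d=1$ as the removable-singularity limit (where the geometric series gives $d_r p_i t$), but for the regular-tree parameters of interest $d \ge 2$, so the formula as stated applies directly.
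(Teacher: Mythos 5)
Your proof is correct and follows essentially the same route as the paper's: split off the $a_{t,0}=1$ term, substitute the formula from Theorem~\ref{theorem:a}, swap the order of summation, collapse the inner sum by the binomial theorem, and finish with the geometric series. The only cosmetic difference is that you reindex with $j=l-1$ before swapping the sums (and you make the cancellation $r-1=(d-1)p_i$ explicit), whereas the paper swaps first and then shifts the index; the substance is identical.
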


\subsubsection{Deterministic Exponential Tree with Parameters $p_a, p_h$ and $(c_{t,l})_{t,l \in \mathbb{N}}$}
\label{subsub:DET}

In the $(d_r,d)$-RET model it is still complicated to calculate the distribution of the depth of the first hospitalized node. For this reason, we approximate the RET model by a deterministic time-dependent tree with a prescribed profile. 
\begin{definition}
Let $(c_{t,l})_{t \in \mathbb{N} \bigcup \{-1\},l \in \mathbb{N}}$ be a two-dimensional array with $c_{t,l}=0$ for $t\in\{-1,0\}$ and $l\in \mathbb{N}$, except for $c_{0,0} = 1$, and with $c_{t,l} \geq c_{t,l-1}$ for any $t$ and any $l \geq 1$. Additionally, if we define $c_t = \sum_l c_{t,l}$, then the array $(c_{t,l})$ must satisfy $c_t> c_{t-1}$ for $t\ge0$. Then, we define the Deterministic Exponential Tree (DET) with parameter $(c_{t,l})_{t \in \mathbb{N} \bigcup \{-1\},l \in \mathbb{N}}$, as a time-dependent rooted tree, that has exactly $c_{t,l}$ nodes on level $l$ at time $t$. The edges between the adjacent levels are drawn arbitrarily so that the tree structure is preserved. 
\end{definition}

The formal assumptions on the array $(c_{t,l})$ are simply made to ensure that the DET starts with a single node at $t=0$, that it never shrinks on any level ($c_{t,l} \geq c_{t,l-1}$), and that it grows by at least one node in each time step ($c_t> c_{t-1}$).


We have defined the DET at any given time $t$, however, to determine the length of the transmission path, we are not interested in the DET at any given time, but only when the first hospitalization occurs. 
To compute the distribution of the first hospitalized node, we would like to have an absolute order on the times when the nodes are added, which we do by randomization. We say that on day $t$, nodes are added one by one to the DET, their order given by a uniformly random permutation, and each node is hospitalized with probability $(1-p_a)p_h$ (as in the original DDE model). When the first hospitalization occurs, we stop growing the tree, and we call the resulting (now random) model a stopped DET with parameters $(c_{t,l}), p_a, p_h$. We find the transmission path length distribution on the stopped DET in the next lemma, which we prove in Appendix~\ref{sec:lem:DET}.


\begin{lemma}
\label{lem:DET}
Let us consider the stopped DET model with parameters $(c_{t,l}), p_a, p_h$, and let $h$ denote the first hospitalized node. Then
\begin{align}
\label{eq:DET_lemma}
    \P(d(s,h) = l) = \sum_{t=0}^{+\infty} \frac{c_{t,l}-c_{t-1,l}}{c_t-c_{t-1}}  (1-(1-p_a)p_h)^{c_{t-1}}\left(1-(1-(1-p_a)p_h)^{c_t-c_{t-1}}\right).
\end{align}
\end{lemma}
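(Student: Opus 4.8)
The plan is to decompose by the day $t$ on which the first hospitalization occurs, and then, within that day, to pin down the level of the first hospitalized node via a short exchangeability argument. Throughout I write $q=(1-p_a)p_h$ for the per-node hospitalization probability, so that each of the $c_t$ nodes present at time $t$ carries an independent hospitalization mark of probability $q$. Recall that the $c_t-c_{t-1}\ge 1$ nodes added on day $t$ split into $c_{t,l}-c_{t-1,l}\ge 0$ nodes on each level $l$, that days are processed in the deterministic order $0,1,2,\dots$, and that the within-day order is a uniformly random permutation independent of the marks.

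First I would record the decomposition: the event $\{d(s,h)=l\}$ is the disjoint union over $t\ge 0$ of the events $E_{t,l}=\{h \text{ is added on day } t \text{ and } \mathrm{level}(h)=l\}$. The node $h$ is added on day $t$ precisely when none of the $c_{t-1}$ nodes present strictly before day $t$ is hospitalized and at least one of the $m:=c_t-c_{t-1}$ nodes added on day $t$ is hospitalized. Since the marks on these two disjoint sets of nodes are independent, the first condition contributes the factor $(1-q)^{c_{t-1}}$, and conditioned on it, $h$ is simply the first hospitalized node among the day-$t$ nodes in the random within-day order.

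The heart of the argument is the following claim: among $m$ nodes carrying i.i.d. $q$-marks and arranged in a uniformly random order, conditioned on at least one mark being present, the first marked node lies on level $l$ with probability exactly $m_l/m$, where $m_l:=c_{t,l}-c_{t-1,l}$. To see this I would invoke symmetry. The joint law of (marks, permutation) is invariant under any relabeling of the $m$ nodes, because the marks are i.i.d. and the permutation uniform; hence the probability that a \emph{given} node is ``the first marked node'' (marked, and preceding every other marked node in the order) does not depend on which node it is. Calling this common value $\rho$ and summing over all nodes gives $m\rho=\P(\text{at least one mark})=1-(1-q)^m$, while summing only over the $m_l$ level-$l$ nodes gives the probability that the first marked node is on level $l$, namely $m_l\rho=\tfrac{m_l}{m}\bigl(1-(1-q)^m\bigr)$. (Equivalently, conditioning on the nonempty marked set $S$, its first element under a uniform permutation is uniform on $S$, so the conditional probability of landing on level $l$ is the fraction of $S$ on that level, which one then averages.)

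Combining the pieces gives $\P(E_{t,l})=(1-q)^{c_{t-1}}\cdot\tfrac{m_l}{m}\bigl(1-(1-q)^{m}\bigr)$ with $m=c_t-c_{t-1}$ and $m_l=c_{t,l}-c_{t-1,l}$, and summing over $t$ yields exactly \eqref{eq:DET_lemma}. The only non-routine step is the exchangeability claim of the third paragraph; the remainder is bookkeeping, including checking that the denominators $c_t-c_{t-1}$ are positive by the standing assumption $c_t>c_{t-1}$, and that the $t=0$ term reduces to the source being hospitalized with probability $q$ (so that $\P(d(s,h)=0)=q$). I expect no genuine obstacle beyond stating the symmetry argument cleanly.
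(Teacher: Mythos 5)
Your proposal is correct and follows essentially the same route as the paper's proof: both decompose over the day $t$ of the first hospitalization, factor out the probability $(1-(1-p_a)p_h)^{c_{t-1}}$ that no earlier node is hospitalized, and use the fact that, conditioned on the first hospitalization falling on day $t$, the first hospitalized node is uniform among the $c_t-c_{t-1}$ nodes added that day. The only difference is that you spell out this last uniformity step with an explicit exchangeability argument, which the paper simply asserts from the model definition.
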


We would like to set $c_{t,l}$ so that the DET is close to the RET described in Section~\ref{subsub:RET}. For equation~\eqref{eq:DET_lemma} to make sense, we should substitute integer values for $c_{t,l}$, however, for an approximation the equation can also be evaluated for fractional values as well.

\begin{remark}
\label{rem:subs}
If we substitute $c_{t,l}=a_{t,l}$ and $c_t=a_t$ in equation \eqref{eq:DET_lemma}, where $a_{t,l}$ is given in Theorem \ref{theorem:a} and $a_{t}$ is computed in Corollary~\ref{corollary:at}, then we get the expression
\begin{align}
\label{eq:rem_approx}
d_rp_i^{l-1}d^{l-1} \sum_{t=l}^{+\infty} \frac{\binom{t-1}{l-1}(1-p_i)^{t-l}}{(1-p_i+dp_i)^{t-1}}(1-(1-p_a)p_h)^{1 + d_r \frac{(1-p_i+dp_i)^{t-1} - 1}{d-1}} \left(1-(1-(1-p_a)p_h)^{d_r(1-p_i+dp_i)^{t-1})}\right),
\end{align}
which approximates the distribution of the transmission path length in the $(d_r, d)$-ary RET stopped at the first hospitalization.
\end{remark}

\section{Simulation Results}
\label{sec:simulation}

\subsection{Baseline Algorithms}
\subsubsection{Non-adaptive Baseline: Dynamic Message Passing}


There are few sensor-based source detection algorithms that are compatible with time-varying networks in the literature \cite{huang2017source,jiang2016rumor,fan2020identifying}. The most promising one among these algorithms \cite{jiang2016rumor} has a close resemblance to the a previous work of \cite{lokhov2014inferring} on Dynamic Message Passing (DMP) algorithms. Given the initial conditions on the identity of the source node and its time of infection, the DMP algorithm approximates the marginal distribution of the outcome of an epidemic at some later time $t$. The algorithm is exact on tree networks, and it computes a good approximation when there are not too many short cycles in the network. Therefore, the DMP algorithm can be used to approximate the likelihood of the observed symptom onset times for any (source,time) pair. Due to its flexibility, we were able to adapt the DMP algorithm to the SDCTF (see Appendix \ref{sec:DMP_all} for more details).

Originally, the DMP was applied to the source detection problem by computing the likelihood values for all possible (source,time) pairs, and then choosing the source node from the most likely pair as the estimate \cite{lokhov2014inferring}. However, testing all (source,time) pairs increases the time complexity of the algorithms potentially by a factor of $N^2$, which makes the algorithm intractable in many applications. 
Jiang et. al. \cite{jiang2016rumor} proposed a very similar algorithm to the DMP equations (which is unfortunately not exact even on trees), and solved the issue of intractability by a heuristic preprocessing step to the DMP algorithm. This preprocessing step, identifies a few candidate (source,time) pairs, by spreading the disease backward from the observations in a deterministic way (called reverse dissemination). Since we already approximate our data-driven model (DCS) by an epidemic model with deterministic transition times (DDE), it is natural for us to also implement the deterministic preprocessing step proposed by \cite{jiang2016rumor}. We produce 5 (source,time) pairs which are feasible for the 5 earliest symptom onset time observations (see Appendix~\ref{sec:DMP_feasible} for more details). It would have been ideal to run the algorithms for more than 5 pairs, but this was made impossible by the runtimes becoming very high. We run therefore our implementation of the DMP algorithm with the previously computed feasible (source,time) pairs as initial conditions to find the most likely source candidate.

The source estimation algorithms developed using the DMP algorithm do not specify how the sensors should be selected, and therefore place these non-adaptive sensors randomly. We refer to the resulting algorithm as random+DMP. The number of sensors is set so that it always exceeds the number that LS/LS+ would use. The simulation results are shown in Figure \ref{fig:baseline_plots} for the DDE+HNM model. Importantly, the deterministic preprocessing step of \cite{jiang2016rumor} is compatible with time-varying networks, which allows us to run the algorithm for the DCS+TU model as well (see Figure \ref{fig:DCSplots}).

\subsubsection{Adaptive Baseline: Size-Gain}

The Size-Gain (SG) algorithm was developed for epidemics which spread deterministically \cite{zejnilovic2015sequential}, and has been later extended to stochastic epidemics \cite{spinelli2017back}. It works by narrowing a candidate set based on a deterministic constraint. If $v_1, v_2$ are symptomatic observations, then $s_c$ is in the candidate set of SG if and only if
\begin{equation}
\label{eq:SG}
    |(t_{v_2}-t_{v_1})- (d(v_2,s_c) - d(v_1,s_c))| < \sigma(d(v_2,s_c)+d(v_1,s_c)),
\end{equation}
where $\sigma$ is the standard deviation of the infection time of a susceptible contact. If one of the observations, say $v_2$, is negative, then SG uses a condition almost identical to equation \eqref{eq:SG}, except that the absolute value is dropped, since a negative observation at time $t_{v_2}$ is only a lower bound on the true symptom onset time of $v_2$. These deterministic conditions are checked for every symptomatic-symptomatic or symptomatic-negative pair $(v_1, v_2)$ to determine if $s_c$ can be part of the candidate set. Next, SG places the next sensor adaptively at the node which reduces the candidate set by the largest amount in expectation (assuming a uniform prior on the source and its infection time), and it terminates when the candidate set shrinks to a single node. Note that the SG algorithm can fail if at least one of the deterministic conditions in equation~\eqref{eq:SG} is violated for some $(v_1, v_2)$ because of the randomness of the epidemic.

We use the existing implementation of the SG algorithm by \cite{spinelli2017back}, and adapt it to the SDCTF. We incorporate asymptomatic-symptomatic and asymptomatic-negative observations $(v_1,v_2)$ the same way as symptomatic-negative are incorporated; we drop the absolute value sign in equation \eqref{eq:SG}, because an asymptomatic observation at time $t_{v_1}$ is only an upper bound on the true symptom onset time of $v_1$. We impose the same daily limit to the number of sensors that can be placed by the SG algorithm in a single day as for the LS/LS+ algorithm, and if the candidate set size does not shrink to one on the day when both LS and LS+ have already provided their estimates, then the SG algorithm must make a uniformly random choice from the current candidate set as its source estimate. The simulation results are shown in Figure \ref{fig:baseline_plots} for the DDE+HNM model. We do not implement the SG algorithm for the DCS+TU model, because its runtime is too high, and because it is not clear how it should be implemented for time-varying networks.

\subsection{Comparison with Baselines}
\label{sec:comp_baseline}

\begin{figure}[h]
\begin{center}
 \includegraphics[width=\textwidth]{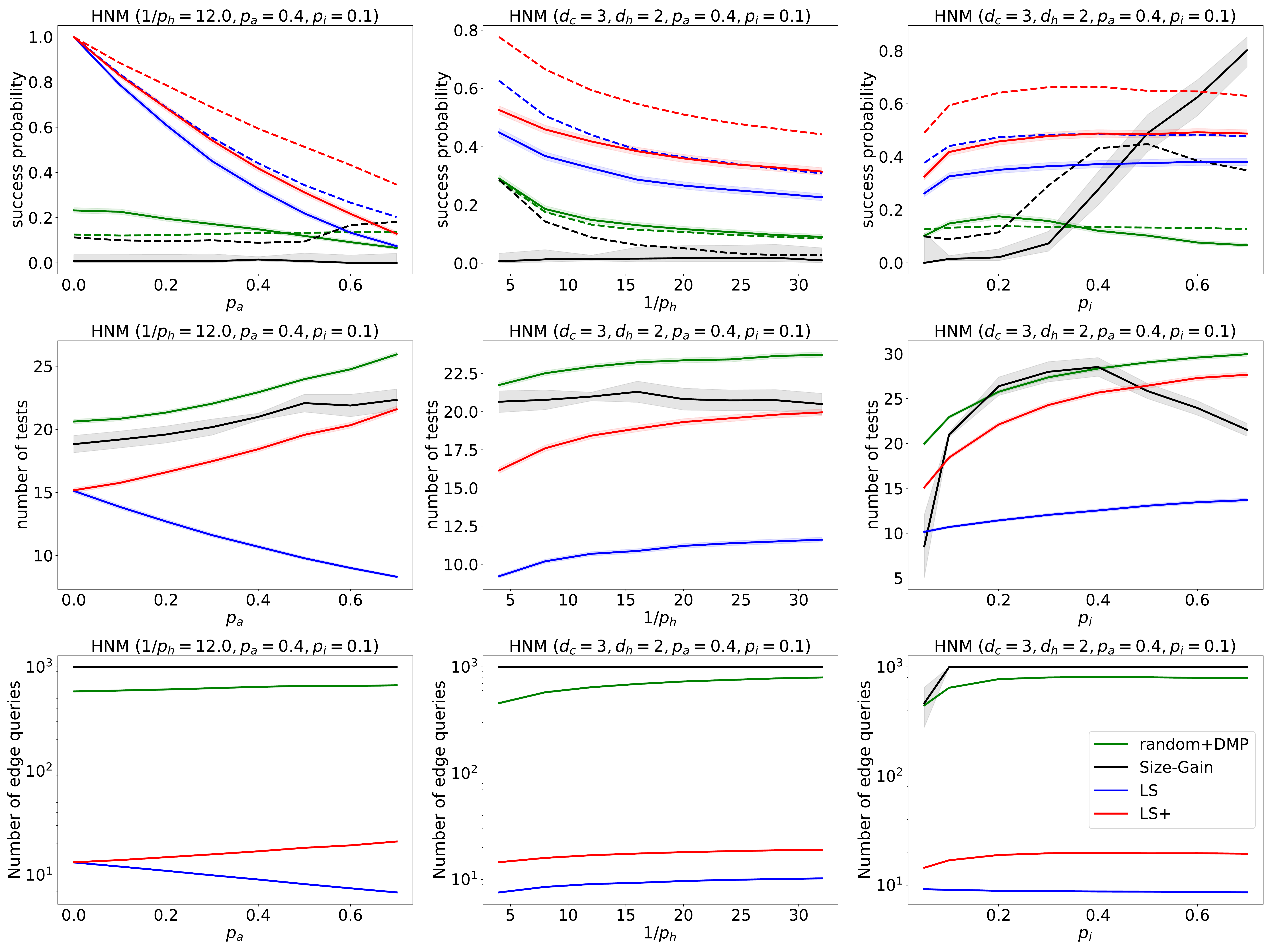}
  \caption{The performance of the algorithms LS, LS+, R and SG if the metric is the probability of finding the source (solid curves) or the first symptomatic patient (dashed curves). The simulations were computed on a population of $n=400$ individuals in the DDE model on the HNM, and each datapoint is the average of $4800$ independent realizations except for the SG algorithm, which was run with $192$ independent realizations. The confidence intervals for the success probabilities are computed using the Wilson score interval method, and for the tests and the queries using the Student's t-distribution.}
  \label{fig:baseline_plots}
  \end{center}
\end{figure}

We show our simulation results comparing the random+DMP, SG, LS and LS+ algorithms in Figure~\ref{fig:baseline_plots}. In the first row of Figure \ref{fig:baseline_plots}, we show the accuracy of the algorithms with solid curves. Since the LS/LS+ algorithms cannot detect an asymptomatic source, we also show what the accuracy would look like if the goal of the SDCTF was to detect the first symptomatic agent with dashed lines. It is clear that in both metrics and across a wide range of parameters, the LS+ algorithm performs best, followed by LS, next random+DMP, and finally SG. The only exception is for high values of $p_i$, where SG performs best. The good performance of SG for these parameters is expected, because SG was originally developed for deterministically spreading epidemics (i.e., $p_i=1$).

In the second row of Figure \ref{fig:baseline_plots}, we show the number of test/sensor queries used by the algorithms. LS uses the fewest tests, followed by LS+. The random+DMP and SG algorithms always use more tests than LS/LS+, except for large values of $p_i$. Finally, in the last row of Figure \ref{fig:baseline_plots} we show the number of contact (or in this case edge) queries used by the algorithms. Again, LS uses fewer queries than LS+, while both the random+DMP and SG algorithms query essentially the entire network.

Figure \ref{fig:baseline_plots} shows that the LS/LS+ algorithms are fairly robust to changes in the parameters of the model, except for the parameter $p_a$. Indeed, if there are many asymptomatic nodes in the network, then source detection becomes very challenging. It may be surprising that as $p_a$ grows, the number of tests that LS uses decreases, contrary to LS+. This is because as $p_a$ grows, the LS algorithm gets stuck more rapidly, while the LS+ algorithm compensates for the presence of asymptomatic nodes by using more test/sensor queries.

\subsection{Comparison of Simulations and Theoretical Results}
\label{sec:comp_theory}

\begin{figure}[h]
\begin{center}
 \includegraphics[width=\textwidth]{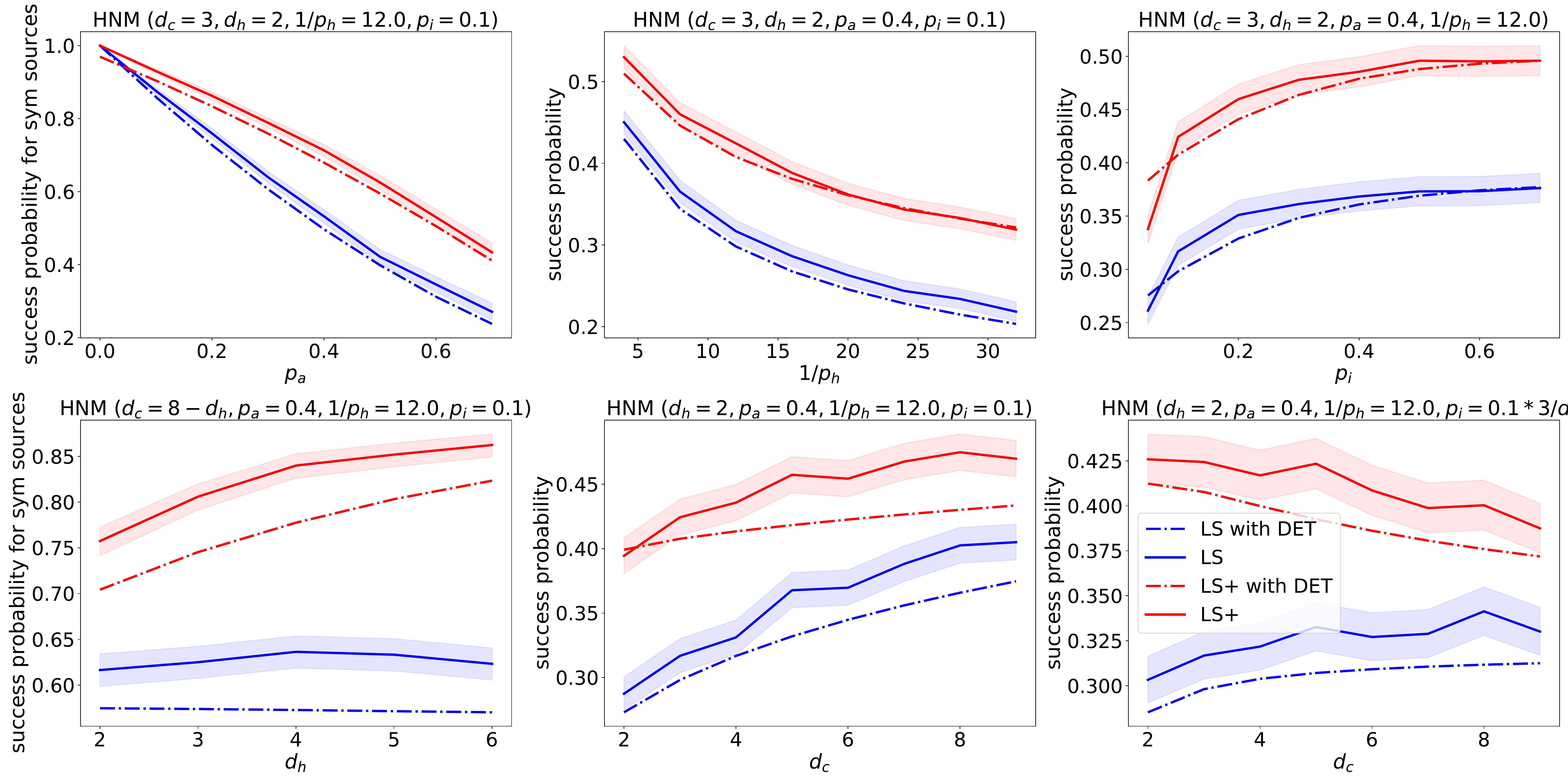}
  \caption{The probability of success of the LS and LS+ algorithms (solid curves) and their theoretical estimate (dash-dotted curves) with the success probabilities computed in Lemma \ref{lem:LSsucc} and Theorem \ref{thrm:LSp_suc}, while the transmission path distribution computed in equation \eqref{eq:rem_approx}. The simulation results were generated using the DDE model on HNM networks of size $n=1000$ with $4800$ independent samples. The $95\%$ confidence intervals are computed using the Wilson score interval method.}
  \label{fig:theory_plots}
  \end{center}
\end{figure}

The analytic results from Section~\ref{sec:theory} are in good agreement with the simulation results in Figure \ref{fig:theory_plots}. We also experiment with changing the parameters $d_h$, $d_c$ while keeping all the parameters fixed, and with changing $d_c$ while keeping the product $d_cp_i$ fixed. We observe that LS is not affected by the parameter $d_h$, whereas LS+ performs better with a higher $d_c$, which is expected because LS+ leverages the household structure of the network to improve over LS. Somewhat surprisingly, we also observe that a higher $d_c$ also improves the performance of both algorithms. This can be explained by the fact that a larger $d_c$ implies that there are more nodes in the close neighborhood of the source, which results in shorter transmission paths, making source detection less challenging. Finally, if we increase $d_c$ but keep $d_cp_i$ fixed, the performance of the algorithms does not change as much, which confirms the intuition that it is the number of infections caused by an infectious node in a single day that matters the most (as we discussed in Section~\ref{sec:boe_sec}).

\subsection{Simulations on the DCS Model}
\label{sec:simDCS}

\begin{figure}[h]
\begin{center}
 \includegraphics[width=\textwidth]{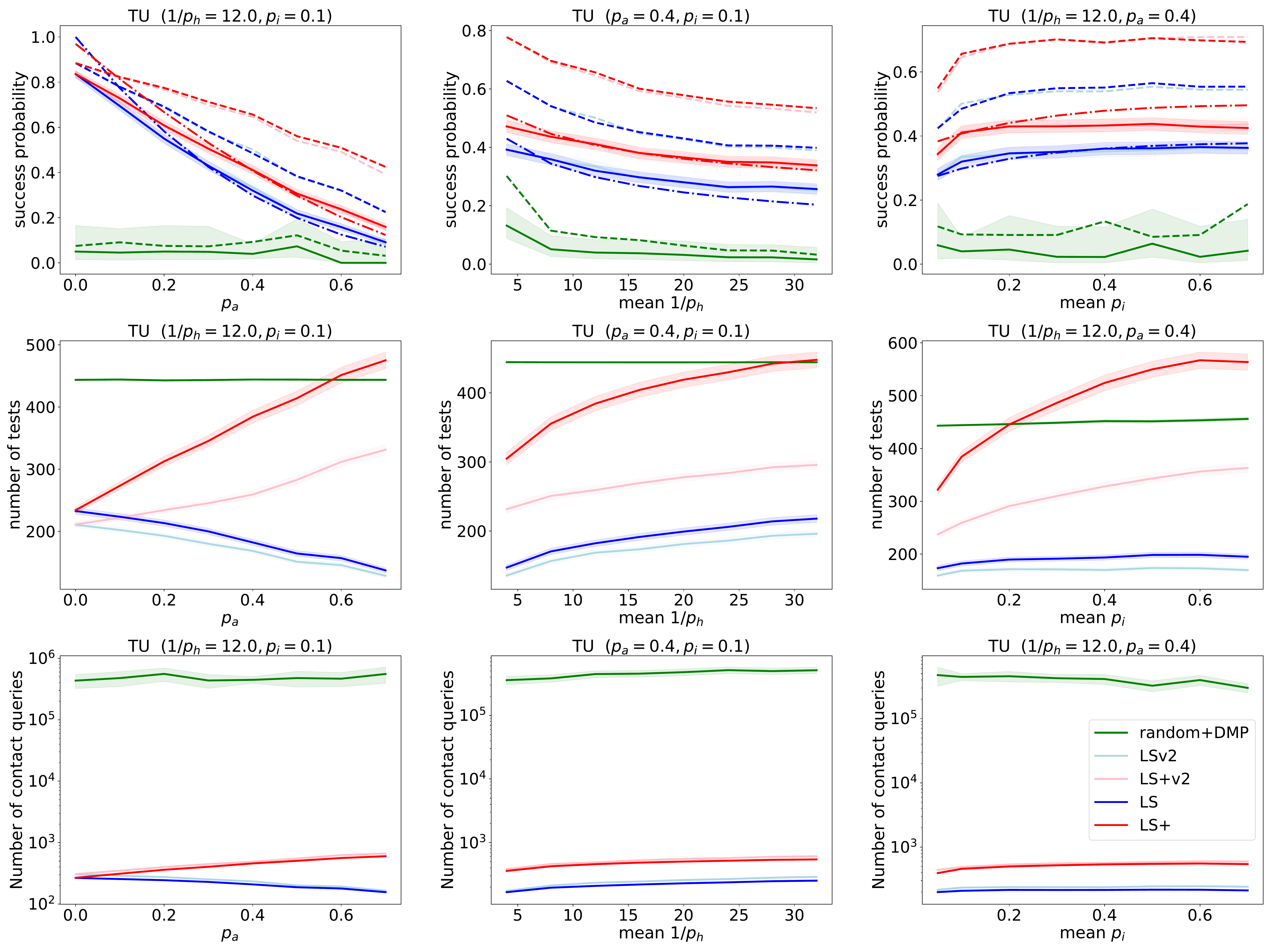}
  \caption{The performance of the algorithms LS, LS+ and random+DMP on the DCS model with the Tubingen dynamics if the metric is the probability of finding the source (solid curves) or the first symptomatic patient (dashed curves), together with the theoretical results (dash-dotted lines), as shown in Figure \ref{fig:theory_plots}. The simulations were computed on a population of $n=9054$ individuals, and each datapoint is the average of $2400$ independent realizations for the LS/LS+/LSv2/LS+v2 algorithms, and $48$ independent realizations for the random+DMP algorithm. The default population and infection parameters were selected to match the population and COVID-19 infection datasets of Tubingen. The confidence intervals for the success probabilities are computed using the Wilson score interval method, and for the tests and the queries using the Student's t-distribution.}
  \label{fig:DCSplots}
  \end{center}
\end{figure}

We show our simulation results on our most realistic DCS+TU model in Figure \ref{fig:DCSplots}. We make very similar observations on this model as the ones that we have made on the DDE+HNM model in Sections~\ref{sec:comp_baseline} and \ref{sec:comp_theory}, which shows that the LS/LS+ algorithms and our analysis of their performance is robust to changes in the epidemic and network models.

In the DCS+TU model, we used a fixed limit on the number of sensors that the random+DMP model selects, instead of setting the limit based on the LS+ algorithm. As a result, for a few parameters the LS+ algorithm used more tests than the random+DMP model. However, we note that by updating the candidate node immediately after an earlier symptom onset time is revealed (see Section \ref{sec:localsearch}), we can essentially cut the number of required tests for the LS+ algorithm by half (LSv2 and LS+v2), without sacrificing the performance of the algorithms.

\section{Discussion}

\label{sec:discussion}

We have introduced the LS and LS+ algorithms in the SDCTF, and we have used a sequence of models on which we can compute their accuracy (probability of finding the correct source) rigorously. We find that both LS and LS+ outperform baseline algorithms, even though the baselines essentially query all contacts on a transmission path between agents, while LS and LS+ query only a small neighborhood of the source. One could argue that LS and LS+ beat the baseline algorithms only because we benchmark them in our own framework, which is different from the framework for which the baseline algorithms were developed. However, we argue that the LS/LS+ algorithms are robust to changes in the framework due to their simplicity, and we support our argument by simulation results. The runtimes of the LS/LS+ algorithms are also much lower than the baselines and do not depend on the network size since they are local algorithms - as opposed to the baselines, which have quadratic or even larger dependence on the network size. The ``low-tech'' approach in the design of the LS/LS+ algorithms increases their potential to be implemented in real-world scenarios, possibly even in a decentralized way, similarly to contact tracing smart phone applications \cite{troncoso2020decentralized}, which is an interesting direction for future work.

\bibliographystyle{ACM-Reference-Format}  
\bibliography{references}  


\appendix

\section{Additional Proofs}

\begin{figure}[h]
\begin{center}
 \includegraphics[width=\textwidth]{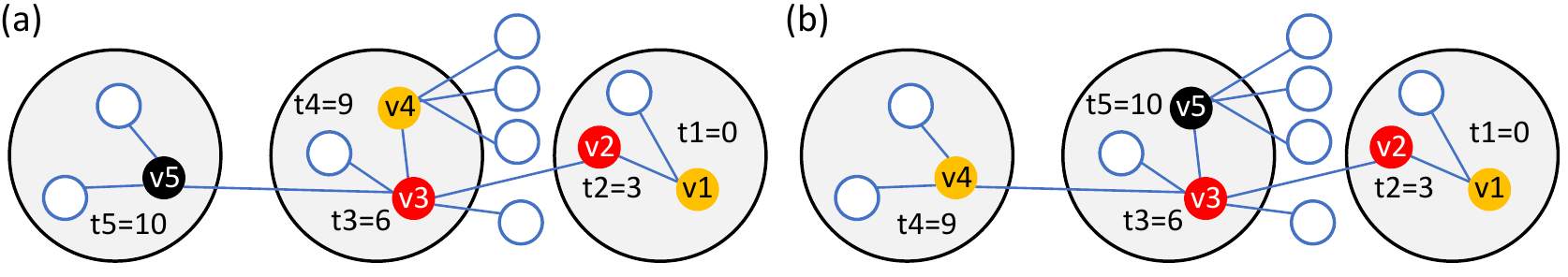}
  \caption{Illustration for Lemma \ref{lem:LS,LSp,trees} using the same coloring as Figure \ref{fig:mobility_models}~(a). (a) An example for an epidemic where among the nodes of the transmission path $(v_1, v_2, v_3, v_5)$, the middle household contains no symptomatic node (only the asymptomatic node $v_3$), but the LS+ algorithm still succeeds. Indeed, at iteration 0 we set $s_{c,0}=v_5$, after which we find that $v_3$ is asymptomatic, and next that $v_2$ is asymptomatic and $v_4$ is symptomatic, with a lower symptom onset time then $v_5$. Hence, in iteration 1 we set $s_{c,1}=v_4$, and we find that $v_3, v_2$ are asymptomatic and $v_1$ is is symptomatic, with a lower symptom onset time then $v_4$. Finally, in iteration 2 we set $s_{c,2}=v_1$, and we find $s_c'=v_1=s_{c,2}$, which implies that the algorithm stops, and returns the correct source $v_1$. (b) An example for an epidemic where the LS+ algorithm would fail if we would update the candidate before the test queue becomes empty. Similarly to subfigure (a), in iteration 0 of the algorithm first learns about asymptomatic node $v_3$ and next about asymptomatic node $v_2$ and symptomatic node $v_4$. If the algorithm updates the candidate to $v_4$ and continues further, instead of scheduling the tests of the household members of $v_2$, then it is not hard to check that $v_4$ will be the final estimate and the algorithm fails. However, if the algorithm waits until the test queue becomes empty and tests the household members of $v_2$, then $v_1$ becomes the next candidate and the algorithm succeeds.}
  \label{fig:LSp_cases}
  \end{center}
\end{figure}

\subsection{Proof of Lemma \ref{lem:LS,LSp,trees}}
\label{sec:LS,LSp,trees_app}

We start by restating the lemma here for convenience.

\begin{lemma}
\label{lem:LS,LSp,trees_app}
In the RB tree network, the LS algorithm succeeds if and only if all nodes on the transmission path are symptomatic, and the LS+ algorithm succeeds if among the nodes of the transmission path, there exists a symptomatic node in each household, and the source is symptomatic.
\end{lemma}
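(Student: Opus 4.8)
The plan is to prove the two statements about LS and LS+ separately, in both cases reasoning about how the algorithm traverses the RB tree along the reverse transmission path from the first hospitalized node $h$ back towards the source $s$.

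First I would handle the LS algorithm. The key observation is that LS maintains a single candidate $s_{c,-1}$, which is always a symptomatic node, and in each iteration it queries the backward contacts of the current candidate and promotes the neighbor with the earliest reported symptom onset time. On a tree, the true infector of a node $v$ is the unique neighbor of $v$ closer to the source, and because symptom onset times in $\mathrm{DDE}_{\mathrm{NR}}$ are monotone along the transmission path (the infector is always infected one $T_E$ step earlier), the infector is exactly the backward contact with the earliest reported time. Thus, provided every node on the transmission path is symptomatic, LS at the candidate $v_{i}$ correctly identifies $v_{i-1}$ as the next candidate and walks all the way down to $s$, where it stabilizes and succeeds. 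For the converse, I would argue that if some intermediate node $v_i$ on the path is asymptomatic, then when LS reaches $v_{i+1}$ it will not learn an earlier symptom onset time from $v_i$ (asymptomatic nodes reveal no time), so the true infector is never promoted; since on a tree no other neighbor gives a valid earlier time lying on the true path, the algorithm gets stuck at $v_{i+1}$ and fails. This establishes the ``if and only if'' for LS.

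For LS+ the argument is more delicate because LS+ additionally tests the household members and backward contacts of asymptomatic nodes, so it can ``jump over'' an asymptomatic node within the same household or skip an asymptomatic red node to reach the next household. The plan is to show that, under the hypothesis that each household traversed by the transmission path contains at least one symptomatic node (and $s$ is symptomatic), LS+ can always make progress towards the source. Here I would use Remark~\ref{rem:RBtree}: each household on the path contributes one or two path-nodes, and I would perform a case analysis on the color and symptom status of the path-nodes within the current household. When the current candidate's infector is symptomatic, LS+ behaves like LS and promotes it. When the infector is asymptomatic, LS+ enters its household members and its backward contacts into the test queue; since the household is guaranteed to contain a symptomatic node on the path, or the next household's relevant node is symptomatic, LS+ eventually discovers a symptomatic node with strictly earlier symptom onset time and promotes it, thereby advancing at least to the next household. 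The fact that the iteration runs until the test queue is emptied (rather than stopping at the first update) is crucial here and I would invoke it explicitly, pointing to Figure~\ref{fig:LSp_cases}~(b) to explain why an early stop could fail.

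The main obstacle I anticipate is the LS+ case analysis: I must carefully enumerate the possible configurations of symptomatic/asymptomatic and red/blue nodes within and across consecutive households on the transmission path, and verify in each case that the symptomatic node guaranteed by the hypothesis is actually reachable by the queue-expansion rules of LS+ (household members of asymptomatic nodes, and backward contacts of asymptomatic household members of the candidate). I would need to check that the backward-contact time windows defined in Section~\ref{sec:localsearch} indeed capture the correct infector across a household boundary, and that no competing branch off the transmission path is promoted ahead of the true path-node. Because the statement for LS+ is only a one-directional implication (as the accompanying remark and Figure~\ref{fig:LSp_cases}~(a) emphasize), I only need to exhibit a successful trajectory rather than characterize all failures, which keeps the case analysis tractable.
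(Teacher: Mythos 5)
Your proposal is correct and follows essentially the same route as the paper's proof: an induction along the reverse transmission path for LS (using that on a tree the infector is the unique neighbor with an earlier symptom onset time), and for LS+ a household-based case analysis relying on Remark~\ref{rem:RBtree}, the two queue-expansion rules, and the fact that an iteration runs until the test queue is empty. The case analysis you anticipate as the main obstacle is exactly what the paper carries out: the household hypothesis forces at most two consecutive asymptomatic nodes on the path, giving three cases in each of which the next symptomatic path node is tested and then promoted, because every other tested node has that node on its path to the source and hence a later onset time.
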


\begin{proof}
Throughout the proof we assume that there is no limitation on the number available tests. We can make this assumption because in the SDCTF there is only a daily limit on the number tests, there is no limitation on the number of days, and neither the LS nor the LS+ algorithms proceed in an iteration until the test queue becomes empty, which implies that all nodes that enter the test queue get eventually tested.

Suppose that the LS algorithm succeeds. Then the list of candidate nodes $s_c$ at different iterations forms a path that consists entirely of symptomatic nodes between the source and the first hospitalized node. In tree networks, the transmission path is the only path between the source and the first hospitalized node, which yields the "only if" part of the statement on the LS algorithm.

Next, suppose that all nodes on the transmission path are symptomatic. Then, we claim that the candidate node $s_{c,i}$ computed in the $i^{th}$ iteration of the LS algorithm is $v_{l-i}$, the $i^{th}$ node of the reverse transmission path. Our claim is definitely true for $i=0$, because $s_{c,0}$ is initialized to be the first hospitalized node $v_l$. Then, the proof proceeds by induction. By the induction hypothesis, in the $i^{th}$ step, $s_{c,i}=v_{l-i}$, and since we are on a tree, the symptom onset time of $v_{l-(i+1)}$ (which is revealed because all nodes on the transmission path are symptomatic by assumption) is the only symptom onset time among the neighbors of $s_{c,i}$ that have a lower symptom onset time than $s_{c,i}$ itself. Therefore $s_c'=v_{l-(i+1)}$, and $s_{c,i+1}$ is updated to be $v_{l-(i+1)}$ in the beginning of the next iteration, which proves that the induction hypothesis holds until the source is reached.

Finally, suppose that among the nodes of the transmission path, there exists a symptomatic node in each household, and the source is symptomatic. Let us denote by $w_i$ the $i^{th}$ symptomatic node of the \emph{reverse} transmission path. Then, we claim that the candidate list $s_{c,i}$ computed in the $i^{th}$ iteration of the LS+ algorithm equals $w_i$. Similarly to the case of the LS algorithm, the $i=0$ case holds by definition, and we proceed by induction. Suppose that $s_{c,i}=w_i$. It will also be useful to define the index of $w_i$ on the \emph{forward} transmission path (without skipping asymptomatic nodes). Let $j$ be this index, for which therefore $w_i=v_j$. Now we distinguish 3 cases: (i) $v_{j-1}=w_{i+1}$ is symptomatic, (ii) $v_{j-1}$ is asymptomatic and $v_{j-2}=w_{i+1}$ is symptomatic, and (iii) 
$v_{j-1}$ and $v_{j-2}$ are asymptomatic and $v_{j-3}=w_{i+1}$ is symptomatic. We claim that there are no more cases, and that in all three cases $w_{i+1}$ is tested in the $i^{th}$ iteration of the LS+ algorithm. Case (i) is immediate because all neighbors of $s_{c,i}$ are tested. Case (ii) is only possible if either $v_{j-1} \in H(s_{c,i})$ or $v_{j-2} \in H(v_{j-1})$, otherwise $v_{j-1}$ would be a lone asymptomatic node in a household, which contradicts the assumption that there is a symptomatic node in each household. Since all the contacts of asymptomatic nodes in $H(s_{c,i})$ (see Figure \ref{fig:Alg1_figure}~(d)) and all nodes in the household of asymptomatic nodes are tested in the LS+ algorithm (see Figure \ref{fig:Alg1_figure}~(e)), $v_{j-2}$ must be tested too. Finally, case (iii) is possible only if $v_{j-1} \in H(s_{c,i})$ and $v_{j-3} \in H(v_{j-2})$ both hold, otherwise $v_{j-1}$ or $v_{j-2}$ would be a lone asymptomatic node in a household. Similarly to the previous case, $v_{j-3}$ must be tested (see Figure \ref{fig:Alg1_figure}~(f)). There are no more cases because, by Remark \ref{rem:RBtree}, on the RB tree a transmission path can only have two nodes in each household, and we assumed that there exists a symptomatic node in each household among the nodes of the transmission path.

After we proved that $w_{i+1}$ is tested in the $i^{th}$ iteration of the LS+ algorithm, we must still show that it will be the next candidate $s_{c,i+1}$ for the induction hypothesis to hold. This is true because once the symptom onset time of $w_{i+1}$ is revealed, none of its neighbors are scheduled for testing, and therefore all tested nodes have $w_{i+1}$ on their path to the source, which means that $w_{i+1}$ must have the lowest revealed symptom onset time, and therefore that it will be the next candidate~$s_{c,i+1}$.
\end{proof}

\subsection{Proof of Theorem \ref{thrm:LSp_suc}}
\label{sec:thrm:LSp_suc}

We are going to need prove a few intermediate results before proving Theorem \ref{thrm:LSp_suc}. A first step is to count all the possible paths from the source with a given length. 

\begin{definition}
Let $G(s)$ be the RB tree with parameters $(d_c, d_h)$, and let $s$ be the source. A Red-Blue (RB) path of length $n$ is any path of nodes in $(s=v_0, v_1, ...v_n)$ such that $(v_i, v_{i+1}) \in E'$ for $0\leq i < n$. Let $\mathcal{C}_n$ be the set of RB paths of length $n$.
\end{definition}

\begin{lemma}
\label{lem:Cn}
In the RB tree with parameters $(d_c, d_h)$, $|C_0| = 1$, while for $n \geq 1$,
\begin{equation}
    |\mathcal{C}_n| = \lambda_1\left( \frac{d_c-1+D}{2}\right)^n + \lambda_2\left( \frac{d_c-1-D}{2}\right)^n
\end{equation}
where
\begin{align}
    D &= \sqrt{(d_c-1)^2 +4d_cd_h} \\
    \lambda_1 &= \frac{(d_c+1+D)(2d_h+d_c-1+D)}{2D(d_c-1+D)} \\
    \lambda_2 &= \frac{(D-d_c-1)(2d_h+d_c-1-D)}{2D(d_c-1-D)}.
\end{align}
\end{lemma}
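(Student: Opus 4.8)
The plan is to reduce the count $|\mathcal{C}_n|$ to the \emph{profile} of the RB tree, i.e.\ the number of nodes at each level. Since $G(s)$ is a tree rooted at $s$ and every RB path begins at $s = v_0$, each step of the path must move to a not-yet-visited neighbor; the only neighbor of a node that is closer to the root is its (already visited) parent, so every RB path is forced strictly downward. Consequently the map sending a path to its endpoint $v_n$ is a bijection between RB paths of length $n$ and nodes at level $n$, and $|\mathcal{C}_n|$ equals the number of level-$n$ nodes.

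To count these, I would split by color. Let $R_n$ and $B_n$ be the numbers of red and blue nodes at level $n$, so that $|\mathcal{C}_n| = R_n + B_n$. Definition~\ref{def:RBtree} gives the base cases $R_0 = 1$, $B_0 = 0$ (the root), and $R_1 = d_c$, $B_1 = d_h$. For $n \ge 2$ every red node at level $n-1$ is a non-root red node, contributing $d_c - 1$ red and $d_h$ blue children, whereas every blue node contributes $d_c$ red children and no blue child; this gives the two-type recursion
\begin{equation}
R_n = (d_c - 1) R_{n-1} + d_c B_{n-1}, \qquad B_n = d_h R_{n-1},
\end{equation}
where the relation for $B_n$ in fact holds for all $n \ge 1$. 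Substituting $B_{n-1} = d_h R_{n-2}$ into the first relation decouples the system into the scalar recurrence $R_n = (d_c - 1) R_{n-1} + d_c d_h R_{n-2}$, valid for $n \ge 2$, and the same recurrence is then inherited by $|\mathcal{C}_n| = R_n + d_h R_{n-1}$.

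Then I would solve this linear recurrence through its characteristic equation $x^2 - (d_c - 1) x - d_c d_h = 0$, whose roots are exactly $r_\pm = \tfrac{1}{2}(d_c - 1 \pm D)$ with $D = \sqrt{(d_c-1)^2 + 4 d_c d_h}$; this is precisely where the quantity $D$ in the statement comes from. Writing $|\mathcal{C}_n| = \lambda_1 r_+^{\,n} + \lambda_2 r_-^{\,n}$, the two coefficients are determined by the initial values $|\mathcal{C}_1| = d_c + d_h$ and $|\mathcal{C}_2| = R_2 + B_2 = d_c(d_c - 1 + 2 d_h)$ (using $R_2 = d_c(d_c - 1 + d_h)$ and $B_2 = d_c d_h$). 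Solving the resulting $2 \times 2$ system --- equivalently, diagonalizing the transfer matrix $\left(\begin{smallmatrix} d_c - 1 & d_c \\ d_h & 0 \end{smallmatrix}\right)$ and projecting the initial vector onto its eigenvectors --- yields the claimed expressions for $\lambda_1$ and $\lambda_2$.

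I expect the only real obstacle to be computational rather than conceptual: bringing the raw solution of the linear system into the exact closed form stated for $\lambda_1$ and $\lambda_2$ requires careful algebraic simplification, repeatedly using $r_+ + r_- = d_c - 1$ and $r_+ r_- = -d_c d_h$ to clear the surds that appear in the denominators $2D(d_c - 1 \pm D)$. A convenient sanity check before finalizing is that $\lambda_1 r_+ + \lambda_2 r_-$ should collapse to $d_c + d_h$, matching $|\mathcal{C}_1|$.
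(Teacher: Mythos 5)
Your proposal is correct and follows essentially the same route as the paper: the paper counts paths by the color of their last node (its $r_n$, $b_n$ are exactly your $R_n$, $B_n$), derives the same two-type recursion, decouples it into $r_n = (d_c-1)r_{n-1} + d_c d_h r_{n-2}$, and solves the same characteristic equation $t^2 - (d_c-1)t - d_c d_h = 0$. The only cosmetic difference is in how the constants are pinned down --- the paper fits $c_1, c_2$ to $r_0 = 1$, $r_1 = d_c$ and then assembles $|\mathcal{C}_n| = r_n + d_h r_{n-1} = \lambda_1 t_1^n + \lambda_2 t_2^n$, whereas you fit $\lambda_1, \lambda_2$ directly to $|\mathcal{C}_1|, |\mathcal{C}_2|$ (which is sound, since your inherited recurrence holds for $n \geq 3$) --- and both yield the stated closed form valid for $n \geq 1$.
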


\begin{proof}
Let us keep track of the number of RB paths of length $n$ depending on the color of the last node in the path. Let $r_n$ and $b_n$ be the numbers of RB paths of length $n$ such that the last node is red and blue, respectively. A RB path of length 0 consists only of the source, which implies that $r_0 = 1$ and $b_0=0$. The source has $d_c$ red and $d_h$ blue neighbours, which implies that $r_1 = d_c$ and $b_1 = d_h$. 

Suppose that $P$ is an RB path of length $n \geq 2$. If the last node of $P$ is red, then the node before the last node can be both blue or red. Red nodes other than the source have $d_c-1$ red children, while blue nodes have $d_c$ red children, yielding
\begin{align}
\label{eq:rn}
    r_n = (d_c-1)r_{n-1}+d_cb_{n-1}, \textrm{ for }n\geq 2.
\end{align}
If the last node of $P$ is blue, then the node before has to be red. Since every red node, including the source, has $d_h$ blue children, we have
\begin{align}
    \label{eq:bn}
    b_n = d_hr_{n-1}, \textrm{ for }n\geq 1 .
\end{align}
By substituting equation \eqref{eq:bn} into equation \eqref{eq:rn}, we obtain the recurrence
\begin{align}
    r_n = (d_c-1)r_{n-1}+d_cd_hr_{n-2}, \textrm{ for  }n\geq 2.
\end{align}
We solve this recurrence equation by calculating the characteristic equation
\begin{equation}
    t^2 - (d_c-1)t - d_cd_h = 0,
\end{equation}
whose roots are 
\begin{align}
    t_1 &= \frac{d_c-1+\sqrt{(d_c-1)^2 +4d_cd_h}}{2} = \frac{d_c-1+D}{2} \\
    t_2 &= \frac{d_c-1-\sqrt{(d_c-1)^2 +4d_cd_h}}{2} =  \frac{d_c-1-D}{2} 
\end{align}
yielding the the general solution
\begin{equation}
    r_n = c_1t_1^n + c_2t_2^n,
\end{equation}
where  $c_1, c_2$ are given by the initial conditions for $n=0,1$
\begin{align}
    &c_1 + c_2 = r_0 = 1 \\
    &c_1 t_1 + c_2t_2 = r_1 = d_c,
\end{align}
which are
\begin{align}
    c_1 &= \frac{1}{2} + \frac{d_c+1}{2\sqrt{(d_c-1)^2+4d_cd_h}} = \frac{1}{2} + \frac{d_c+1}{2D} \\
    c_2 &= \frac{1}{2} - \frac{d_c+1}{2\sqrt{(d_c-1)^2+4d_cd_h}} =\frac{1}{2} - \frac{d_c+1}{2D}.
\end{align}
From equations \eqref{eq:rn} and \eqref{eq:bn} we conclude that for $n\geq 1$,
\begin{equation}
b_n = d_h(c_1t_1^{n-1} + c_2t_2^{n-1})
\end{equation}
and therefore
\begin{align}
    |C_n| = r_n + b_n = \lambda_1t_1^n + \lambda_2t_2^n,
\end{align}
where 
\begin{align}
    \lambda_1 &= c_1\left(1+\frac{d_h}{t_1}\right) \\
    \lambda_2 &= c_2\left(1+\frac{d_h}{t_2}\right) .
\end{align}
Inserting the values for $t_1, t_2, c_1, c_2$ we obtain the desired result.
\end{proof}

Since LS+ improves on LS by making use of the household structure of the network, we need further information about the household structure of the transmission paths. Recall that by Remark~\ref{rem:RBtree}, households on transmission paths on an RB tree were characterized either by a single red node (that is followed by a red node), or a pair of consecutive red and blue nodes. The following definition and lemma refine our previous result on counting the number of RB paths by taking the household structure into account.

\begin{definition}
Let $P=\{ s=v_0,v_1,\ldots, v_n=h \}$ be a RB path of length $n$. We say that a node $v$ on the path $P$ is in a $P$-single-household if no other node from $P$ is in the same household as $v$. Otherwise, we say $v$ is in a $P$-multi-household. Given a path $P$, let $M_s: \mathcal{C}_n \rightarrow \{0,1\}$ be the indicator function that the source is in a $P$-multi-household. Similarly, let $M_l: \mathcal{C}_n \rightarrow \{0,1\}$ be the indicator function that the last node of path $P$ is in a $P$-multi-household. Finally, for $0 \leq k \leq n+1$ and $\alpha, \beta \in \{0,1\}$, let
\begin{equation}
    C_{n,k,\alpha, \beta} = \{P \in \mathcal{C}_n : (\textrm{there are exactly }k \textrm{ nodes in } P-\textrm{single-households})\wedge (M_s(P)=\alpha) \wedge (M_l(P) = \beta)\}.
\end{equation}
\end{definition}

The set $C_{n,k,\alpha, \beta}$ depends on 4 parameters, but only some combinations of these parameters make it non-empty. The following definition will be useful in this regard.

\begin{condition}
\label{cond:path}
Let $\alpha, \beta \in \{0,1\}$ and $n \geq 2$. We say $k \in \mathbb{N}$ satisfies Condition~\ref{cond:path} if and only if $k$ and $n$ have different parity and $n+1 - 2(\alpha + \beta) \geq k \geq 2-(\alpha+\beta)$.
\end{condition}

\begin{lemma}
\label{lem:Cnkab}
It holds that $|C_{0,1,0,0}| = 1$, $|C_{1,0,1,1}| = d_h$ and $|C_{1,2,0,0}| = d_c$. Let $\alpha, \beta \in \{0,1\}$, let $n \geq 2$ and let $k\in \mathbb{N}$ satisfy Condition \ref{cond:path}. Then
\begin{equation}
\label{eq:Cnkab}
    |\mathcal{C}_{n,k,\alpha, \beta}| = \binom{\frac{n+k-3}{2}}{k-2+\alpha+\beta}d_h^{\frac{n-k+1}{2}}d_c^{\frac{n-k+3}{2}-\beta-\alpha}(d_c-1)^{k+\alpha + \beta-2}. 
\end{equation}
In all other cases $|C_{n,k,\alpha, \beta}| = 0$.
\end{lemma}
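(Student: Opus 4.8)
The plan is to decompose each RB path into household blocks, count separately the number of admissible block patterns and the product of branching factors along the path, and then show that these two counts multiply to give \eqref{eq:Cnkab}.

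First I would establish the combinatorial skeleton of an RB path. Since the RB tree is rooted at $s$, the path $(s=v_0,\dots,v_n)$ is monotone, so each $v_{i+1}$ is a child of $v_i$. By the color rules in Definition~\ref{def:RBtree}, a blue node has only red children, so every blue node on the path is immediately followed by a red node; consequently, by Remark~\ref{rem:RBtree}, each household met by the path contributes either a \emph{single} node (its red node, continuing to the red node of the next household) or a \emph{double} pair (its red node and one blue child, the blue child then continuing to the red node of the next household). If the path visits $k$ single and $m$ double households, counting nodes gives $n+1=k+2m$, hence $m=(n+1-k)/2$; integrality forces $k$ and $n$ to have opposite parity, which is exactly the parity requirement of Condition~\ref{cond:path}. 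I would then read off the indicators: the source lies in a multi-household precisely when the first household is double (so $\alpha=1$), and the last node lies in a multi-household precisely when the last household is double (so $\beta=1$).

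Second, I would count the admissible patterns of household types. The first and last types are pinned down by $\alpha$ and $\beta$, leaving $M-2$ interior households to arrange, where $M=k+m=(n+k+1)/2$. Among the interior households there are $k-2+\alpha+\beta$ single ones and $m-\alpha-\beta$ double ones, so the number of patterns is $\binom{M-2}{k-2+\alpha+\beta}=\binom{(n+k-3)/2}{k-2+\alpha+\beta}$, matching the binomial factor in \eqref{eq:Cnkab}. Requiring both interior counts to be nonnegative yields precisely the two inequalities $k\ge 2-(\alpha+\beta)$ and $k\le n+1-2(\alpha+\beta)$ of Condition~\ref{cond:path}, which is how the stated range on $k$ arises; all $(n,k,\alpha,\beta)$ violating the condition give an empty set, either through a parity mismatch or a negative interior count.

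Third, I would compute the product of branching choices along the path and show it depends only on $(n,k,\alpha,\beta)$, not on the particular pattern, so that $|\mathcal{C}_{n,k,\alpha,\beta}|$ factorizes as (patterns)$\times$(branching product). Each within-household step (red $\to$ blue) offers $d_h$ choices, giving $d_h^{m}=d_h^{(n-k+1)/2}$. Each between-household step offers $d_c$ choices when it emanates from a blue node (every non-terminal double household) or from the root when the root is a non-terminal single household, and $d_c-1$ choices when it emanates from a non-root red node (a non-terminal single household). Counting non-root, non-terminal single households gives $k-2+\alpha+\beta$ factors of $d_c-1$, while the non-terminal double households together with the possible root-single contribute $(m-\beta)+(1-\alpha)=m+1-\alpha-\beta$ factors of $d_c$; thus the between-product equals $d_c^{(n-k+3)/2-\alpha-\beta}(d_c-1)^{k+\alpha+\beta-2}$. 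Since this product is identical across all admissible patterns, multiplying by the pattern count produces \eqref{eq:Cnkab}.

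The main obstacle will be the careful bookkeeping of the branching factors at the root and at the terminal household, which are the only two places where the generic rule fails: the root red node has $d_c$ red children whereas every other red node has $d_c-1$, and the terminal household emits no outgoing edge. I would verify that these two exceptions are exactly what produce the $-\alpha$ and $-\beta$ shifts in the exponent of $d_c$ (the root's ``extra'' red child accounting for the $\alpha$-dependence and the missing terminal edge for the $\beta$-dependence). Finally, the degenerate short paths $n\in\{0,1\}$ must be checked by hand, since there $M\le 2$ and the root may coincide with the terminal household; this is why $|C_{0,1,0,0}|=1$, $|C_{1,0,1,1}|=d_h$ and $|C_{1,2,0,0}|=d_c$ are recorded separately in the statement.
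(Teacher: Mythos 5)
Your proposal is correct and follows essentially the same route as the paper's proof: both decompose the path into single and multi (double) households, obtain the binomial factor $\binom{(n+k-3)/2}{k-2+\alpha+\beta}$ by choosing the types of the interior households (with the parity and range constraints of Condition~\ref{cond:path} emerging from integrality and nonnegativity), and then multiply the per-step branching factors, with the root's $d_c$ red children and the terminal household's missing outgoing edge producing exactly the $-\alpha$ and $-\beta$ shifts in the exponent of $d_c$. Your bookkeeping of the $d_h$, $d_c$, and $d_c-1$ factors matches the paper's term by term, so no gap remains.
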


\begin{proof}
Since there are $n+1$ nodes on path $P$, with $k$ in $P$-single households and thus $n+1-k$ of them in $P$-multi-households, we must have 
$$k + \frac{n+1-k}{2}=\frac{n+k+1}{2}$$
households along path~$P$ in total. Clearly, the numbers $n$ and $k$ cannot be of the same parity for any RB path $P$, which is thus assumed for the rest of the proof (this assumption is also part of Condition~\ref{cond:path}).

If $n=0$, then the source is also the first hospitalized node, and it is in a $P$-single-household, which implies that $|C_{0,1,0,0}| = 1$. If $n=1$, then there are two cases: either the source is in the same $P$-multi-household with the first hospitalized node, or both of them are in $P$-single-households. The former case is possible via $d_h$ edges from the source, which gives $|C_{1,0,1,1}| = d_h$, while the latter case is possible via $d_c$ edges, and gives $|C_{1,0,1,1}| = d_c$. Since these are the only possible RB paths of length $n\le 1$, we must have $|C_{0,k,\alpha,\beta}|=|C_{1,k,\alpha,\beta}| = 0$ for any other choice of parameters $k,\alpha$ and $\beta$.

Let us assume that $n\geq 2$. Then, the source and the first hospitalized node are not in the same household. Let us denote the household of the source by $H_s$ and the household of the first hospitalized node by $H_h$. Note that $(1-\alpha)$ and $(1-\beta)$ are the indicators of $H_s$ and $H_h$ being $P$-single-households, and therefore $k \geq (1-\alpha) + (1-\beta)$. If this inequality (which is also part of Condition \ref{cond:path}) does not hold, then clearly $|\mathcal{C}_{n,k,\alpha, \beta}|=0$. Similarly, the number of $P$-multi-households is $\frac{n-k+1}{2}$ and we must have $\frac{n-k+1}{2} \geq \alpha + \beta$ for $|\mathcal{C}_{n,k,\alpha, \beta}|>0$, which implies the inequality $n+1 - 2\alpha - 2\beta \geq k$. Therefore $\mathcal{C}_{n,k,\alpha, \beta}$ is empty if Condition \ref{cond:path} does not hold. For the rest of the proof we assume that Condition \ref{cond:path} does hold.

There are $\frac{n+k-3}{2}$ households along path~$P$, excluding $H_s$ and $H_h$. Among them, there are $k-(1-\alpha)-(1-\beta)$ $P$-single-households, which can be chosen in $\binom{\frac{n+k-3}{2}}{k-2+\alpha+\beta}$ ways. Once we know the color of each node along the path, the number of RB paths can be computed by multiplying the numbers of children with the appropriate color of each node.  $P$-single-households have no blue nodes, and $P$-multi-households have exactly one, which implies that there are $\frac{n-k+1}{2}$ blue nodes. Since blue nodes are preceded by red nodes that have $d_h$ blue children, they give the multiplicative factor $d_h^{\frac{n-k+1}{2}}$. Blue nodes, except from the first hospitalized node (if it is blue), have $d_c$ red children. So far we have accounted for all of the nodes in $P$-multi-households and none of the nodes in $P$-single-households. If the source is in a $P$-single-household, then we must count its red children, whose number is $d_c$. This implies that there exist $\frac{n-k+1}{2}-\beta+(1-\alpha)$ nodes with $d_c$ red children. Finally, each $P$-single-household, except $H_s$ and/or $H_h$ in case they are $P$-single households, has $d_c-1$ red children. There are $k-(1-\alpha)-(1-\beta)$ such $P$-single-households, which gives the final term in equation \eqref{eq:Cnkab}.
\end{proof}

The sets $\mathcal{C}_{n,k,\alpha, \beta}$ define equivalence classes on the transmission paths based on their household structure. In the next lemma we show that once we know which equivalence class we are in, it is possible compute the success probability of the LS+ algorithm.

\begin{lemma}
\label{lem:LSp|Cnkab}
Let $P$ be the transmission path in the $\mathrm{DDE}_{\mathrm{NR}}$ epidemic model with parameters $(p_i, p_a, p_h)$ on the RB tree with parameters $(d_c, d_h)$, and let $p$ be as computed in \eqref{C5}. Then, it holds that
$$\P(LS+ \textrm{ succeeds} | P \in \mathcal{C}_{0,1,0,0}) = 1$$ and $$\P(LS+ \textrm{ succeeds} | P \in \mathcal{C}_{1,0,1,1}) = \P(LS+ \textrm{ succeeds} | \mathcal{C}_{1,2,0,0}) = 1-p.$$ 
Let $\alpha, \beta \in \{0,1\}$, let $n \geq 2$ and let $k\in \mathbb{N}$ satisfy Condition \ref{cond:path}. Then, it holds that
\begin{equation}
    \P(LS+ \textrm{ succeeds} | P \in \mathcal{C}_{n,k,\alpha, \beta}) \ge (1-p)^{\frac{n+k-1}{2}}(1+p)^{\frac{n-k+1}{2}-\alpha-\beta}.
\end{equation}
In all other cases $\P(LS+ \textrm{ succeeds} | P \in \mathcal{C}_{n,k,\alpha, \beta})$ is not defined.
\end{lemma}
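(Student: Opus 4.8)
The plan is to obtain the stated expression as the probability of the \emph{sufficient} condition for LS+ success from Lemma~\ref{lem:LS,LSp,trees}, namely that the source is symptomatic and every household traversed by the transmission path contains at least one symptomatic node. Because that condition is sufficient but not necessary (as the remark following Lemma~\ref{lem:LS,LSp,trees} and Figure~\ref{fig:LSp_cases}~(a) show), its probability is only a lower bound on $\P(LS+\textrm{ succeeds}\mid P\in\mathcal{C}_{n,k,\alpha,\beta})$, which explains the inequality. The key probabilistic input is the lazy revelation of randomness from the proof of Lemma~\ref{lem:LSsucc}: conditioned on the transmission path $P$, and in particular on its household structure (which is fixed by the infection and hospitalization process before symptom status is revealed), the last node $h=v_n$ is hospitalized and hence symptomatic, while each of $v_0,\dots,v_{n-1}$ is independently asymptomatic with probability $p$. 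Conditioning on the class $\mathcal{C}_{n,k,\alpha,\beta}$ therefore does not disturb this independence.

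First I would dispatch the base cases by inspection. For $\mathcal{C}_{0,1,0,0}$ the source is itself hospitalized, LS+ returns it, and the probability is $1$. For $\mathcal{C}_{1,0,1,1}$ and $\mathcal{C}_{1,2,0,0}$ the only path node besides $h$ is the source, whose household requirement coincides with the requirement that the source be symptomatic; since $h$ is always symptomatic, both probabilities are $1-p$.

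For $n\ge 2$ I would invoke Remark~\ref{rem:RBtree} to split the $\frac{n+k+1}{2}$ households met by $P$ (the same counting as in the proof of Lemma~\ref{lem:Cnkab}) into $k$ single-households, each carrying one red path node, and $\frac{n-k+1}{2}$ multi-households, each a red node followed by a blue node. Since $n\ge 2$ forces $H_s\neq H_h$, I treat these two distinguished households separately: the requirement on $H_h$ is free because $h$ is hospitalized, and the requirement on $H_s$ is absorbed into the global demand that the source be symptomatic, charged once as a factor $1-p$. With $\alpha=M_s$ and $\beta=M_l$, the remaining single-households number $k-2+\alpha+\beta$, each contributing a factor $1-p$ (its lone node must be symptomatic), and the remaining multi-households number $\frac{n-k+1}{2}-\alpha-\beta$, each contributing $1-p^2=(1-p)(1+p)$ (at least one of its two non-hospitalized nodes must be symptomatic). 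All nodes in these events are distinct, so by independence I multiply; collecting exponents of $(1-p)$ gives $1+(k-2+\alpha+\beta)+\left(\frac{n-k+1}{2}-\alpha-\beta\right)=\frac{n+k-1}{2}$ and of $(1+p)$ gives $\frac{n-k+1}{2}-\alpha-\beta$, exactly the claimed bound.

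The main thing to get right is the bookkeeping of the two boundary households $H_s$ and $H_h$ through the indicators $\alpha,\beta$: one must check, in each of the four $(\alpha,\beta)$ combinations, that excluding $H_s$ and $H_h$ from the single/multi counts and charging the source-symptomatic event exactly once avoids both double-counting and omission. The remaining care is in verifying that the counts $k-2+\alpha+\beta$ and $\frac{n-k+1}{2}-\alpha-\beta$ are non-negative precisely when Condition~\ref{cond:path} holds, so that the formula is well-defined exactly on the non-empty classes.
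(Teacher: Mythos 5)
Your proposal is correct and follows essentially the same route as the paper's proof: it uses the sufficient condition from Lemma~\ref{lem:LS,LSp,trees} to justify the inequality, handles the base cases by inspection, and for $n\ge 2$ multiplies a factor $1-p$ for the source, $1-p$ for each $P$-single-household other than $H_s,H_h$ (counted as $k-2+\alpha+\beta$), and $1-p^2$ for each remaining $P$-multi-household (counted as $\frac{n-k+1}{2}-\alpha-\beta$), using the conditional independence of symptom status among non-hospitalized path nodes from equation~\eqref{C5}. Your explicit appeal to the lazy-revelation argument from Lemma~\ref{lem:LSsucc} to justify that conditioning on the household structure of $P$ preserves this independence is a point the paper leaves implicit, but the argument is otherwise identical.
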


\begin{proof}
If $n = 0$, then $k = 1$ and $\alpha = \beta = 0$. In that case, the source is the first hospitalized node and LS+ always succeeds. If $n =1$, then the first hospitalized node is in the neighbourhood of the source, and LS+ succeeds if and only if the source is symptomatic, which happens with probability $1-p$.

Let us assume that $n \geq 2$ and that $k$ satisfies Condition \ref{cond:path} (otherwise $|\mathcal{C}_{n,k,\alpha, \beta}|=0$ and $\P(LS+ \textrm{ succeeds}| P \in \mathcal{C}_{n,k,\alpha, \beta})$ is not defined). By Lemma \ref{lem:LS,LSp,trees} the LS+ algorithm succeeds in the $\mathrm{DDE}_{\mathrm{NR}}$ model on the RB tree if, among the nodes of the transmission path, there exists a symptomatic node in each household, and the source is symptomatic, which means that we can prove a lower bound on the success probability of LS+. Let us assume that the source is indeed symptomatic. Since the first hospitalized node is symptomatic by definition, the households of the source and of the first hospitalized node cannot make the LS+ algorithm fail. Let us denote these two households by $H_s$ and $H_h$, respectively. Also, let $M$ and $S$ be the sets of all $P$-multi- and $P$-single-households, respectively, excluding $H_s$ and $H_h$. Then, LS+ succeeds if all nodes in the households of $S$ are symptomatic, and if at least one node in the households of $M$ is symptomatic, which has probability $1-p$ and $1-p^2$ for each type of household, respectively, by equation \eqref{C5}. These observations yield that

\begin{align}
    \P(LS+ \textrm{ succeeds} | P \in \mathcal{C}_{n,k, \alpha, \beta}) &\ge \P(\textrm{source is sym}) (1-p)^{|S|} (1-p^2)^{|M|} \nonumber \\
    &= (1-p)(1-p)^{k-2+\alpha+\beta}(1-p^2)^{\frac{n-k+1}{2}-\alpha-\beta} \nonumber \\
    &= (1-p)^{k-1+\alpha+\beta}(1-p^2)^{\frac{n-k+1}{2}-\alpha-\beta}.
\end{align}

\end{proof}

Finally, we are ready to state and prove Theorem \ref{thrm:LSp_suc} on the success probability of LS+, which we restate here for convenience.

\begin{theorem}
\label{thrm:LSp_suc_app}
Let $p$ be as in \eqref{C5} and let $\mathcal{S}(n,\alpha,\beta)$ be the set of $k$ values that satisfy Condition \ref{cond:path}. Then, for the $\mathrm{DDE}_{\mathrm{NR}}$ epidemic model with parameters $(p_i, p_a, p_h)$ on the RB tree with parameters $(d_c, d_h)$ we have
\begin{align}
    &\P(LS+ \textrm{ succeeds}) \ge \P(d(s,h) = 0) + (1-p)\P(d(s,h)=1)+ \nonumber \\
    & \sum_{n = 2}^{\infty} \sum_{ \substack{\alpha,\beta \in \{0,1\} \\ k \in \mathcal{S}(n,\alpha,\beta)}}
    \binom{\frac{n+k-3}{2}}{k-2+\alpha+\beta}
    \frac{(d_h(1-p))^{\frac{n+k-1}{2}}(d_c(1+p))^{\frac{n-k+1}{2}-\alpha-\beta}d_c(d_c-1)^{k+\alpha + \beta-2}}{\lambda_1\left( \frac{d_c-1+D}{2}\right)^n + \lambda_2\left( \frac{d_c-1-D}{2}\right)^n} \P(d(s,h)=n),
\end{align}
where $D,\lambda_1$ and $\lambda_2$ are terms depending on parameters $d_c$ and $d_h$ and are computed explicitly in Lemma \ref{lem:Cn}.
\end{theorem}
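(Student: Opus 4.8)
The plan is to compute $\P(LS+\text{ succeeds})$ by conditioning first on the length $n=d(s,h)$ of the transmission path and then on its household structure, so that the three preparatory lemmas (\ref{lem:Cn}, \ref{lem:Cnkab}, \ref{lem:LSp|Cnkab}) assemble by the law of total probability. Starting from
\[
\P(LS+\text{ succeeds}) = \sum_{n=0}^{\infty} \P(LS+\text{ succeeds}\mid d(s,h)=n)\,\P(d(s,h)=n),
\]
I would first dispose of the two base cases: for $n=0$ the source is itself the first hospitalized node so LS+ always succeeds, while for $n=1$ success is equivalent to the source being symptomatic, which by \eqref{C5} has probability $1-p$ (in either nonempty class $\mathcal{C}_{1,0,1,1}$ or $\mathcal{C}_{1,2,0,0}$). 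These reproduce the two leading terms $\P(d(s,h)=0)+(1-p)\P(d(s,h)=1)$.

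For $n\ge 2$ I would refine the conditioning through the classes $\mathcal{C}_{n,k,\alpha,\beta}$, which partition the RB paths of length $n$ by household structure. Since the classes with $k\notin\mathcal{S}(n,\alpha,\beta)$ are empty by Condition \ref{cond:path}, only admissible triples survive, and
\[
\P(LS+\text{ succeeds}\mid d(s,h)=n) = \sum_{\alpha,\beta\in\{0,1\}}\ \sum_{k\in\mathcal{S}(n,\alpha,\beta)} \P(LS+\text{ succeeds}\mid P\in\mathcal{C}_{n,k,\alpha,\beta})\,\P\!\left(P\in\mathcal{C}_{n,k,\alpha,\beta}\mid d(s,h)=n\right).
\]
The first factor is bounded below by Lemma \ref{lem:LSp|Cnkab}; this is exactly where the global ``$\ge$'' comes from, since that lemma only uses a \emph{sufficient} condition for success (a symptomatic source plus a symptomatic node in every traversed household) and lower-bounds its probability via the independence of the per-household symptom statuses. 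To make the two factors genuinely multiply, I would invoke the revealing-randomness scheme of Lemma \ref{lem:LSsucc}, in which the symptomatic/asymptomatic labels are exposed only \emph{after} the epidemic, and hence the path shape, is fixed; this guarantees the symptom statuses are independent of which class the path falls into.

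The main obstacle is the second factor, $\P(P\in\mathcal{C}_{n,k,\alpha,\beta}\mid d(s,h)=n)$, which I would evaluate as $|\mathcal{C}_{n,k,\alpha,\beta}|/|\mathcal{C}_n|$. This rests on the delicate claim that, conditioned on the path length being $n$, the transmission path is uniformly distributed over $\mathcal{C}_n$. The subtlety is that the uncolored RB tree is regular, and hence all nodes at a given distance from the source are equivalent, only when $d_h=1$; for $d_h>1$ the first-passage dynamics favor paths through the higher-degree red nodes, so uniformity is not automatic. I would therefore either establish it by a root-preserving automorphism / exchangeability argument on the infection order, or adopt it as the governing approximation, consistent with the regular-tree RET/DET reduction of Section \ref{sec:depth_dist} that supplies $\P(d(s,h)=n)$. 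Granting this, the remainder is purely algebraic: substitute $|\mathcal{C}_{n,k,\alpha,\beta}|$ from Lemma \ref{lem:Cnkab}, the bound $(1-p)^{(n+k-1)/2}(1+p)^{(n-k+1)/2-\alpha-\beta}$ from Lemma \ref{lem:LSp|Cnkab}, and $|\mathcal{C}_n|=\lambda_1 t_1^n+\lambda_2 t_2^n$ from Lemma \ref{lem:Cn}, then group the $(1-p)$ and $(1+p)$ factors with the $d_h$ and $d_c$ powers (peeling off the stray factor of $d_c$) to recover the stated summand. Summing over $n\ge 2$ and the admissible $(k,\alpha,\beta)$ and reattaching the base-case terms yields the claimed inequality.
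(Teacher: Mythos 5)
Your proposal follows essentially the same route as the paper's proof: the law of total probability over the path length $n$ and the household-structure classes $\mathcal{C}_{n,k,\alpha,\beta}$, the separate treatment of the base cases $n=0,1$, the lower bound on $\P(LS+ \textrm{ succeeds} \mid P \in \mathcal{C}_{n,k,\alpha,\beta})$ from Lemma \ref{lem:LSp|Cnkab}, and the substitution of $|\mathcal{C}_{n,k,\alpha,\beta}|$ and $|\mathcal{C}_n|$ from Lemmas \ref{lem:Cnkab} and \ref{lem:Cn}. If anything, you are more careful than the paper on one point: the identification $\P(P\in\mathcal{C}_{n,k,\alpha,\beta}\mid P\in\mathcal{C}_n)=|\mathcal{C}_{n,k,\alpha,\beta}|/|\mathcal{C}_n|$ (uniformity of the transmission path over $\mathcal{C}_n$ given its length), which you flag as delicate and propose to justify or adopt as the governing approximation, is used silently in the paper's passage from its total-probability expansion to the final formula.
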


\begin{proof}
Let us extend the domain of $\P(LS+ \textrm{ succeeds} | P \in C_{n,k, \alpha, \beta})$ by function $g$ defined as $g: \mathbb{N} \times \mathbb{N} \times \{0,1\} \times \{0,1\} \rightarrow [0,1] $ such that
\begin{equation}
    g(n,k,\alpha, \beta) = \begin{cases}
 \P(LS+ \textrm{ succeeds} | P \in C_{n,k, \alpha, \beta})& \text{ if } k \in \mathcal{S}(n,\alpha,\beta) \\ 
 0& \text{ if } k \not\in \mathcal{S}(n,\alpha,\beta).
\end{cases}
\end{equation}
Unlike $\P(LS+ \textrm{ succeeds} | P \in C_{n,k, \alpha, \beta})$, $g$ is defined for every  4-tuple of parameters $(n,k, \alpha, \beta) \in \mathbb{N} \times \mathbb{N} \times \{0,1\} \times \{0,1\}$. By the law of total probability we expand the success probability by conditioning on the path $P$ being of length~$n$ as
\begin{align}
    \label{eq:LS+expand}
    \P(LS+ \textrm{ succeeds}) =& \sum_{n = 0}^{\infty} \sum_{k = 0}^{\infty} \sum_{\alpha,\beta \in \{0,1\}} g(n,k, \alpha, \beta) \P(P \in C_{n,k, \alpha, \beta})  \nonumber \\
    =& \sum_{n = 0}^{\infty}  \sum_{k = 0}^{\infty}  \sum_{\alpha,\beta \in \{0,1\}} g(n,k,\alpha, \beta) \P(P \in C_{n,k, \alpha, \beta} | P \in C_{n}) \P(d(s,h)=n).
\end{align}
Next, we exchange the sums over $\alpha, \beta$ and $k$. This allows us to sum over only those $k$ values that satisfy Condition \ref{cond:path}, which implies that $\P(LS+ \textrm{ succeeds} | P \in C_{n,k, \alpha, \beta})$ is well-defined. As in Lemma \ref{lem:Cnkab}, we need to treat the $n=0$ and $n=1$ cases separately.  Continuing equation \eqref{eq:LS+expand}, we arrive to

\begin{align}
    \label{eq:LS+switch}
    \P(LS+ \textrm{ succeeds}) =& \P(d(s,h) = 0) + (1-p)\P(d(s,h)=1)+ \nonumber \\
    &\sum_{n = 2}^{\infty} \sum_{\alpha,\beta \in \{0,1\}}
    \sum_{ k \in \mathcal{S}(n,\alpha,\beta) }
    \P(LS+ \textrm{ succeeds} | P \in C_{n,k, \alpha, \beta}) \frac{ |C_{n,k, \alpha, \beta}|}{ |C_{n}|} \P(d(s,h)=n)
\end{align}
Substituting in the results from Lemmas \ref{lem:Cn}, \ref{lem:Cnkab} and \ref{lem:LSp|Cnkab} into equation \eqref{eq:LS+switch} gives the desired result.
\end{proof}

\subsection{Proof of Theorem \ref{theorem:a}}
\label{sec:theorem:a}

We start by restating Theorem \ref{theorem:a} for convenience.

\begin{theorem}
\label{theorem:a_app}
In the $(d_r, d)$-RET with parameters $p_i,p_a, p_h$, let $a_{t,l}$ be  as in Definition \ref{def:total}. Then
\begin{align}
    a_{t,0} &= 1 \\
    a_{t,l} &= d_rp_i\sum_{m = l-1}^{t-1} \binom{m}{l-1}(1-p_i)^{m-l+1}d^{l-1}p_i^{l-1} \textrm{, for } t \geq l \geq 1 \\
    a_{t,l} &= 0 \textrm{, for l > t}.
\end{align}
\end{theorem}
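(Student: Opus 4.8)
The plan is to exploit the fact that the closure-and-retain mechanism of the RET makes the filling of distinct child-slots independent across slots and across days, so that the whole profile can be read off from elementary geometric waiting times. Two boundary cases are immediate: $a_{t,0}=1$ since the root is the unique level-$0$ node at every day, and $a_{t,l}=0$ for $l>t$ because each retention places a new internal node exactly one level below an already internal node, so the maximal level present grows by at most one per day and cannot exceed $t$. The substance of the statement is the middle range $1\le l\le t$, which I would obtain by counting, in expectation, the internal nodes at level $l$.

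First I would describe the growth mechanism precisely. In $\bar G_t(s)$ the closure attaches external nodes so that every internal node carries $d$ children (the root $d_r$ children); thus the potential level-$l$ vertices are indexed by the $d_r d^{\,l-1}$ root-to-level-$l$ slot-paths $s=u_0,u_1,\dots,u_l$. The key observation is that a dropped external node is re-offered afresh at the next closure, so each empty child-slot of an already-internal node is filled independently with probability $p_i$ at every subsequent day; consequently the slot $u_{i-1}\to u_i$ can begin filling only after $u_{i-1}$ has itself become internal, and once it can fill it does so after a geometric number of days. Hence a fixed level-$l$ path is realised, with its deepest node born exactly on day $m$, with probability
\begin{equation*}
\binom{m-1}{l-1}\,p_i^{\,l}\,(1-p_i)^{\,m-l},
\end{equation*}
the binomial coefficient counting the placements of the $l-1$ strictly earlier ancestor birth-days $1\le\tau_1<\dots<\tau_{l-1}<m$, and the exponent $m-l$ collecting the total number of failed days, $\sum_{i}(\tau_i-\tau_{i-1}-1)=m-l$ with $\tau_0=0,\ \tau_l=m$.

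By linearity of expectation, $a_{t,l}$ equals the number $d_r d^{\,l-1}$ of level-$l$ paths times the probability that such a path is realised by day $t$, i.e.\ times $\sum_{m=l}^{t}\binom{m-1}{l-1}p_i^{\,l}(1-p_i)^{m-l}$; the re-indexing $m\mapsto m+1$ turns this into the stated sum $d_r p_i\sum_{m=l-1}^{t-1}\binom{m}{l-1}(1-p_i)^{m-l+1}d^{\,l-1}p_i^{\,l-1}$. As a cross-check, and as the fully rigorous route if one prefers to avoid the birth-day bookkeeping, I would instead set up the one-step recursion: since the expected number of external level-$l$ vertices of $\bar G_t$ is $d\,a_{t,l-1}-a_{t,l}$ for $l\ge2$ (and $d_r-a_{t,1}$ for $l=1$), taking expectations over the independent retentions gives $a_{t+1,l}=(1-p_i)a_{t,l}+p_i d\,a_{t,l-1}$ for $l\ge2$ and $a_{t+1,1}=(1-p_i)a_{t,1}+p_id_r$, and a Pascal-rule induction on $t$ verifies that the claimed closed form solves this recursion with the correct initial data.

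I expect the main obstacle to be conceptual rather than computational: correctly justifying the memorylessness of the slot-filling, namely that dropping and re-attaching external nodes at each closure gives every empty slot a fresh, independent probability-$p_i$ trial each day, starting only once its parent turns internal, and then translating this into the strict ordering $\tau_1<\dots<\tau_l\le t$ of birth-days that produces exactly the coefficient $\binom{m-1}{l-1}$. The only remaining subtlety is the special role of the root, whose $d_r$ slots (rather than $d$) seed level $1$; this propagates as the single factor $d_r$ in front, while every deeper branching contributes a factor $d$, accounting for the $d_r d^{\,l-1}$ path count and for the contrast with the $l=1$ recursion.
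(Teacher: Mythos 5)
Your proof is correct, but it takes a genuinely different route from the paper. The paper's proof first derives the one-step recursions $a_{t,1}=(1-p_i)a_{t-1,1}+d_rp_i$ and $a_{t,l}=(1-p_i)a_{t-1,l}+dp_i\,a_{t-1,l-1}$ (for $t\ge l\ge 2$) from the binomial retention step, and then solves them with a bivariate generating function: it shows $f(x,y)=\sum_{t,l\ge1}a_{t,l}x^ty^l = d_rp_i\,xy\,\frac{1}{1-x}\cdot\frac{1}{1-(1-p_i)x-dp_ixy}$, expands this as a power series, and reads off the coefficient of $x^ty^l$ after a careful change of summation variables. Your primary argument bypasses the recursion entirely: you index the $d_rd^{\,l-1}$ root-to-level-$l$ slot-paths, use the independence and memorylessness of the daily retention trials to compute the probability $\binom{m-1}{l-1}p_i^{\,l}(1-p_i)^{m-l}$ that a fixed path is completed exactly on day $m$ (a negative-binomial waiting-time density), and conclude by linearity of expectation and symmetry across paths; this correctly needs no independence between overlapping paths. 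This is more elementary and gives the binomial sum a direct probabilistic meaning, whereas the paper's generating-function machinery is heavier but follows the approach of the cited RET literature and feeds naturally into Corollary~\ref{corollary:at} (summing over $l$ inside $f(x,y)$). Your fallback verification — the same recursions the paper derives, but checked against the closed form by a Pascal-rule induction on $t$ — is itself a complete rigorous proof and is the closest point of contact between the two arguments; it is sound, since the formula gives the right initial data ($a_{0,l}=\mathbf{1}_{l=0}$, empty sums for $t<l$) and the inductive step is exactly Pascal's identity after re-indexing. One small point of care: Definition~\ref{def:dRET} phrases the closure in terms of internal nodes having \emph{degree} $d$, while both your count $d_rd^{\,l-1}$ and the paper's own proof interpret $d$ as the number of \emph{children} per internal node; you matched the paper's intended reading, so nothing is lost, but it is worth stating that convention explicitly.
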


\begin{proof}
Similarly to \cite{feng2018profile,mahmoud2021profile}, the proof relies on generating functions. We start by addressing the boundary cases. For all $t \geq 0$, it holds that $A_{t,0} = 1$, and therefore $a_{t,0} = 1$. Similarly, for all $l,t$ such that $l > t$, it holds that $A_{t,l} = 0$, and therefore $a_{t,l} = 0$. Suppose that $t\geq l = 1$. During day~$t-1$, on the first level, there are $A_{t-1,1}$ infected (internal) nodes and $d_r-A_{t-1,1}$ (external) nodes that may be infected with probability $p_i$ during day~$t$. Thus,
\begin{align}
\label{eq:At1}
    A_{t,1} &= A_{t-1,1}+\mathrm{Bin}(d_r - A_{t-1, 1};p_i) .
\end{align}
Taking the expectation of both sides in equation \eqref{eq:At1} yields
\begin{align}
    a_{t,1} = a_{t-1,1}(1-p_i) + d_rp_i, \textrm{ for }t\geq 1.
\end{align}
By subtracting the appropriate recurrence equations for $a_{t,1}$ and $a_{t-1,1}$ for $t\geq 2$ we obtain the homogeneous recurrence equation
\begin{align}
    a_{t,1} - a_{t-1,1}(2-p_i) + (1-p_i)a_{t-2,1} = 0, \textrm{ for } t\geq 2
\end{align}
and boundary conditions $a_{0,1} = 0$ and $a_{1,1} = d_rp_i$.
We solve for $a_{t,1}$ using the same methods as in the proof of Lemma~\ref{lem:Cn} and obtain
\begin{align}
\label{eq:a_t,1}
    a_{t,1} = d_r\left(1-(1-p_i)^t\right), \textrm{ for }t\geq 0.
\end{align}

Next, let us consider the general case $t \geq l > 1$. On day $t-1$, there are $A_{t-1, l-1}$ nodes on level $l-1$. Since, each node on level $l-1$ has $d$ children, there are $dA_{t-1,l-1}$ nodes on level $l$ that have an infectious parent on level $l-1$. However, $A_{t-1,l}$ of them are already infected. Therefore $dA_{t-1,l-1}-A_{t-1,l}$ nodes of level $l$ may be infected on day~$t$, each with probability $p_i$, which implies
\begin{align}
    \label{eq:Arec}
    A_{t,l} &= A_{t-1,l}+\mathrm{Bin}(dA_{t-1, l-1} - A_{t-1; l},p_i), \textrm{ for }t \geq l \geq 2 .
\end{align}
Taking the expectation of both sides in equation \eqref{eq:Arec} yields
\begin{align}
    \label{eq:atl_rec}
    a_{t,l} &= a_{t-1,l} + (da_{t-1, l-1}-a_{t-1, l})p_i \nonumber \\
    &= a_{t-1,l}(1-p_i) + dp_ia_{t-1, l-1}, \textrm{ for } t\geq l \geq 2.
\end{align}
For convenience, let us introduce $\lambda = 1-p_i$ and $\mu = dp_i$, and also let 
\begin{align}
\label{eq:f(x,y)}
    f(x,y) = \sum_{t = 1}^{\infty} \sum_{l = 1}^{\infty} a_{t,l}x^ty^l =\sum_{t = 1}^{\infty} \sum_{l = 1}^{t} a_{t,l}x^ty^l
\end{align}
be the generating function for $a_{t,l}$ with $t,l \geq 1$. By multiplying \eqref{eq:atl_rec} by $x^ty^l$ and summing it over $t,l \geq 2$ we obtain
\begin{align}
    \label{eq:a1}
    \sum_{t = 2}^{\infty} \sum_{l = 2}^{t} a_{t,l}x^ty^l &= \lambda \sum_{t = 2}^{\infty} \sum_{l = 2}^{t} a_{t-1,l}x^ty^l +\mu \sum_{t = 2}^{\infty} \sum_{l = 2}^{t} a_{t-1,l-1}x^ty^l \nonumber \\
     &= \lambda x\sum_{t = 1}^{\infty} \sum_{l = 2}^{t} a_{t,l}x^ty^l +\mu xy \sum_{t = 1}^{\infty} \sum_{l = 1}^{t} a_{t,l}x^ty^l.
\end{align}
Since $a_{1,l} = 0$ for $l \geq 2$, 
\begin{align}
    \label{eq:a2}
    \sum_{t = 1}^{\infty} \sum_{l = 2}^{t} a_{t,l}x^ty^l = \sum_{t = 2}^{\infty} \sum_{l = 2}^{t} a_{t,l}x^ty^l,
\end{align}
and by inserting \eqref{eq:a2} into \eqref{eq:a1}, we obtain
\begin{align}
    \label{eq:atl}
    (1-\lambda x)\sum_{t = 1}^{\infty} \sum_{l = 2}^{t} a_{t,l}x^ty^l &=  \mu xy \sum_{t = 1}^{\infty} \sum_{l = 1}^{t} a_{t,l}x^ty^l \stackrel{\eqref{eq:f(x,y)}}{=} \mu xy f(x,y) .
\end{align}
Now, we can also decompose the sum (\ref{eq:a2}) using geometric series as
\begin{align}
    \label{eq:atl2}
    \sum_{t = 1}^{\infty} \sum_{l = 2}^{t} a_{t,l}x^ty^l &= \sum_{t = 1}^{\infty} \sum_{l = 1}^{t} a_{t,l}x^ty^l - \sum_{t = 1}^{\infty} a_{t,1}x^ty \nonumber \\
    &\stackrel{\eqref{eq:a_t,1}}{=} f(x,y) - d_r y \sum_{t=1}^{\infty} (1-\lambda^t)x^t \nonumber \\
    &= f(x,y) - d_r xy \left( \frac{1}{1-x} - \frac{\lambda}{1-\lambda x}\right).
\end{align}
By plugging \eqref{eq:atl2} into \eqref{eq:atl}, we obtain the expression
\begin{align}
    \label{eq:f_x_y_final_expression}
    f(x,y) &= d_r (1-\lambda) xy\frac{1}{1-x}\frac{1}{1-\lambda x - \mu xy}.
\end{align}
Then, we expand the fractions in (\ref{eq:f_x_y_final_expression}) into a power series and we next apply the binomial theorem, we arrive to
\begin{align}
\label{eq:fnmj}
    f(x,y) &= d_r(1-\lambda)xy \sum_{n=0}^{\infty}x^n \sum_{m=0}^{\infty}x^m(\lambda+\mu y)^m \nonumber \\
    &= d_r(1-\lambda)xy \sum_{n=0}^{\infty}x^n \sum_{m=0}^{\infty}x^m\sum_{j = 0}^m \binom{m}{j}\lambda^{m-j} (\mu y)^{j} \nonumber \\
    &= d_r(1-\lambda)\sum_{n=0}^{\infty} \sum_{m=0}^{\infty}\sum_{j = 0}^m \binom{m}{j}\lambda^{m-j} \mu ^{j} x^{1+n+m}y^{j+1}.
\end{align}

Let $t = 1+n+m$ and $l= j+1$. In order to obtain an expression for $a_{t,l}$, we must change the variables in the sums of equation \eqref{eq:fnmj} from $(n,m,k)$ to $(t,m,l)$. Changing the inner sum from variable $j$ to $l$ is simple. Changing the variables in the two outer sums is more challenging because $t,n$ and $m$ depend on each other in a nontrivial way. More precisely, since $m,n\ge 0$ we have $t\ge 1$ and also $m\le t-1$, which means that we have to set the lower limit of $t$ and the upper limit of $m$ accordingly. As for the remaining limits, variable $t$ can be arbitrary large, and $m$ can take any integer value starting from $0$ independently of $t$, which yields the expression
\begin{align}
\label{eq:ftml}
    f(x,y) &= d_r(1-\lambda)\sum_{t=1}^{\infty} \sum_{m=0}^{t-1}\sum_{l = 1}^{m+1} \binom{m}{l-1}\lambda^{m-l+1} \mu ^{l-1} x^{t}y^{l} .
\end{align}
For the values of $l$ with $l \geq m+1$, the binomial coefficient $\binom{m}{l-1}$ is $0$, which implies that we can increase the upper limit of the inner sum from $m+1$ to $t$ in equation \eqref{eq:ftml}. Then, 
\begin{align}
\label{eq:ftml2}
     f(x,y) &= d_r(1-\lambda)\sum_{t=1}^{\infty} \sum_{m=0}^{t-1}\sum_{l = 1}^{t} \binom{m}{l-1}\lambda^{m-l+1} \mu ^{l-1} x^{t}y^{l} \nonumber \\
     &= \sum_{t=1}^{\infty} \sum_{l = 1}^{t} d_r(1-\lambda) \sum_{m=0}^{t-1} \binom{m}{l-1}\lambda^{m-l+1} \mu ^{l-1} x^{t}y^{l}.
\end{align}
Finally we can read off the value of $a_{t,l}$ from equation \eqref{eq:ftml2} as 
\begin{align}
    a_{t,l} = d_r(1-\lambda) \sum_{m=0}^{t-1} \binom{m}{l-1} \mu^{l-1} \lambda^{m-l+1} = d_rp_i \sum_{m=0}^{t-1} \binom{m}{l-1} (dp_i)^{l-1} (1-p_i)^{m-l+1}.
\end{align}

\end{proof}

\subsection{Proof of Corollary \ref{corollary:at}}
\label{sec:corollary:at}

We start by restating Corollary \ref{corollary:at} for convenience.

\begin{corollary}
\label{corollary:at_app}
In the RET$(p_i, d_r, d)$, let $a_{t}$ be the expectation of \eqref{eq:def_total}, as in Definition~\ref{def:total}. For $t\geq0$, 
\begin{equation}
    a_{t} = 1 + d_r\frac{(1-p_i+dp_i)^t - 1}{d-1}. \label{lemma:total_app}
\end{equation}
\end{corollary}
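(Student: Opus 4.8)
The plan is to obtain $a_t$ by summing the level counts $a_{t,l}$ from Theorem~\ref{theorem:a} over all levels, using linearity of expectation: since $A_t = \sum_l A_{t,l}$, we have $a_t = \sum_{l=0}^{t} a_{t,l}$. The $l=0$ term contributes $a_{t,0}=1$, which will become the additive constant in \eqref{lemma:total_app}, so the real work is to evaluate $\sum_{l=1}^{t} a_{t,l}$.

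First I would substitute the closed form $a_{t,l} = d_r p_i \sum_{m=l-1}^{t-1}\binom{m}{l-1}(1-p_i)^{m-l+1}d^{l-1}p_i^{l-1}$ and interchange the order of the sums over $l$ and $m$, taking the outer sum over $m$ from $0$ to $t-1$ and the inner sum over $l$ from $1$ to $m+1$ (the binomial coefficient vanishes for $l-1>m$, so the upper limit may be taken as $m+1$ or as $t$ freely). Writing $j=l-1$, the inner sum becomes $\sum_{j=0}^{m}\binom{m}{j}(1-p_i)^{m-j}(dp_i)^{j}$, which the binomial theorem collapses to $(1-p_i+dp_i)^{m}$. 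This is the key simplification and the only step requiring any care; it mirrors exactly the binomial bookkeeping already used in the proof of Theorem~\ref{theorem:a}.

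After this, $\sum_{l=1}^{t} a_{t,l} = d_r p_i \sum_{m=0}^{t-1}(1-p_i+dp_i)^{m}$ is a finite geometric series with ratio $r=1-p_i+dp_i$. Summing it gives $d_r p_i \frac{r^{t}-1}{r-1}$, and since $r-1 = p_i(d-1)$, the factor $p_i$ cancels and I obtain $d_r\frac{(1-p_i+dp_i)^{t}-1}{d-1}$; adding back the $a_{t,0}=1$ term yields \eqref{lemma:total_app}. The geometric-sum step silently assumes $d\neq 1$ (so that $r\neq 1$ and the denominator $d-1$ is nonzero), which holds in our setting since $d\geq 2$ is the average degree of the RB tree.

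There is essentially no serious obstacle here: the computation is routine once the summation order is swapped. As an alternative that avoids re-deriving anything, I could instead read the answer off the bivariate generating function $f(x,y)$ already computed in the proof of Theorem~\ref{theorem:a}. Setting $y=1$ gives $f(x,1)=d_r p_i\, x\,\frac{1}{(1-x)(1-rx)}$, and a partial-fraction expansion shows that the coefficient of $x^{t}$ equals $d_r p_i\frac{r^{t}-1}{r-1}$, which is exactly $a_t-1$ (the generating function omits the $l=0$ contribution), recovering the same result.
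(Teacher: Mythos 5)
Your proposal is correct and follows essentially the same route as the paper's proof: summing the closed form of $a_{t,l}$ from Theorem~\ref{theorem:a} over levels, swapping the order of the sums over $l$ and $m$, collapsing the inner sum with the binomial theorem, and finishing with the geometric series (where the factor $p_i$ cancels against $r-1=p_i(d-1)$). Your added remarks on the $d\neq 1$ assumption and the alternative derivation via $f(x,1)$ are sensible but do not change the substance of the argument.
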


\begin{proof}
By using linearity of expectation, equation \eqref{eq:def_total} and Theorem \ref{theorem:a} we obtain:
\begin{align}
    \label{eq3:1}
    a_{t} &= \sum_{l = 0}^{+\infty} a_{t,l} \nonumber \\
    &= 1 + \sum_{l = 1}^{+\infty} a_{t,l} \nonumber \\
    &= 1 + d_rp_i\sum_{l = 1}^{t} \sum_{m = l-1}^{t-1} \binom{m}{l-1}(1-p_i)^{m-l+1}d^{l-1}p_i^{l-1} 
\end{align}
Before we use binomial theorem, we need to swap the sums. Boundaries from \eqref{eq3:1} are equivalent to $t-1 \geq m \geq l-1 \geq 0$, so we can rewrite this as 2 conditions: $m+1 \geq l \geq 1$ and $t \geq m \geq 0$. 
\begin{align}
    a_{t,l}
    &= 1 + d_rp_i\sum_{m = 0}^{t-1}\sum_{l = 1}^{m+1} \binom{m}{l-1}(1-p_i)^{m-l+1}d^{l-1}p_i^{l-1} \nonumber \\
    &= 1 + d_rp_i\sum_{m = 0}^{t-1}\sum_{l = 0}^{m} \binom{m}{l}(1-p_i)^{m-l}d^{l}p_i^{l} 
\end{align} 
Finally, by applying the binomial theorem and summing the geometric series, we obtain the desired equation:
\begin{align}
    a_{t,l}
    &= 1 + d_rp_i\sum_{m = 0}^{t-1} (1-p_i+dp_i)^m \nonumber \\ 
    &= 1 + d_r\frac{(1-p_i+dp_i)^t - 1}{d-1}.
\end{align}
\end{proof}

\subsection{Proof or Lemma \ref{lem:DET}}
\label{sec:lem:DET}
We restate Lemma \ref{lem:DET} here for convenience.

\begin{lemma}
\label{lem:DET_app}
Let us consider the stopped DET model with parameters $(c_{t,l}), p_a, p_h$, and let $h$ denote the first hospitalized node. Then
\begin{align}
\label{eq:DET_lemma_app}
    \P(d(s,h) = l) = \sum_{t=0}^{+\infty} \frac{c_{t,l}-c_{t-1,l}}{c_t-c_{t-1}}  (1-(1-p_a)p_h)^{c_{t-1}}\left(1-(1-(1-p_a)p_h)^{c_t-c_{t-1}}\right).
\end{align}
\end{lemma}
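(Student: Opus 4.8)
The plan is to decompose the event $\{d(s,h)=l\}$ according to the day $t$ on which the first hospitalization occurs, and then on each such day to exploit the exchangeability of the newly added nodes. Write $q=(1-p_a)p_h$ for the per-node hospitalization probability. In the stopped DET the nodes are revealed in a global order: day by day, and within day $t$ in a uniformly random order among the $c_t-c_{t-1}$ nodes added that day (this count is strictly positive by the assumption $c_t>c_{t-1}$, so no division by zero occurs), and $h$ is the first node in this global order whose independent $\mathrm{Bernoulli}(q)$ flag equals $1$. Thus $\{h \text{ appears on day } t \text{ at level } l\}$ factors as the event (i) that none of the $c_{t-1}$ nodes revealed before day $t$ is hospitalized, intersected with the joint day-$t$ event that (ii) at least one day-$t$ node is hospitalized and (iii) the first hospitalized day-$t$ node is at level $l$; since the flags are independent across nodes, event (i) is independent of the day-$t$ event and contributes the factor $(1-q)^{c_{t-1}}$.

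The crux is the joint analysis of (ii) and (iii). Setting $N=c_t-c_{t-1}$ and $N_l=c_{t,l}-c_{t-1,l}$, I would prove that $\P(\text{at least one of the } N \text{ day-}t \text{ nodes is hospitalized and the first such node is at level } l)=\frac{N_l}{N}\bigl(1-(1-q)^N\bigr)$. The cleanest route is a symmetry argument: the $N$ day-$t$ nodes are exchangeable, since their order is a uniform permutation and their flags are i.i.d.; hence each node is equally likely to be the first hospitalized one. As the $N$ events ``node $j$ is the first hospitalized'' are disjoint with union $\{\text{at least one hospitalization}\}$, which has probability $1-(1-q)^N$, each node is the first hospitalized with probability $\frac{1}{N}\bigl(1-(1-q)^N\bigr)$. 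Summing over the $N_l$ nodes at level $l$ gives the claim. Equivalently, one may condition on the nonempty set of hospitalized day-$t$ nodes and note that the uniform permutation, independent of which nodes are flagged, makes the first flagged node uniform over that set; the exchangeability phrasing avoids summing over set compositions.

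Finally I would multiply the independent contributions and sum over $t\ge 0$:
\begin{align*}
\P(d(s,h)=l)=\sum_{t=0}^{\infty}(1-q)^{c_{t-1}}\,\frac{c_{t,l}-c_{t-1,l}}{c_t-c_{t-1}}\bigl(1-(1-q)^{c_t-c_{t-1}}\bigr),
\end{align*}
which, with $q=(1-p_a)p_h$, is exactly \eqref{eq:DET_lemma}. As a sanity check, using $c_{-1}=0$, $c_{0,0}=1$ and $c_0=1$, the $t=0$ term equals $q$ when $l=0$ and $0$ otherwise, correctly capturing the case where the source itself is the first hospitalized node. I expect the main obstacle to be the rigorous justification of the exchangeability identity in (ii)--(iii), in particular making precise that the within-day uniform permutation is independent of the i.i.d.\ flags so that the level of the first hospitalized node factorizes as $N_l/N$ times the probability of at least one hospitalization; the remaining steps are routine bookkeeping.
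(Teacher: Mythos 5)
Your proof is correct and takes essentially the same route as the paper's: both decompose over the day $t$ of the first hospitalization and factor the probability as $(1-(1-p_a)p_h)^{c_{t-1}}$ times the day-$t$ contribution $\frac{c_{t,l}-c_{t-1,l}}{c_t-c_{t-1}}\bigl(1-(1-(1-p_a)p_h)^{c_t-c_{t-1}}\bigr)$. The only difference is that you justify, via exchangeability of the uniformly permuted day-$t$ nodes with i.i.d.\ hospitalization flags, the step the paper asserts directly from the model definition — that conditioned on the first hospitalization occurring on day $t$, the first hospitalized node is uniformly distributed among the $c_t-c_{t-1}$ nodes added that day.
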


\begin{proof}
Recall that a node added at day $t$ is uniformly distributed among the $c_{t}-c_{t-1}>0$ nodes added that day, and that the number of nodes added to level $l$ is $c_{t,l}-c_{t-1,l}$ on day $t$. If we condition on the time of the first hospitalized case, denoted by $TI_h$, then
\begin{align}
\P(d(s,h) = l) &= \sum_{t=0}^{+\infty} \P(d(s,h)=l | TI_h = t) \P(TI_h = t)  \nonumber \\
& = \sum_{t=0}^{+\infty} \frac{c_{t,l}-c_{t-1,l}}{c_t-c_{t-1}}  \P(\textrm{node is not hosp})^{c_{t-1}} (1-\P(\textrm{node is not hosp })^{c_t-c_{t-1}})
\nonumber \\
& = \sum_{t=0}^{+\infty} \frac{c_{t,l}-c_{t-1,l}}{c_t-c_{t-1}}  (1-(1-p_a)p_h)^{c_{t-1}}\left(1-(1-(1-p_a)p_h)^{c_t-c_{t-1}}\right).
\label{eq:DET_ctl}
\end{align}
\end{proof}


\section{Dynamic Message Passing for the DDE model}
\label{sec:DMP_all}

In this section, we explain how we derived and implemented the DMP equations for the DDE+HNM model. We start by reviewing the previous work on the DMP equations for the SIR model in Appendix \ref{sec:DMP_SIR}, and then we proceed to our derivations in Appendix \ref{sec:DMP_DDE}. In Appendix \ref{sec:DMP_feasible}, we explain how we find candidate (node,time) pairs for the DMP equations, and in Appendix \ref{sec:DMP_final} we conclude by combining Appendices \ref{sec:DMP_DDE} and \ref{sec:DMP_feasible} into a source-detection algorithm.

\subsection{DMP Equations for the SIR Model}
\label{sec:DMP_SIR}

The DMP equations were first derived by \cite{lokhov2014inferring} for the SIR model in the context of source detection. Their goal is to compute the marginal probabilities that node $i$ is in a given state at time $t$ (denoted by $P_S^{i}(t), P_I^{i}(t)$ and $P_R^{i}(t)$ for the susceptible, infected and recovered states, respectively), given initial conditions $P_S^{i}(t_0), P_I^{i}(t_0)$ and $P_R^{i}(t_0)$ at some initial time $t_0$. To solve this problem in tree networks, we may consider a dynamic programming approach, where we delete a node $i$, we compute the marginal probabilities of $P_S^{j}(t-1)$ for all neighbors $j$ of $i$ in the remaining subtrees, and use this information to compute $P_S^{i}(t)$ (as the marginals are independent in each of the subtrees conditioned on the state of $i$). The DMP equations make the dynamic programming intuition explicit. Originally, the DMP equations were developed for static networks, but since the generalization to time-varying networks is straightforward, and has already been foreshadowed in a similar heuristic algorithm \cite{jiang2016rumor}, we include it in this preliminary section. For time-varying networks, we define $N_i(t)$ as the set of neighbors of node $i$ in the time-window $[t,t+1)$.

To formalize the dynamic programming approach, \cite{lokhov2014inferring} introduces some new notation. Let $\lambda$ be the probability that an infectious node infects a  susceptible neighbor, and let $\mu$ be the probability that an infectious node recovers. Let $D_{i}$ be the auxiliary dynamics, where node $i$ receives infection signals, but ignores them, and thus remains in the $S$ state at all times. Let $P_{S}^{j \rightarrow i}(t)$ be the probability that node $j$ is in the state $S$ at time $t$ in the dynamics $D_{i}$, and let $\theta^{k \rightarrow i}(t)$ be the probability that the infection signal has not been passed from node $k$ to node $i$ up to time $t$ in the dynamics $D_{i}$. Finally, let $\phi^{k \rightarrow i}(t)$ be the probability that the infection signal has not been passed from node $k$ to node $i$ up to time $t$, and that node $k$ is in the state $I$ at time $t$, in the dynamics $D_{i}$. With these definitions, the dynamic programming approach is formalized by the following equations for $t\ge t_0$:

\begin{align}
 P_S^{i \rightarrow j}(t+1)&=P_S^{i}(t_0)\prod_{k\in N_i(t) \backslash j}\theta^{k \rightarrow i}(t+1), \label{eq:SIRequations:Ps}
\\
\theta^{k \rightarrow i}(t+1)-\theta^{k \rightarrow i}(t) &=
-\lambda\phi^{k \rightarrow i}(t), \label{eq:SIRequations:theta}
\\
\phi^{k \rightarrow i}(t)&=(1-\lambda)(1-\mu)\phi^{k \rightarrow i}(t-1) + \left(P_S^{k \rightarrow i}(t-1)-P_S^{k \rightarrow i}(t)\right). \label{eq:SIRequations:phi}
\end{align}

The marginal probabilities that node $i$ is in a given state at time $t$ are then given by
\begin{align}
& P_S^{i }(t+1)=P_S^{i}(t_0)\prod_{k\in N_i(t)}\theta^{k \rightarrow i}(t+1)\, ,\label{eq:SIRequations:S}
\\
& P_R^{i}(t+1)=P_R^{i}(t)+\mu P_{I}^{i}(t)\, ,\label{eq:SIRequations:R}
\\
& P_I^{i}(t+1)=1-P_S^{i}(t+1)-P_R^{i}(t+1)\, .\label{eq:SIRequations:I}
\end{align}

These equations are only exact on trees, but they can also be applied to networks with cycles as a heuristic approach. The heuristic gives good approximations to the true marginals if the network is at least locally tree-like \cite{karrer2010message}.

\subsection{DMP Equations for the DDE+HNM Model}
\label{sec:DMP_DDE}

There are several differences between the SIR model on locally tree-like networks and the DDE+HNM model (see Figure \ref{fig:mobility_models} (a)). First, the DDE model has additional compartments (exposed nodes, asymptomatic nodes), which motivates the introduction of several new variables. Let $\lambda_{(a)}$ (resp., $\lambda_{(s)}$) be the probability that an asymptomatic (resp., symptomatic) node infects a susceptible node. Let $\phi^{k \rightarrow i}(t)^{(a)}$ (resp., $\phi^{k \rightarrow i}(t)^{(s)}$) be the probability that the infection signal has not been passed from node $k$ to node $i$ up to time $t$, and that node $k$ is asymptomatic (resp., symptomatic) infectious at time $t$, in the dynamics $D_{i}$.

The second important difference is that in the DDE model, the transition times between different compartments are deterministic instead of following a geometric distribution as in the standard SIR model. While deterministic transition times sound simpler at first, it turns out that they make the DMP equations more complex, because the Markovian property that each marginal probability depends only on the previous timestep is lost if the transition times are larger than $1$. Recall that the times for the transitions $E \rightarrow I$ and $I \rightarrow R$ (with their default values) are $T_E=3$ and $T_I=14$.

Let us incorporate these two differences into equations \eqref{eq:SIRequations:Ps}--\eqref{eq:SIRequations:phi} to derive the DMP equations for the DDE model. Equation \eqref{eq:DDEequations:Ps} is essentially a copy of \eqref{eq:SIRequations:Ps}. Equation \eqref{eq:DDEequations:theta} follows equation \eqref{eq:SIRequations:theta}, but we incorporate the two different variants of infected (asymptomatic and symptomatic) patients with their respective infection probabilities $\lambda_{(a)}$ and $\lambda_{(s)}$. Equation \eqref{eq:DDEequations:Pr} is a new equation, which is necessary because recovery times are no longer geometric random variables; instead we need to check the probabilities of infection $T_E+T_I$ timesteps earlier than the current time $t$. Finally, equation \eqref{eq:DDEequations:phi_a} (resp., \eqref{eq:DDEequations:phi_s}) is the asymptomatic (resp., symptomatic) version of equation \eqref{eq:SIRequations:phi}, while also incorporating the deterministic time for the transition $E \rightarrow I$. For $t\ge t_0$, this yields equations

\begin{align}
 P_S^{i \rightarrow j}(t+1)&=P_S^{i}(t_0)\prod_{k\in N_i(t) \backslash j}\theta^{k \rightarrow i}(t+1)=P_S^{i}(t_0)\frac{P_S^{i}(t+1)}{\theta^{j \rightarrow i}(t+1)}, \label{eq:DDEequations:Ps}
\\
\theta^{k \rightarrow i}(t+1)-\theta^{k \rightarrow i}(t) &=
-\lambda_{(a)} \phi^{k \rightarrow i}_{(a)}(t)-\lambda_{(s)}\phi^{k \rightarrow i}_{(s)}(t),
\label{eq:DDEequations:theta}
\\
P_R^{k \rightarrow i}(t) &=P_S^{k \rightarrow i}(t-T_E-T_I-1)-P_S^{k \rightarrow i}(t-T_E-T_I) \label{eq:DDEequations:Pr}
\\
\phi^{k \rightarrow i}_{(a)}(t)&=(1-\lambda_{(a)})(1-P_R^{k \rightarrow i}(t)) \phi^{k \rightarrow i}_{(a)}(t-1) \nonumber\\
& \qquad \qquad + p_a[P_S^{k \rightarrow i}(t-T_E-1)-P_S^{k \rightarrow i}(t-T_E)]. \label{eq:DDEequations:phi_a}
\\
\phi^{k \rightarrow i}_{(s)}(t)&=(1-\lambda_{(s)})(1-P_R^{k \rightarrow i}(t)) \phi^{k \rightarrow i}_{(s)}(t-1) \nonumber\\
& \qquad \qquad + (1-p_a)[P_S^{k \rightarrow i}(t-T_E-1)-P_S^{k \rightarrow i}(t-T_E)]. \label{eq:DDEequations:phi_s}
\end{align}

We note that for early values of $t$, equations \eqref{eq:DDEequations:Pr}--\eqref{eq:DDEequations:phi_s} depend on $P_S^{k \rightarrow i}$ before $t_0$, which we initialize to be 1 (all nodes are susceptible before the first node develops the infection). The marginal probability that node $i$ is susceptible at time $t$ is still computed by equation \eqref{eq:SIRequations:S} as before. Equations \eqref{eq:SIRequations:R}--\eqref{eq:SIRequations:I} do not apply anymore; we explain it in Appendix \ref{sec:DMP_final} how to take into account observations for nodes in the infectious compartments.

The third difference between the the SIR model on locally tree-like networks and the DDE+HNM model is that the HNM model contains many short cycles inside the households. Short cycles can cause unwanted feedback loops in the DMP equations where, loosely speaking, nodes are treated as if they could reinfect themselves. We solve this issue by modifying the underlying graph to be locally tree-like (only for the computation of the DMP equations). Specifically, we introduce a new central household-node for each household, and we replace the cliques inside the households by a star graph centered at this new household-node node. Introducing such a central household-node does of course alter epidemic process, in particular it makes household infections less independent and slower (all household infections need to pass through an extra node). To mitigate this issue, we assume that central household-nodes have $T_E=1$ and that they are infected with probability 1 by any node in the same household. We tested the validity of the resulting DMP equations against simulations of the epidemic progressions and we found the results to be quite accurate, in particular, more accurate than the version without the introduction of these central household-nodes.

Note that we derived the DMP equations for the DDE+HNM model, however, since (i) the compartments are the same, (ii) the equations support temporal networks, and (iii) we have separate infection probabilities $\lambda_{(a)}$ and $\lambda_{(s)}$ for asymptomatic and symptomatic nodes, our equations can also be applied to the DCS+TU model after a discretizing (rounding) the time observations.

Finally, we touch upon the computational complexity of computing the DMP equations. In principle, we need to update $O(dN)$ equations (for each edge) over $t_{\mathrm{max}}$ timesteps, where $t_{\mathrm{max}}$ is the maximum time during which the marginals can still change, which can be as large as $O(N)$. However, since we are only interested in computing the likelihood of the 5 earliest observations, $t_{\mathrm{max}}$ is typically quite low. Moreover, since we assume to be in an early stage of the epidemic, most of the equations remain unchanged. For better computational scalability, we only compute $P_S^{i}(t)$ and $\theta^{k \rightarrow i}(t)$ for nodes $k,i$ that have $P_I^{k \rightarrow i}(t)>0.01$, i.e., we only update nodes that are at least somewhat likely to have received the infection. Otherwise, we set $P_I^{k \rightarrow i}(t)=P_I^{k \rightarrow i}(t-1)$, $\theta^{k \rightarrow i}(t)=\theta^{k \rightarrow i}(t-1)$, and in the implementation we can perform these assignments implicitly using appropriate data structures. With these adjustments, the time-complexity of the algorithm becomes independent of $N$, but remains dependent on the network parameters, the epidemic parameters and the number of sensors in a non-trivial way.

\subsection{Feasible Source-time Pairs for Source Detection}

\label{sec:DMP_feasible}

In this section we explain how we implemented the feasible source identification algorithm, which was suggested as a preprocessing step for a method very similar to the DMP equations by \cite{jiang2016rumor}. Let us define the directed graph $G_2$ on (node,infecton\_time) pairs (we use ``nodes'' for the nodes of the original graph $G$ and ``pairs'' for the nodes of $G_2$), and draw an edge between two pairs $(v_1, t_1) \rightarrow (v_2, t_2)$ if $v_1$ and $v_2$ are in contact at $t_2$, and $t_2$ is in the interval $[t_1+T_E, t_1+T_E+T_I]$. Observe that in the DDE model there is an edge $(v_1, t_1) \rightarrow (v_2, t_2)$ if and only if $v_1$ becoming infected at time $t_1$ can infect $v_2$ at time $t_2$. The definition of $G_2$ is applicable to the DCS model as well after discretization (rounding), however, since the infection times are not deterministic anymore, not all possible infections $(v_1, t_1) \rightarrow (v_2, t_2)$ have a corresponding edge in~$G_2$.

Then, we perform a breadth-first search backwards on the directed edges of $G_2$, starting from each pair $(v_i, t_i-T_E-T_P)$, where $v_i$ is a symptomatic sensor node, and $t_i$ is the symptom onset time of $v_i$ (for the DCS model, we start from integer times in the $t_i-T_E-T_P \pm (\sigma_E+\sigma_P)$ interval to account for the randomness of the transition times). To limit the time complexity of the algorithm, we only consider the $k_1$ earliest observations, which means that we start $k_1$ breadth-first searches. 
With this construction, each pair $(v,t)$ discovered by a breadth-first search started from $(v_i, t_i-T_E-T_P)$ could have caused the infection in $v_i$; we say that $(v,t)$ is an explanation for observation $i$. We perform the breath-first searches until we find $k_2$ pairs that explain all of the $k_1$ earliest observations. See the pseudocode in Algorithm \ref{alg:feasible}.

\begin{claim}
In the DDE model, Algorithm \ref{alg:feasible} with $\sigma_E=\sigma_P=0$ finds the $k_2$ feasible explanations with the latest starting time of the $k_1$ earliest symptomatic nodes.
\end{claim}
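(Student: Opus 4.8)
The plan is to split the statement into a \emph{correctness} part (the pairs returned are feasible explanations) and an \emph{optimality} part (they are the feasible explanations with the latest starting times), and to reduce both to reachability in the directed graph $G_2$. With $\sigma_E=\sigma_P=0$, a symptomatic sensor $v_i$ with onset time $t_i$ was necessarily infected at $s_i := t_i-T_E-T_P$, so each backward search is launched from the pair $(v_i,s_i)$.

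First I would make ``explanation'' precise. Using the observation already recorded in the text that in the DDE model $(v_1,t_1)\to(v_2,t_2)$ is an edge of $G_2$ if and only if an agent $v_1$ infected at $t_1$ can infect $v_2$ at $t_2$, I would show by induction on path length that a directed path $(v,t)\to\cdots\to(v_i,s_i)$ exists in $G_2$ if and only if an agent $v$ infected at time $t$ can, through a chain of admissible infections, cause $v_i$ to be infected at $s_i$ and hence to show onset at $t_i$. Thus $(v,t)$ explains observation $i$ exactly when $(v_i,s_i)$ is backward-reachable from $(v,t)$, and $(v,t)$ is a \emph{feasible} explanation of the $k_1$ earliest observations exactly when it is discovered by all $k_1$ backward searches.

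The heart of the argument is the ordering. Every edge of $G_2$ strictly decreases the time coordinate (by at least $T_E\ge 1$, since $t_2\ge t_1+T_E$), so $G_2$ restricted to the relevant pairs is a DAG layered by time, and backward reachability from any source can be resolved by a single sweep that processes pairs in \emph{decreasing} order of their infection-time coordinate: when the sweep reaches level $t$, all levels $t'>t$ are already resolved, and a pair $(v,t)$ is backward-reachable from $(v_i,s_i)$ iff it has an edge into some already-reached pair at a strictly larger time. I would read off from Algorithm~\ref{alg:feasible} that the searches expand pairs in exactly this decreasing-time order, which is the natural reading of a backward breadth-first search on a time-layered graph. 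Consequently the pairs that become common to all $k_1$ searches are revealed in non-increasing order of their starting time $t$, so stopping as soon as $k_2$ common pairs have appeared returns precisely the $k_2$ feasible explanations of largest starting time, with ties within a single time level broken arbitrarily.

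I expect the ordering step to be the main obstacle, for two reasons. First, one must rule out the failure mode of a plain hop-ordered BFS: because $T_E<T_I$, a pair reachable in two backward hops can have a \emph{later} time than one reachable in a single hop, so the claim is false for a FIFO hop-based search and genuinely relies on the search being organized by the time coordinate; I would therefore verify this property against the pseudocode rather than invoke ``breadth-first search'' generically. Second, I must confirm that no late-time feasible explanation is skipped before termination: since reachability at level $t$ is fully determined once all levels $t'>t$ are processed, and the $k_1$ searches advance in lockstep through decreasing time levels, every feasible explanation whose starting time is at least that of the $k_2$-th returned pair is detected, which closes the optimality claim.
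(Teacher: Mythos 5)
Your proposal is correct, and its core is the same reduction the paper uses: a pair $(v,t)$ is a feasible explanation if and only if there is a directed path in $G_2$ from $(v,t)$ to every observation pair $(v_i, t_i - T_E - T_P)$, i.e., feasibility equals backward reachability from all $k_1$ start pairs. The difference is in what comes after. The paper's proof stops at that characterization and asserts in one line that ``the breadth-first search algorithm finds all of the closest feasible sources in time,'' whereas you actually prove the optimality half: every edge of $G_2$ advances the time coordinate by at least $T_E$, so the relevant subgraph is a DAG layered by time, backward reachability can be resolved by a sweep over decreasing time levels, and the pairs common to all $k_1$ searches are therefore finalized in non-increasing order of starting time. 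Your caveat about a generic hop-ordered BFS is well taken and is exactly what the paper's wording obscures: since $T_I > T_E$, a pair two hops back can have a later time than a pair one hop back, so a FIFO BFS stopped at the first $k_2$ common pairs would not return the latest ones. The pseudocode of Algorithm~\ref{alg:feasible} confirms your reading: its main loop decrements a time level $t$ and expands the lists $l[t]$ indexed by time rather than by hop count, and its stopping rule (terminate once $t-T_E$ is at most the time of the $k_2$-th entry of the time-sorted done-list) guarantees that no feasible pair with a later starting time can still appear, which is precisely the termination point you flagged. In short, your route is the paper's route, but it supplies the ordering lemma hiding behind the paper's ``Therefore,'' and correctly identifies that the proof's ``breadth-first search'' must be read as a decreasing-time sweep for the claim to hold.
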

\begin{proof}
By construction, a source node $v$ that becomes infectious at time $t$ can cause an observation $(v_i, t_i)$ if and only if there is a directed path from $(v,t)$ to $(v_i, t_i-T_E-T_P)$. Therefore, the breadth-first search algorithm finds all of the closest feasible sources in time.
\end{proof}

\begin{algorithm}[h]
    \SetKwFunction{isOddNumber}{isOddNumber}

    \KwIn{
    \begin{itemize}
    \item The mean exposed time $T_E$ the mean pre-infectious time $T_P$, the mean infectious time $T_I$,\\ the std of the exposed time $\sigma_E$ and the std of the pre-infectious time $\sigma_P$
    \item  $F(v)_{min}$ and $F(v)_{max}$ returns the minimum and maximum times when $v$ could have been exposed based on all of its (possibly asymptomatic or negative) test results
    \item  $S(v)$ returns the time of symptom onset for a node $v$ tested positive symtomatic.
    \item $N(v, [ t_{min}, t_{max} ])$ returns the set of neighbors of node $v$ in the interval $[t_{min}, t_{max}]$
    \item  A lower estimate of the time the source became infectious $t_{min}$
    \item Integers $k_1,k_2$
    \end{itemize}}
    \KwOut{A list of at most $k_2$ tuples of node and time pairs that can explain the first $k_1$ symptomatic nodes}
    $l \leftarrow \{\}$; \tcp*[f]{if the list $l[t]$ contains the tuple $(v,w)$, then the infection started at $w$ at time $t$ can explain $v$}\\
    $D \leftarrow \{\}$; \tcp*[f]{if the list $D[w,t]$ contains the node $v$, then the infection started at $w$ at time $t$ can explain $v$}\\
    $doneList \leftarrow []$\;
    \For{$v \in SortIncreasingByValues(S)[0:k_1]$} {
        $t'_{min} \leftarrow S(v)-(T_E+T_P)-(\sigma_E+\sigma_P)$\;
        $t'_{max} \leftarrow S(v)-(T_E+T_P)+(\sigma_E+\sigma_P)$\;
        \For{$t' \leftarrow t'_{min}$  to $t'_{max}$} {
            $Append((v,v), l[t'])$\;
             $Append(v, D[v,t'])$\;
            \uIf{$Length(D[v,t'])=k_1$} {
                $Append((v,t'),doneList)$
            }        
        }
     }
     $t \leftarrow SortIncreasingByValues(S)[k_1-1]$\;
     $stopCondition \leftarrow False$ \;
     \While{ not $stopCondition$ and  $t>t_{min}$} {
         \For {$v,w \in l[t]$} {
            \For{$u \in N(w, [t,t-1])$} {
                 $t'_{min} \leftarrow \max(F(u)_{min}, t-T_E-T_I)$\;
                 $t'_{max} \leftarrow \min(F(u)_{max}, t-T_E)$\;
                \For{$t' \leftarrow t'_{max}$  to $t'_{min}$} {
                    $Append((v,u), l[t'])$\;
                   $ Append(v,D[(u,t')])$\;
                   \uIf{$Length(D[u,t'])=k_1$} {
                     $Append((u,t'),doneList)$
                   }
                }
            }
         }
         $doneList \leftarrow SortBySecondElement(doneList)$\;
         \uIf{$Length(doneList\ge k2$) and $t-T_E \le doneList[k2][1]$} {
             $stopCondition \leftarrow True$\;
         }
         \Else{
             $t \leftarrow t-1$\;
         }
    }
    \KwRet{$doneList$}
    \caption{Feasible source identification (reverse dissemination \cite{jiang2016rumor})}
    \label{alg:feasible}
\end{algorithm}

\subsection{Source Detection via Feasible Source Identification and DMP}

\label{sec:DMP_final}

In this section we explain how to combine Algorithm~\ref{alg:feasible} with the DMP equations derived in Appendix~\ref{sec:DMP_DDE}. See the pseudocode in Algorithm~\ref{alg:Sdd}.

We start by computing the DMP equations \eqref{eq:DDEequations:Ps}-\eqref{eq:DDEequations:phi_s} and \eqref{eq:SIRequations:Ps} for the $k_2$ tuples of node and time pairs that can explain the first $k_1$ symptomatic observations returned by Algorithm \ref{alg:feasible}. Next, our goal is to use these DMP equations to compute the likelihood of each of the $k_2$ tuples using the $k_1$ observations. Similarly to \cite{lokhov2014inferring}, we make the assumption that the first $k_1$ observations are independent, and we can compute the likelihood by multiplying their respective marginals together. For symptomatic observed nodes $v$, we know the time of symptom onset, which we denote by $S(v)$. Then, the marginal probability of $v$ developing symptoms exactly at time $t$ can be computed by taking the difference of $P_S^{v}(S(v)-T_P-T_E-1)$ and $P_S^{v}(S(v)-T_P-T_E)$ and multiplying the difference by $(1-p_a)$. In Algorithm \ref{alg:Sdd} we drop the multiplicative factor $(1-p_a)$ because it is present for all of the tuples, and it does not change the final order of their scores. For asymptomatic (resp., negative) observations, we only know that at the time of testing, denoted by $A(v)$ (resp., $NE(v)$), at least a time interval of length~$T_E$ has passed (resp., $T_E$ has not passed) since the time of infection. Therefore, dropping the $p_a$ factor similarly to the symptomatic case, we compute the marginal of asymptomatic observations as $1-P_S^{v}(A(v)-T_E)$, and we compute the marginal of negative observations as $P_S^{v}(NE(v)-T_E)$. Finally, the contributions of the observations are multiplied together for each of the $k_2$ tuples returned by Algorithm \ref{alg:feasible}, and the scores approximating the likelihoods are returned.

%

\begin{algorithm}
    \SetKwFunction{isOddNumber}{isOddNumber}

    \KwIn{
    \begin{itemize}
    \item The mean exposed time $T_E$, the mean pre-infectious time $T_P$, the mean infectious time $T_I$
    \item  $S(v)$ returns the time of symptom onset for a node $v$ tested positive symtomatic.
    \item  $A(v)$ and $NE(v)$ return the time of asymptomatic and negative test results, respectively
    \end{itemize}}
    \KwOut{A dictionary $L$ of $k_2$ elements, which contains a score for each $(v,t)$ pair that explains the first $k_1$ observations. Higher scores signify higher confidence of being the source.}
    $L \leftarrow \{\}$\;
    $doneList \leftarrow \mathrm{Algorihtm \  \ref{alg:feasible}}(k_1,k_2)$\;

    \For{$v,t_0 \in doneList$} {
        $P_S  \leftarrow$ eq. \eqref{eq:SIRequations:Ps} based on DMP eq. \eqref{eq:DDEequations:Ps}-\eqref{eq:DDEequations:phi_s} with $P_S^v(t_0)=0$, and $P_S^w(t_0)=1$ for all $w \ne v$\;
        $L[v,t_0] \leftarrow 1$\;
        \For{$w \in S$} {
            $L[v,t_0] \leftarrow L[v,t_0]\cdot (P_S^{v}(S(v)-T_P-T_E-1)-P_S^{v}(S(v)-T_P-T_E))$\;
        }
      \uIf{{$w \in A$} }{
                 $L[v,t_0] \leftarrow L[v,t_0]\cdot (1-P_S^{v}(A(v)-T_E))$\;
        }
        \For{$w\in NE$} {
                 $L[v,t_0] \leftarrow L[v,t_0]\cdot P_S^{v}(NE(v)-T_E)$\;
        }
    }
    \KwRet{$L$}
    \caption{Source detection via DMP}
    \label{alg:Sdd}
\end{algorithm} 

\end{document}